\newcommand{\ifexpr}[3]{\mathbf{if}\ #1\ \mathbf{then}\ #2\ \mathbf{else}\ #3} %
\newcommand{\ifstmt}[3]{\mathtt{if}\ #1\ \mathtt{then}\ #2\ \mathtt{else}\ #3} %
\newcommand{\while}[2]{\mathtt{while}\ (#1)\ \{\ #2\ \}}
\newcommand{\assign}[2]{#1 \coloneqq #2}
\newcommand{\pbranch}[3]{#1 \ [#2]\ #3}
\newcommand{\skipstmt}{\mathtt{skip}}
\newcommand{\tickstmt}{\mathtt{tick}}
\newcommand{\scoring}[1]{\mathtt{score}(#1)}
\newcommand{\observe}[1]{\mathtt{observe}(#1)}
\newcommand{\interpret}[1]{\llbracket #1 \rrbracket}
\newcommand{\restrictby}[2]{#1_{\le #2}}
\newcommand{\probzero}{\mathds{O}}
\newcommand{\probone}{\mathds{1}}
\newcommand{\Diff}{\mathsf{D}}
\newcommand{\nonterm}[1]{\Diff{#1}}
\newcommand{\argmin}{\mathop{\mathrm{arg~min}}}
\newcommand{\OurTool}{%
\if@ACM@anonymous
\textsc{Anon}%
\else
$\textsc{MuVal}^{\textsc{QFL}}$%
\fi}
\newcommand{\OurToolUrl}{%
\if@ACM@anonymous
Anonymised for review.
\else
\url{https://github.com/hiroshi-unno/coar}
\fi}
\newif\iflong
\newcommand{\checkrefnumber}[2]{%
	\IfRefUndefinedExpandable{#2}{}{%
		\ifthenelse{\equal{#1}{\getrefnumber{#2}}}{}{%
		\PackageError{macros.apxproof}{Reference number mismatch: expected #1 but got \getrefnumber{#2}}{}}}}
\newcommand{\referappendix}[3]{%
		\checkrefnumber{#2}{#3}%
		#1~\ref{#3}}
\newcommand{\referappendix}[3]{\cite[#1~#2]{arxiv}}
\newif\ifdraft
\newcommand{\todo}[1]{\textcolor{red}{[ToDo: #1]}}
\newcommand{\tk}[1]{\textcolor{blue}{[{#1}--Takeshi]}}
\newcommand{\todo}[1]{}
\newcommand{\tk}[1]{}
\theoremstyle{acmdefinition}
\newtheorem{notation}[theorem]{Notation}
\newtheorem{remark}[theorem]{Remark}}
\begin{document}
\title{Supermartingales for Unique Fixed Points: A Unified Approach to Lower Bound Verification}
\author{Satoshi Kura}
\email{satoshikura@acm.org}
\orcid{0000-0002-3954-8255}
\affiliation{%
  \institution{Waseda University}
  \city{Tokyo}
  \country{Japan}
}

\author{Hiroshi Unno}
\email{hiroshi.unno@acm.org}
\orcid{0000-0002-4225-8195}
\affiliation{%
  \institution{Tohoku University}
  \city{Sendai}
  \country{Japan}
}

\author{Takeshi Tsukada}
\email{t.tsukada@acm.org}
\orcid{0000-0002-2824-8708}
\affiliation{%
  \institution{Chiba University}
  \city{Chiba}
  \country{Japan}
}

\begin{abstract}
Many quantitative properties of probabilistic programs can be characterized as least fixed points, but verifying their lower bounds remains a challenging problem.
We present a new approach to lower-bound verification that exploits and extends the connection between the uniqueness of fixed points and program termination.
The core technical tool is a generalization of ranking supermartingales, which serves as witnesses of the uniqueness of fixed points.
Our method provides a simple and unified reasoning principle applicable to a wide range of quantitative properties, including termination probability, the weakest preexpectation, expected runtime, higher moments of runtime, and conditional weakest preexpectation.
We provide a template-based algorithm for automated verification of lower bounds and demonstrate the effectiveness of the proposed method via experiments.
\end{abstract}
\begin{CCSXML}
<ccs2012>
   <concept>
       <concept_id>10003752.10010124.10010138.10010142</concept_id>
       <concept_desc>Theory of computation~Program verification</concept_desc>
       <concept_significance>500</concept_significance>
       </concept>
   <concept>
       <concept_id>10003752.10003753.10003757</concept_id>
       <concept_desc>Theory of computation~Probabilistic computation</concept_desc>
       <concept_significance>500</concept_significance>
       </concept>
   <concept>
       <concept_id>10003752.10003790.10002990</concept_id>
       <concept_desc>Theory of computation~Logic and verification</concept_desc>
       <concept_significance>500</concept_significance>
       </concept>
 </ccs2012>
\end{CCSXML}

\ccsdesc[500]{Theory of computation~Program verification}
\ccsdesc[500]{Theory of computation~Probabilistic computation}
\ccsdesc[500]{Theory of computation~Logic and verification}

\keywords{probabilistic program, supermartingale, fixed point, quantitative verification}

\maketitle

\section{Introduction}

Reasoning about probabilistic programs is a challenging problem, and various approaches have been developed.
One of the most fundamental techniques is weakest-precondition-style calculi, which have been extended for various quantitative properties of probabilistic programs.
The weakest preexpectation transformer~\cite{McIver2005} is a natural extension of the weakest precondition transformer~\cite{DijkstraComACM1975} and further extended for the expected runtime~\cite{KaminskiJACM2018}, hard/soft conditioning~\cite{OlmedoTOPLAS2018,SzymczakSETSS2020}, and higher moments of runtime~\cite{KuraTACAS2019,AguirreMSCS2022}.
Similarly to the case for non-probabilistic programs, these calculi characterize the behaviour of a loop as the least fixed point of some function.
Therefore, upper/lower bounding the least fixed point is a key problem in the verification of quantitative properties of probabilistic programs.

\begin{example}
    As an example, we consider the following biased random walk (where \( x \) ranges over \( \mathbb{N} \)) and analyse the probability of termination from state \( x = 1 \):
    \begin{equation*}
        P_1
        \quad\equiv\quad
        \mathtt{while}(x > 0) \big\{\; \mathtt{if}(\mathtt{random\_bool}(1/3))\;\{\,x := x - 1\,\}\;\mathtt{else}\;\{ x := x + 1 \;\}\,\big\}
        \label{eq:intro:biased-random-walk-example}
    \end{equation*}
    where \( \mathtt{random\_bool}(p) \) is a function that returns \( \mathtt{true} \) with probability \( p \) and \( \mathtt{false} \) with probability \( (1-p) \).
    The analysis can be reduced to the computation of the least fixed point of \( K_1 \colon (\mathbb{N} \to [0,1]) \to (\mathbb{N} \to [0,1]) \) defined by 
    \begin{equation}
        K_1(X \colon \mathbb{N} \to [0,1])(x \colon \mathbb{N}) \quad\coloneqq\quad \ifexpr{x > 0}{\tfrac{1}{3} X(x - 1) + \tfrac{2}{3} X(x + 1)}{\probone}
        \label{eq:intro:example}
    \end{equation}
    (where \( \probone \) denotes the constant function with value \( 1 \)).
    Writing \( (\mu K_1) \colon \mathbb{N} \to [0,1] \) for the least fixed point of \( K_1 \), the probability of termination from state \( x = 1 \) is \( (\mu K_1)(1) \), of which the exact value is \( (\mu K_1)(1) = 1/2 \).
    \qed
\end{example}

An upper bound of \( (\mu K_1)(1) \) can be relatively easily obtained by using the \emph{Knaster--Tarski theorem} or \emph{Park induction}.
It is the following reasoning principle:
\begin{equation}
    K(\eta) \le \eta \quad\Longrightarrow\quad \mu K \le \eta
    \label{eq:knaster-tarski}
\end{equation}
where \( \eta \colon \mathbb{N} \to [0,1] \) is an arbitrary assignment.
For example, if we define \( \eta \) by \( \eta(n) := (1/2)^n \), then \( K_1(\eta) \le \eta \) holds.
It follows from the above rule that \( (\mu K_1) \le \eta \), and hence \( (\mu K_1)(1) \le \eta(1) = (1/2) \).

In contrast, obtaining a lower bound is not straightforward, as pointed out by \citet{HarkPOPL2020}.
The following reasoning principle, obtained by reversing the order of the previous argument,
\begin{equation}
    \eta \le K(\eta) \quad\stackrel{\textit{wrong}}{=\!=\!=\!\Longrightarrow}\quad \eta \le \mu K
    \label{eq:wrong-principle}
\end{equation}
is wrong.
For example, letting \( \eta(n) := 1 \) for every \( n \), we have \( \eta \le K_1(\eta) \) but \( \eta(1) = 1 \not\le 1/2 = (\mu K_1)(1) \).
The correct version of the above rule is \( \eta \le K(\eta) \Longrightarrow \eta \le \nu K \), where \( \nu K \) is the greatest fixed point, but this is a lower-bound of the greatest fixed-point, not of the least fixed-point.

The challenge of obtaining a lower bound has been addressed in several papers \cite{HarkPOPL2020,ChatterjeeCAV2022,FengOOPSLA2023,MajumdarPOPL2025,AbateCAV2025,HenzingerCAV2025,WangPLDI2021a,McIver2005,UrabeLICS2017,TakisakaATVA2018}.
One line of work is based on \emph{uniform integrability} and the \emph{optional stopping theorem} (OST)~\cite{HarkPOPL2020,WangPLDI2019,WangPLDI2021a}, which provide remedies to the invalid rule \eqref{eq:wrong-principle} by imposing additional conditions.
Uniform integrability, a notion from probability theory, is known to be closely related to lower bounds of least fixed points~\cite{HarkPOPL2020}.
Since uniform integrability itself is generally difficult to check directly, the OST and its variants are often used to give sufficient conditions for it.
Such sufficient conditions typically impose both termination requirements on programs and certain forms of boundedness, such as \emph{conditional difference boundedness}~\cite{FerrerFioritiPOPL2015,FuVMCAI2019,HarkPOPL2020} and \emph{bounded update}~\cite{WangPLDI2019,WangPLDI2021a}.
Another line of work is based on approximations~\cite{ChatterjeeCAV2022,MajumdarPOPL2025,FengOOPSLA2023,BeutnerPLDI2021}, where lower bounds are obtained by finding subsets of the state space within which the target quantity can be computed exactly.

However, there still remain challenges.
Approximation-based approaches, by their nature, struggle to infer exact lower bounds.
Even when completeness results guarantee arbitrarily precise lower bounds, they typically require countably infinite certificates to obtain the exact lower bound.
On the other hand, OST-based approaches often require case-specific treatment for each type of transformer, making it difficult to uniformly handle the diverse transformers discussed earlier, even though these transformers can be uniformly defined in terms of fixed points.
Moreover, programs that are not almost-surely terminating fall out of the scope of OST-based approaches.
Indeed, even for seemingly simple problems such as the termination probability of \( P_1 \), both of existing approaches have difficulty in inferring exact lower bounds.

\paragraph{Our Approach}

This paper proposes a new method for lower-bound estimation that is uniformly applicable to all transformers mentioned at the beginning and can reason about the exact bound for \( P_1 \).
A key underlying idea of our approach is to exploit the \emph{uniqueness} of fixed points.
It is easy to see that if a function \( K \) has a unique fixed point, then the reasoning principle \eqref{eq:wrong-principle} becomes valid.
\begin{equation}
    \eta \le K(\eta)
    \quad\xLongrightarrow{\!\!\textit{Knaster-Tarski}\!\!}\quad
    \eta \le \nu K
    \quad\xLongrightarrow{\!\!\nu K = \mu K\!\!}\quad
    \eta \le \mu K
    \label{eq:ufp-framework}
\end{equation}
Then, the question is how to ensure the uniqueness of fixed points.
For weakest preexpectation transformers over bounded expectations (e.g.~those taking values in \( [0,1] \)), the necessary and sufficient condition for the existence of a unique fixed point is already known: as observed in a classical paper by Kozen~\cite[Corollary~5.2.5]{Kozen1979}, it is the (almost-sure) termination.
This suggests that lower bounds can be witnessed by the combination of $\eta$ as in \eqref{eq:wrong-principle} and a \emph{ranking supermartingale} \cite{ChakarovCAV2013} witnessing the almost-sure termination; indeed, this approach has been adopted in~\cite{WangPLDI2021}.
However, this reasoning is generally invalid for the unbounded case such as expected runtime \( \mathbb{N} \to [0,\infty] \).
\begin{example}
    Consider, for example, a random walk biased toward \( 0 \):
    \begin{equation*}
        \mathtt{while}(x > 0) \big\{\; \mathtt{if}(\mathtt{random\_bool}(2/3))\;\{\,x := x - 1\,\}\;\mathtt{else}\;\{ x := x + 1 \;\}\,\big\}
    \end{equation*}
    Let us examine the expected runtime, which is expressed as the least fixed point of \( K_2 \colon (\mathbb{N} \to [0,\infty]) \to (\mathbb{N} \to [0,\infty]) \) given by
    \begin{equation*}
        K_2(X \colon \mathbb{N} \to [0,\infty])(x \colon \mathbb{N}) \quad\coloneqq\quad \ifexpr{x > 0}{\tfrac{2}{3} X(x - 1) + \tfrac{1}{3} X(x + 1) + 1}{0}.
    \end{equation*}
    This version of biased random walk is almost-surely terminating, but \( \mu K_2 \neq \nu K_2 \):
    the least fixed point is \( \eta(x) := 3x \) but \( \xi(x) = 3x + (2^x -1) \) is another fixed point.
    \qed
\end{example}
Even in the bounded case, the above idea does not apply to non-terminating programs such as \( P_1 \).
This raises the following question: how can we ensure the uniqueness of fixed points in such cases?

\paragraph{Ranking Supermartingales for Unique Fixed Points}
This paper introduces a new generalization of ranking supermartingales~\cite{ChakarovCAV2013} for ensuring the uniqueness of fixed points.
Our idea is to extend the usual ranking condition for ranking supermartingales, which is of the form \( (\Diff K)(R) + 1 \le R \) in our notation, into a more general condition \( (\Diff K)(R) + U \le R \) relative to a choice of a function $U$.

More formally, we consider an upper bound \( U \) of the least fixed point and aim to prove that \( K \) has a unique fixed point in the subset \( \{ X \colon \mathbb{N} \to [0, \infty] \mid X \le U \} \subseteq (\mathbb{N} \to [0,\infty]) \) dominated by \( U \).
This is achieved by our supermartingale-based method stated as follows: it suffices to find \( R \colon \mathbb{N} \to [0, \infty) \) that satisfies \( (\Diff K)(R) + U \le R \) (where \( (\Diff K)(X) = K(X) - K(\probzero) \) and \( \probzero \) denotes the constant function with value 0), which we call a \emph{\( U \)-ranking supermartingale}.
Using \( U \)-ranking supermartingale, a generalized proof rule for lower bound is given as follows:
\begin{equation}
    K(U) \le U
    \quad\wedge\quad
    (\Diff K)(R) + U \le R
    \quad\wedge\quad
    \eta \le U
    \quad\wedge\quad
    \eta \le K(\eta)
    \qquad\Longrightarrow\qquad
    \eta \le \mu K.
    \label{eq:proof-rule-unbounded}
\end{equation}
The bounded case discussed above is subsumed as a special case where \( U = \probone \) and \( K \) is instantiated as the weakest preexpectation transformer of a program.\footnote{Note that the left-most condition becomes \( K(\probone) \le \probone \), which trivially holds if \( K \) is the weakest precondition transformer of a probabilistic program.}

Interestingly, this proof rule is also useful even in the bounded setting.
The formula \( K_1 \) corresponding to the termination probability of \( P_1 \) does not have a unique fixed point, so the termination-based criteria for the soundness of Rule~\eqref{eq:wrong-principle} are not applicable, but Rule~\eqref{eq:proof-rule-unbounded} with well-chosen \( U \colon \mathbb{N} \to [0,1] \) actually works; see Example~\ref{ex:biased-random-walk-probability-using-u} that gives the exact lower bound for the termination probability of \( P_1 \) using Rule~\eqref{eq:proof-rule-unbounded}.
This stands in sharp contrast to most prior work, which could only provide approximate lower bounds for Example~\ref{ex:biased-random-walk-probability-using-u}.

If \( K \) does not have a unique fixed point, the above rule is not directly applicable.
This issue can be addressed by replacing  \( K \) with an under-approximation \( K' \le K \) that has a unique fixed point.
This simple idea allows us to apply the rule to a wider range of cases.
Moreover, several existing techniques for lower-bound verification, such as guard-strengthening~\cite{FengOOPSLA2023}, $\gamma$-scaling~\cite{UrabeLICS2017,TakisakaATVA2018}, and stochastic invariants~\cite{ChatterjeePOPL2017,ChatterjeeCAV2022}, can be seen as instances of this under-approximation technique.

Our method provides a unified rule applicable to a broad range of settings, as presented in Table~\ref{table:comparison-of-methods}.
Various methods have been proposed for providing lower bounds for different kinds of transformers \( K \), such as the weakest pre-expectation transformer~\cite{HarkPOPL2020}, the expected running-time transformer~\cite{HarkPOPL2020,WangPLDI2019}, and its higher-moment variants~\cite{WangPLDI2021a}.
While these methods are closely interrelated, their soundness proofs must be treated individually.
Furthermore, depending on the particular setting, distinct assumptions are required regarding program termination and supplementary conditions (such as \emph{conditional difference boundedness}~\cite{HarkPOPL2020} and \emph{bounded-update}~\cite{WangPLDI2021a}), necessitating subtle rule design and careful analysis.
Our rule is applicable to all the above-mentioned settings as well as to other transformers such as those soft/hard conditioning proposed by \citet{SzymczakSETSS2020} and \citet{OlmedoTOPLAS2018}.

Another interesting feature is that our rule is applicable to non-terminating programs.
The problem of providing lower bounds on termination probability is an important problem studied in \cite{ChatterjeeCAV2022,MajumdarPOPL2025}.
In this setting, program termination cannot, of course, be assumed.
Our method is capable of giving an exact lower bound on the termination probability of program \( P_1 \) (see Example~\ref{ex:biased-random-walk-probability-using-u}).
To the best of our knowledge, no other existing approach can provide an exact lower bound on the termination probability of the program \( P_1 \) within a single proof.\footnote{The completeness of the proof rules in \cite{ChatterjeeCAV2022,MajumdarPOPL2025} are proved, but their completeness results are of the following form: for a program \( P \) with the termination probability \( p \), for each \( n \), one can prove that \( P \) terminates with probability at least \( p - 1/n \).  Hence, to prove that \( P \) terminates with probability \( p \), these rules require a countable family of witnesses each providing a lower bound \( p - 1/n \) in general.}

\begin{table}[b]
    \caption{Methods for lower bound estimation of quantitative properties of probabilistic programs.
        ``CDB'' and ``BUD'' stand for Conditionally Difference Bounded and Bounded Update, respectively, which are conditions required for the soundness of some methods.}
    \label{table:comparison-of-methods}
    \vspace{-2.5ex}
    \small
    \begin{tabular}{r|ccccc|c|}
        & & & & \hphantom{} & \hphantom{${}^{(5)}$} & ours \\
        & \cite{HarkPOPL2020} & \cite{FengOOPSLA2023} & \cite{WangPLDI2021a,WangPLDI2019} & \cite{McIver2005} & \cite{TakisakaATVA2018} &
        \\ \hline \hline
        Weakest preexpectation & \checkmark & \checkmark & & \checkmark & \checkmark & \checkmark
        \\
        Expected runtime & \checkmark & \checkmark & \checkmark & & & \checkmark
        \\
        Higher moment of runtime & & & \checkmark & & & \checkmark
        \\
        Soft/Hard conditioning & & & & & & \checkmark
        \\ \hline
        Applicable to unbounded expectations & \checkmark & \checkmark & \checkmark & & & \checkmark
        \\
        Applicable to non-terminating programs & & (\checkmark)\rlap{${}^{*1}$} & & \checkmark & \checkmark & \checkmark
        \\
        CDB/BUD requirement free & & \checkmark & & \checkmark & \checkmark & \checkmark        \\ \hline
        Negative values & & & \checkmark & & & 
        \\ \hline
        Exact estimation of $P_1$ & & & & & & \checkmark
        \\
        Automation & \checkmark\rlap{${}^{*2}$} & & \checkmark & & \checkmark & \checkmark
    \end{tabular}
    \\[4pt]
        \begin{description}
            \item[*1] The target program does not need to satisfy AST/CDB, but one needs to transform the target program into one that satisfy AST.
            \item[*2] Implementation is not given in \citet{HarkPOPL2020}, but actually found in \cite{BatzTACAS2023}.
        \end{description}
\end{table}

\paragraph{Implementation}
We give a prototype implementation of the proposed method.
Our tool automatically infers a lower-bound of \( \mu K \) by finding an appropriate \( K' \le K \), proving \( \mu K' = \nu K' \) and giving \( X \) with \( X \le K'(X) \).
Actually, our tool does not solve the problem in this sequential manner.
Instead, it extracts the constraints that \( K' \) and \( X \) should satisfy and searches for a solution of the constraints by using a template-based approach.
We evaluated the tool on some examples from the literature~\cite{FengOOPSLA2023,HarkPOPL2020,OlmedoTOPLAS2018} and discussed its effectiveness.

\paragraph{Contributions.}
The contributions of this paper are summarized as follows.
\begin{itemize}
    \item We propose a new reasoning principle for estimating a lower bound of the least fixed point \( \mu K \).
        Its core is to prove ``termination'' in a somewhat non-standard sense, by using a generalized ranking supermartingale.
    \item We prove the correctness of our reasoning principle and demonstrate its usefulness by examples.
        We also compare our approach with existing ones.
    \item We also discuss a method to automate the lower-bound proof using the proposed principle.
        Our method is based on a template based approach.
    \item We give a prototype implementation of the proposed method and evaluate the effectiveness of our method via an experiment.
        The experiments confirm that our tool provides exact bounds for many examples in the literature, and it is a distinctive feature of our tool.
\end{itemize}

Finally, we briefly discuss the limitations of the proposed approach, dividing them into two categories: the limitations of the proof rule itself, and challenges related to implementation.

Regarding the proof rule proposed in this paper, the first limitation is its incompleteness.
This incompleteness stems from the reliance on ranking supermartingales; a counterexample can be constructed from a program that is almost-surely terminating but not positively almost-surely terminating.
Another limitation is that our method cannot directly establish \( \infty \) as a lower bound.
However, it is important to note that the incompleteness here only means that exact lower bounds cannot always be established in a single proof.
Our method is ``complete'' in the sense of \citet[Theorem~7]{AbateCAV2025} and \citet[Proof Rule~4.3 and Lemma~4.6]{MajumdarPOPL2025}, i.e., our method still provides lower bounds arbitrarily close to the exact value.
Our method does not support non-negative expectations. This limitation can be addressed to some extent by decomposing mixed-sign expectations \( X \) into positive and negative parts \( X = X^{+} - X^{-} \).

As for implementation, the current major limitation is the inability to handle exponential functions. As seen in many examples throughout the paper, exact lower-bound estimation often requires exponential expressions, and the current tool cannot provide exact bounds for such cases.

\paragraph{Outline.}
Section~\ref{sec:problem} gives a formal definition of the problem.
Section~\ref{sec:unique-fixed-points-unbounded} gives a sufficient condition for the uniqueness of fixed point.
Section~\ref{sec:reasoning-principle} introduces proof rules based on the results in Section~\ref{sec:unique-fixed-points-unbounded} and demonstrates their power by examples.
Section~\ref{sec:implementation} discusses the automation and reports an experimental result.
Section~\ref{sec:related} discusses related work, and we conclude in Section~\ref{sec:conc}.
Omitted details can be found in the
\iflong
appendix.
\else
the full version~\cite{arxiv}.
\fi

\section{Program Analysis via Fixed-Point Equations}
\label{sec:problem}
This section introduces a fixed-point calculus that serves as the target of lower-bound estimation in this paper.
Instead of directly reasoning about programs, we work with the fixed-point calculus because this approach provides a uniform framework for various analyses, such as termination probability, expected cost, and the second moment of cost.
Differences in the properties being analysed are considered only at the stage of translating programs into formulas.

After reviewing some basic facts about fixed points in Section~\ref{sec:pre:pre}, we introduce the fixed-point calculus in Section~\ref{sec:pre:calculus-definition}.
Section~\ref{sec:pre:language-definition} explains how our fixed-point calculus can be used to describe various properties of probabilistic programs.
A more general discussion of the expressiveness can be found in \citet{Kura2024}.
Readers who are already convinced that our fixed-point calculus is sufficiently expressive can safely skip Section~\ref{sec:pre:language-definition} (although some examples will be used later).

\begin{notation}\label{notation:extended-real}
	Let \( [0, \infty) := \{ r \in \mathbb{R} \mid r \ge 0 \} \) be the set of non-negative real numbers and \( [0,\infty] := [0,\infty) \cup \{ \infty \} \) be the set of \emph{extended non-negative real numbers}.
	The set \( [0,\infty] \) is a totally ordered set by setting \( x \le \infty \) for every \( x \in [0,\infty] \).
	The addition is extended to \( \infty \) by \( \infty + x = x + \infty = \infty \) for every \( x \in [0, \infty] \).
	The multiplication of \( r \in [0, \infty) \) and \( x \in [0, \infty] \) is given by
	\(
		0 \cdot \infty = 0
	\) and
	\(
		r \cdot \infty = \infty \ (r \neq 0)
	\)
	(the product of two extended real numbers will not be used).
	When \( x \ge y \), the subtraction \( x - y \) is defined as the minimum value \( z \in [0,\infty] \) such that \( x = y + z \), so \( \infty - \infty = 0 \) and \( \infty - z = \infty \) if \( z < \infty \).
\end{notation}

\subsection{Preliminaries on Fixed Points}\label{sec:pre:pre}
Let \( D \) be a set, such as the set of all states of a system.
We write \( \mathbb{E}(D) \) for the set \( D \to [0,\infty] \) of \emph{expectations} and \( \mathbb{E}_{\le 1}(D) \) for \( D \to [0,1] \).
An element of \( \mathbb{E}_{\le 1}(D) \) is called a \emph{\( 1 \)-bounded expectation}, and \( \mathbb{E}_{\le 1}(D) \) is regarded as a subset of \( \mathbb{E}(D) \). 
The sets \( \mathbb{E}(D) \) and \( \mathbb{E}_{\le 1}(D) \) are equipped with the point-wise order (i.e., \( f \le g \) if and only if \( f(d) \le g(d) \) for every \( d \in D \)).

A partially ordered set $(X, {\le})$ is an \emph{$\omega$-complete partial order} (or \emph{$\omega$cpo}) if it has a least upper bound $\sup_n x_n$ of any ascending chain $x_0 \le x_1 \le \cdots$ in $X$.
If an \( \omega \)cpo \( X \) has a minimum element, we write it as \( \bot \) and call it the \emph{bottom}.
All of \( [0,\infty] \), \( [0,1] \), \( \mathbb{E}(D) \) and \( \mathbb{E}_{\le 1}(D) \) are \( \omega \)cpos with bottom elements given by the constant zero function $\probzero$.
A function $K : X \to Y$ between $\omega$cpos is \emph{Scott continuous} if it is monotone and preserves the least upper bound of any ascending chain: $K(\sup_n x_n) = \sup_n K(x_n)$.

Given a function \( K : X \to X \) on a poset \( X \), the \emph{least fixed-point}
of \( K \) is the minimum
element in \( \{ x \in X \mid K(x) = x \} \).
If the least fixed-point of \( K \) exists, we write it as \( \mu K \).
A least fixed point does not exist in general, but the following sufficient condition is well known.
\begin{theorem}[Kleene's fixed-point theorem]
	If $(X, {\le})$ is an $\omega$cpo with a bottom element $\bot$ and $K : X \to X$ is a Scott continuous function, then $K$ has the least fixed point $\mu K = \sup_n K^n(\bot)$.
	Here \( K^n \) is defined by \( K^0(x) = x \) and \( K^{n+1}(x) = K(K^n(x)) \).
	\qed
\end{theorem}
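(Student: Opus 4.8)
The plan is to verify directly that the candidate $x^\ast := \sup_n K^n(\bot)$ satisfies the two defining properties of a least fixed point: it is a fixed point of $K$, and it lies below every fixed point of $K$. Before either step, I would check that the sequence $(K^n(\bot))_n$ is an ascending chain, so that its supremum exists by the $\omega$cpo assumption. Since $\bot$ is the bottom element, $K^0(\bot) = \bot \le K(\bot) = K^1(\bot)$, and applying monotonicity of $K$ (which is part of the definition of Scott continuity given above) repeatedly yields $K^n(\bot) \le K^{n+1}(\bot)$ for all $n$ by a straightforward induction. Hence $x^\ast$ is well defined.

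To show that $x^\ast$ is a fixed point, I would use Scott continuity to push $K$ through the supremum:
\[
    K(x^\ast) = K\bigl(\sup_n K^n(\bot)\bigr) = \sup_n K^{n+1}(\bot).
\]
The remaining step is to recognise that the index-shifted chain has the same supremum as the original one: since $K^0(\bot) = \bot$ is the least element of the chain, prepending it does not change the least upper bound, so $\sup_n K^{n+1}(\bot) = \sup_n K^n(\bot) = x^\ast$. Combining the two displays gives $K(x^\ast) = x^\ast$.

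For minimality, let $y$ be any fixed point, i.e.\ $K(y) = y$. I would prove $K^n(\bot) \le y$ for all $n$ by induction: the base case $\bot \le y$ holds because $\bot$ is the bottom, and if $K^n(\bot) \le y$ then monotonicity gives $K^{n+1}(\bot) = K(K^n(\bot)) \le K(y) = y$. Taking the supremum over $n$ yields $x^\ast \le y$, so $x^\ast$ is indeed the \emph{least} fixed point, and $\mu K = x^\ast = \sup_n K^n(\bot)$ as claimed.

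This argument is entirely routine, and there is no genuine obstacle; the only point requiring slight care is the supremum-shift identity $\sup_n K^{n+1}(\bot) = \sup_n K^n(\bot)$. Rather than treat it as immediate, I would justify it by noting that the two chains are mutually cofinal—each term of one is dominated by a term of the other—so they share the same set of upper bounds and hence the same least upper bound.
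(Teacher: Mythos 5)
Your proof is correct and is the canonical textbook argument for Kleene's fixed-point theorem; the paper itself states this classical result without proof (the \qed marks it as taken for granted), so there is nothing to diverge from. All three steps --- the chain is ascending, Scott continuity commutes $K$ with the supremum and the index shift is harmless, and induction gives minimality below any fixed point --- are exactly right, and your appeal to monotonicity is licensed since the paper builds monotonicity into its definition of Scott continuity.
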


By the duality, we obtain the following notions and results.
A partially ordered set $(X, {\le})$ is an \emph{$\omega$-cocomplete partial order} (or \emph{$\omega$ccpo}) if it has the greatest lower bound $\inf_n x_n$ of any descending chain $x_0 \ge x_1 \ge \cdots$ in $X$.
If an \( \omega \)ccpo has a maximum element, it is written as \( \top \) and called the \emph{top}.
A function $K : X \to Y$ between $\omega$ccpos is \emph{Scott cocontinuous} if it is monotone and preserves the greatest lower bound of any descending chain: $K(\inf_n x_n) = \inf_n K(x_n)$.
The maximum element in \( \{ x \in X \mid K(x) = x \} \) is called the \emph{greatest fixed-point} of \( K \) and written as \( \nu K \).
By the dual of Kleene's fixed-point theorem, if $(X, {\le})$ is an $\omega$ccpo with a top element $\top$ and $K : X \to X$ is a Scott cocontinuous function, then $K$ has a greatest fixed point given by $\nu K = \inf_n K^n(\top)$.

\begin{remark}
	The partially ordered set \(\mathbb{E}(D)\) has a stronger completeness property than $\omega$-completeness.
	That is, \(\mathbb{E}(D)\) is a complete lattice.
	This implies that any monotone function \( K : \mathbb{E}(D) \to \mathbb{E}(D) \) has both a least fixed point \( \mu K \) and a greatest fixed point \( \nu K \).
	Although Scott continuity and cocontinuity of $K$ are not required for the existence of these fixed points, we assume both in this paper, as we rely on the characterizations \(\mu f = \sup_n f^n(\bot)\) and \(\nu f = \inf_n f^n(\top)\) to provide a sufficient condition for the uniqueness of fixed points.
	This assumption does not restrict the applicability of our method as we will see in Section~\ref{sec:pre:calculus-definition} and Lemma~\ref{lem:continuity-of-interpretation}.
\end{remark}

A \emph{prefixed point} (resp.~\emph{postfixed point}) of $K : X \to X$ is an element $x \in X$ such that $K(x) \le x$ (resp.~$x \le K(x)$).
The Knaster--Tarski theorem relates pre- and post-fixed points with the least and greatest fixed-points.
\begin{theorem}[Knaster--Tarski theorem]
	Let \( K \colon \mathbb{E}(D) \to \mathbb{E}(D) \) be a monotone function.
	Then (a) if \( K(x) \le x \), then \( \mu K \le x \); and
	(b) if \( x \le K(x) \), then \( x \le \nu K \).
	\qed
\end{theorem}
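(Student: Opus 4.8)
The plan is to work entirely within the complete-lattice structure of \( \mathbb{E}(D) \): since \( [0,\infty] \) is a complete lattice and \( \mathbb{E}(D) = D \to [0,\infty] \) carries the pointwise order, arbitrary infima and suprema exist in \( \mathbb{E}(D) \). This is exactly what lets me run the classical Knaster--Tarski argument despite \( K \) being only monotone: Scott continuity is not assumed here, so Kleene's theorem is unavailable and I cannot build \( \mu K \) as a chain supremum \( \sup_n K^n(\bot) \); instead I construct it order-theoretically.

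For part (a), first I would set \( P := \{ x \in \mathbb{E}(D) \mid K(x) \le x \} \), the set of prefixed points, and put \( m := \inf P \), which exists by completeness. The core lemma is that \( m \) is itself a fixed point of \( K \). To see that \( m \) is a prefixed point, fix any \( x \in P \); then \( m \le x \), so monotonicity gives \( K(m) \le K(x) \le x \). As this holds for every \( x \in P \), the universal property of the infimum yields \( K(m) \le \inf P = m \), so \( m \in P \). For the reverse inequality, apply \( K \) to \( K(m) \le m \) to get \( K(K(m)) \le K(m) \), i.e.\ \( K(m) \in P \), whence \( m = \inf P \le K(m) \). Combining the two inequalities gives \( K(m) = m \), so \( m \) is a fixed point, and it is the least one because every fixed point lies in \( P \) and \( m \) lower-bounds \( P \). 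Hence \( \mu K = m = \inf P \). Part (a) is then immediate: any \( x \) with \( K(x) \le x \) lies in \( P \), so \( \mu K = \inf P \le x \).

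Part (b) I would obtain by the order-dual argument. Working in the opposite order \( \mathbb{E}(D)^{\mathrm{op}} \), which is again a complete lattice, prefixed points become postfixed points and \( \mu \) becomes \( \nu \); the dual of the construction above gives \( \nu K = \sup\{ x \mid x \le K(x) \} \), and any postfixed point \( x \) then satisfies \( x \le \sup\{ y \mid y \le K(y) \} = \nu K \). Concretely, for \( Q := \{ x \mid x \le K(x) \} \) and \( M := \sup Q \), one shows \( M \le K(M) \) (hence \( M \in Q \)) and then \( K(M) \le M \), yielding \( K(M) = M \) as the greatest fixed point.

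The only genuinely delicate step is the core lemma that \( \inf P \) is a prefixed point; everything else is bookkeeping. I would take care there to justify the inference ``\( K(m) \le x \) for all \( x \in P \) implies \( K(m) \le \inf P \)'', which is precisely the universal property of the infimum, and to note that this is exactly where monotonicity (rather than any continuity or chain-completeness hypothesis) is used. A minor preliminary point worth flagging is the implicit existence claim: the theorem refers to \( \mu K \) and \( \nu K \) as if they are available, and the construction just described is itself what secures that existence in the complete lattice \( \mathbb{E}(D) \).
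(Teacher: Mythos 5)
Your proof is correct: it is the classical Knaster--Tarski argument via $\inf$ of the prefixed points (and its order dual), using exactly the complete-lattice structure of $\mathbb{E}(D)$ that the paper notes in its remark on completeness. The paper itself states this theorem without proof as a standard result, and your argument is precisely the standard one it implicitly relies on, including the observation that the construction also secures the existence of $\mu K$ and $\nu K$ for merely monotone $K$.
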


\subsection{Fixed-Point Equation System}
\label{sec:pre:calculus-definition}
As mentioned at the beginning of this section, this paper focuses on $[0, \infty]$-valued logical formulas that involve least fixed-points.
While there are multiple approaches to formalizing such formulas, we adopt a representation based on systems of mutually-recursive equations.

Assume a collection \( \mathcal{D} \) of sets, of which an element is called a \emph{data type}.
Intuitively \( D \in \mathcal{D} \) is a data type in the target programming language, and typically \( \mathcal{D} = \{ \mathbb{N}, \mathbb{Z}, \mathbb{Q}, \mathbb{R}, \dots \} \).
However, our theory does not require any assumptions about \( \mathcal{D} \), so one can assume that an arbitrary complex data type belongs to \( \mathcal{D} \).
The only exceptions are the following cases: (1) in Section~\ref{sec:implementation} on implementation, we assume that data types in \( \mathcal{D} \) can be handled by the external solver on which our implementation relies (see Section~\ref{sec:input-format-for-implementation} for details); (2) to ease the presentation of the theoretical part, we shall assume that \( \mathcal{D} \) contains the Boolean type \( \mathbb{B} = \{ \mathbf{true}, \mathbf{false} \} \), the probability type \( [0,1] \) and the positive real number type \( [0,\infty) \), and \( \mathcal{D} \) is closed under finite cartesian product and disjoint union.
We use \( D \) as a metavariable ranging over data types (i.e., \( D \in \mathcal{D} \)), and \( x,y,z,\dots \) as variables ranging over elements of data types (i.e., \( x \in D \) for some \( D \in \mathcal{D} \)).

We also assume a collection \( \mathcal{OP} \) of operations on data types, which correspond to operations in the target programming language.
Each operation \( \mathit{op} \in \mathcal{OP} \) is associated with a type \( D_1 \times \dots \times D_n \to D \) (where \( D_1,\dots,D_n,D \in \mathcal{D} \)), meaning that it takes an \( n \)-tuple \( (x_1, \dots, x_n) \) of arguments of type \( x_i \colon D_i \) and returns an element \( \mathit{op}(x_1,\dots,x_n) \) of \(  D \).
Typically, the collection of operations has the summation \( + \) and multiplication \( \times \) on \( \mathbb{Z} \), as well as constants \( 0, 1 \in \mathbb{Z} \) (regarded as \( 0 \)-ary operations), but again our theory does not rely on any assumption on operations\footnote{Corresponding to the assumptions on \( \mathcal{D} \) in case (2), to simplify the presentation, we implicitly assume basic operations such as the projection \( D_1 \times D_2 \to D_i \) and injection \( D_i \to (D_1 + D_2) \) for every \( D_1,D_2 \in \mathcal{D} \) and \( i = 1, 2 \) belong to \( \mathcal{OP} \).}.

The set \( \mathit{Exp}_{D} \) of \emph{expressions} of type \( D \in \mathcal{D} \) is defined by the following grammar:
\begin{equation*}
	e \in \mathit{Exp}_D
	\qquad::=\qquad
	x
	\quad\mid\quad
	\mathit{op}(e_1,\dots,e_n),
\end{equation*}
where \( x \) is a variable of type \( D \), the operation \( \mathit{op} \) is of type \( D_1 \times \dots \times D_n \to D \), and \( e_i \in \mathit{Exp}_{D_i} \) for every \( i = 1,\dots,n \).
In particular, we refer to elements of \( \mathit{Exp}_{\mathbb{B}} \) as \emph{boolean expressions}, written \( \varphi \), and elements of \( \mathit{Exp}_{[0,\infty)} \) as \emph{non-negative real term}, written \( t \).
In a typical setting, \( \mathit{Exp}_{\mathbb{Z}} \) contains arbitrary arithmetic expressions.

\begin{definition}\label{def:fixed-point-equation}
	Let \( \mathcal{X} = \{ X_1, X_2, \dots, X_n \} \) be a finite set of \emph{quantitative predicate variables}.
	Each \( X_i \) is associated with its type \( D_{i,1} \times \dots \times D_{i,m_i} \to \Omega \) where \( D_{i,j} \in \mathcal{D} \) and \( \Omega \) is a special symbol, which intuitively means the type of quantitative propositions (i.e., \( \Omega = [0,\infty] \)).

	The set of \emph{quantitative formulas} over \( \mathcal{X} \) is defined by the following grammar:
	\begin{equation*}
		F \quad\coloneqq\quad X(\widetilde{e}) \mid t \mid F_1 + F_2 \mid t \cdot F \mid \ifexpr{\varphi}{F_1}{F_2},
	\end{equation*}
	where \( \widetilde{e} \) is a sequence of expressions (of expected types), \( t \in \mathit{Exp}_{[0,\infty)} \) is a non-negative real term, and \( \varphi \in \mathit{Exp}_{\mathbb{B}} \) is a boolean expression.
	A \emph{fixed-point equation system for expectations} (or just \emph{equation system}) is a set $E$ of equation in the following form:
	\begin{equation}
		E \quad=\quad \{ X_1(\widetilde{x_1}) =_{\mu} F_1,\quad X_2(\widetilde{x_2}) =_{\mu} F_2,\quad \dots, X_n(\widetilde{x_n}) =_{\mu} F_n \}
		\label{eq:hes}
	\end{equation}
	where $\widetilde{x_i}$ is a sequence of term variables (of expected types) and $F_i$ is a quantitative formula possibly containing variables \( \widetilde{x_i} \) and quantitative predicate variables \( X_1,\dots,X_n \) (but no other variables).
	Note that \( E \) contains an equation for each \( X_i \in \mathcal{X} \).
\end{definition}

The meaning of each construct should be clear (see the formal semantics given below).

Here we give some remarks on the syntax.
The addition $F_1 + F_2$ and the scalar multiplication $t \cdot F$ in quantitative formulas are often used to express the expectation of the form $\sum_i t_i \cdot F_i$ where $\sum_i t_i = 1$ (i.e.\ $\sum_i t_i \cdot F_i$ is the integral with respect to a discrete probability distribution), but the definition is more permissive since we can also use general weighted sums $\sum_i t_i \cdot F_i$ where $t_i \ge 0$ for each $i$ (i.e.\ integral with respect to an arbitrary discrete measure).
For example, we will make use of this generality when we apply our technique to the verification of higher moments of the cost (cf.~Example~\ref{ex:equation-systems-cost-moment}).
The expressive power of the fixed-point equation system for expectations shall be discussed in the next subsection.

\begin{remark}
	We briefly discuss features not supported by the logic in Definition~\ref{def:fixed-point-equation}. The most significant ones are: (1) continuous distributions, (2) disjunction and conjunction, and (3) mixed-sign expectations (i.e., expectations that can take negative values).

	As for continuous distributions, we do not support them for simplicity of the presentation.
	It is straightforward to extend our theoretical framework to support continuous distributions once we restrict a set of data types $\mathcal{D}$ to a set of \emph{measurable spaces} and the set $\mathbb{E}(D)$ of expectations to the set of \emph{measurable} expectations $D \to [0, \infty]$.
	A sketch of this extension can be found in \referappendix{Appendix}{F}{sec:integration}.

	On the contrary, supporting disjunction and conjunction has a technical challenge.
	The syntax of formulas is designed so that the semantics of formulas is affine in the sense of Definition~\ref{def:affineness} (Lemma~\ref{lem:affineness-of-interpretation}), and the affineness is heavily used in the current development.
	However, the disjunction and conjunction (interpreted as \( \max \) and \( \min \)) are not affine.
	We shall briefly discuss a way to handle \( \min \) in \referappendix{Appendix}{E.1}{sec:demonic-nondeterminism}.

	Finally, we do not support mixed-sign expectations due to inherent difficulties in defining least fixed point semantics in this setting~\cite{KaminskiLICS2017}.
	Nevertheless, mixed-sign expectations can be handled to some extent by decomposing a given expectation \( X : D \to \mathbb{R} \) into a pair of non-negative expectations \( X^+ : D \to [0,\infty] \) and \( X^- : D \to [0,\infty] \) such that \( X = X^+ - X^- \) (see \referappendix{Appendix}{G}{sec:experiment-negative-cost}).
\end{remark}

\paragraph{Semantics.}
Assume a fixed-point equation system \( E \) and suppose that \( X_i \) is of type \( D_i \to \Omega \) for simplicity.
Each quantitative formula $F_i$ defines a function $\interpret{F_i} : \prod_{j = 1}^n \mathbb{E}(D_j) \to \mathbb{E}(D_i)$, and the simultaneous least fixed point of the functions $\interpret{F_i}$ gives the least solution of the equation system $E$.
The formal definition is given as follows.
\begin{definition}\label{def:interpretation-fixed-point-equation}
	Assume the interpretations of non-negative real terms $\interpret{t} : D_i \to [0, \infty)$, boolean terms $\interpret{\varphi} : D_i \to \mathbb{B}$, and expressions $\interpret{e} : D_i \to D$ (for \( e \in \mathit{Exp}_D \)).
	The \emph{interpretation} of quantitative formulas $F_i$ is a function $\interpret{F_i} : \prod_{j = 1}^n \mathbb{E}(D_j) \to \mathbb{E}(D_i)$ inductively defined as follows.
	\begin{gather*}
		\interpret{X_j(e)}(\eta)(v) \coloneqq \eta_j(\interpret{e}(v))
		\qquad
		\interpret{t}(\eta)(v) \coloneqq \interpret{t}(v)
		\\
		\interpret{F + F'}(\eta)(v) \coloneqq \interpret{F}(\eta)(v) + \interpret{F'}(\eta)(v)
		\qquad
		\interpret{t \cdot F}(\eta)(v) \coloneqq \interpret{t}(v) \cdot \interpret{F}(\eta)(v)
		\\
		\interpret{\ifexpr{\varphi}{F}{F'}}(\eta)(v) \coloneqq\quad \begin{cases}
			\interpret{F}(\eta)(v) & \text{if $\interpret{\varphi}(v)$ is \( \mathbf{true} \)} \\
			\interpret{F'}(\eta)(v) & \text{if $\interpret{\varphi}(v)$ is \( \mathbf{false} \),}
		\end{cases}
	\end{gather*}
	where $\eta_j$ is the $j$-th component of $\eta \in \prod_{j = 1}^n \mathbb{E}(D_j)$.

	The \emph{interpretation} of the equation system~\eqref{eq:hes} is given as a function $\interpret{E} : \prod_{j = 1}^n \mathbb{E}(D_j) \to \prod_{j = 1}^n \mathbb{E}(D_j)$ defined as the combination of $\interpret{F_i}$, that means,
	\( \interpret{E}(\eta) \coloneqq (\interpret{F_1}(\eta), \dots, \interpret{F_n}(\eta)) \).
	The \emph{(least) solution} of the fixed-point equation system $E$ is the least fixed point of $\interpret{E} : \prod_{j = 1}^n \mathbb{E}(D_j) \to \prod_{j = 1}^n \mathbb{E}(D_j)$, which is denoted by $\mu \interpret{E}$.
\end{definition}

\begin{remark}\label{rem:simplify-expectation}
	Note that there exists a natural isomorphism $\prod_{j = 1}^n \mathbb{E}(D_j) \cong \mathbb{E}(\coprod_{j = 1}^n D_j)$, where \( \coprod_{j = 1}^n D_j \) is the disjoint union of \( D_1,\dots,D_n \).
	We write \( D \) for \( \coprod_{j = 1}^n D_j \) and often identify $\prod_{j = 1}^n \mathbb{E}(D_j)$ with $\mathbb{E}(D)$.
	This simplifies the notation: for example, $\interpret{E} : \mathbb{E}(D) \to \mathbb{E}(D)$.
\end{remark}

To justify the definition of \( \mu \interpret{E} \), we first show the existence of the least fixed-point of \( \interpret{E} \).
\begin{lemma}\label{lem:continuity-of-interpretation}
	The interpretation \( \interpret{F} \colon \mathbb{E}(D) \to \mathbb{E}(D') \) of a quantitative formula \( F \) is Scott continuous and Scott cocontinuous.
	The interpretation \( \interpret{E} \colon \mathbb{E}(D) \to \mathbb{E}(D) \) of an equation system \( E \) is Scott continuous and Scott cocontinuous.
\end{lemma}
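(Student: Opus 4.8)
The plan is to prove the statement for a single quantitative formula $F$ by structural induction and then read off the case of an equation system $E$ from the componentwise description of $\interpret{E}$. The key preliminary observation is that the order on $\mathbb{E}(D') = D' \to [0,\infty]$ is pointwise, so least upper bounds of ascending chains and greatest lower bounds of descending chains are computed pointwise: $(\sup_n \eta_n)(v) = \sup_n(\eta_n(v))$ and $(\inf_n \eta_n)(v) = \inf_n(\eta_n(v))$ for each $v \in D'$. Hence, to prove Scott continuity of $\interpret{F}$ it suffices to fix an ascending chain $\eta_0 \le \eta_1 \le \cdots$ in $\mathbb{E}(D)$ and a point $v$, and check $\interpret{F}(\sup_n \eta_n)(v) = \sup_n \interpret{F}(\eta_n)(v)$; Scott cocontinuity is the dual statement for descending chains. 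Monotonicity of $\interpret{F}$ (needed for the right-hand sides to range over genuine chains) follows from a trivial parallel induction, which I would either establish first or fold into the same induction.

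The induction then runs over the five clauses of the grammar. The base cases $F = X_j(e)$ and $F = t$ are immediate: in the former, evaluating the $j$-th component at the fixed point $\interpret{e}(v)$ commutes with pointwise $\sup$ and $\inf$; in the latter $\interpret{t}(\eta)(v) = \interpret{t}(v)$ is constant in $\eta$. The conditional $\ifexpr{\varphi}{F_1}{F_2}$ is also easy, since for fixed $v$ the truth value $\interpret{\varphi}(v)$ is independent of $\eta$, so the formula reduces uniformly to one of $\interpret{F_1}$, $\interpret{F_2}$ and the claim follows from the induction hypothesis. The two remaining cases, $F_1 + F_2$ and $t \cdot F'$, carry the real content: after applying the induction hypotheses they reduce to the facts that addition and multiplication by a nonnegative scalar on $[0,\infty]$ are themselves Scott continuous and cocontinuous along chains.

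The step I expect to require the most care is cocontinuity, specifically the identities $\inf_n a_n + \inf_m b_m = \inf_k (a_k + b_k)$ for descending chains $(a_n), (b_m)$ and $r \cdot \inf_n a_n = \inf_n (r \cdot a_n)$ for $r \in [0,\infty)$. In each case the inequality $\ge$ is monotonicity, while for $\le$ I would use a diagonal argument: given $\epsilon$, pick indices witnessing each infimum and take their maximum, exploiting that the chains are descending. The extended-arithmetic edge cases must then be dispatched separately: when an infimum equals $\infty$ the chain is constantly $\infty$, and for scalar multiplication the case $r = 0$ is handled by the convention $0 \cdot \infty = 0$ from Notation~\ref{notation:extended-real} (that $t$ ranges over $[0,\infty)$, so $r$ is finite, is exactly what makes these conventions suffice). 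Continuity is the mirror image and is the more familiar direction. Finally, for an equation system I would invoke the definition $\interpret{E}(\eta) = (\interpret{F_1}(\eta), \dots, \interpret{F_n}(\eta))$: chains in the product $\prod_j \mathbb{E}(D_j)$ have componentwise suprema and infima, so since each $\interpret{F_i}$ is Scott continuous and cocontinuous, $\interpret{E}$ inherits both properties componentwise; equivalently, one views $\interpret{E}$ as the interpretation of a tuple of formulas over $\mathbb{E}(D)$ via Remark~\ref{rem:simplify-expectation} and reuses the pointwise argument verbatim.
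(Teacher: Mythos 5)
Your proposal is correct and follows essentially the same route as the paper's proof: reduce Scott (co)continuity to the pointwise statement at each $v' \in D'$, induct on the structure of $F$, and observe that the only non-trivial cases rest on the Scott continuity and cocontinuity of $({+})$ and $r \cdot ({-})$ on $[0,\infty]$, with the equation-system case following componentwise. The paper leaves the (co)continuity of addition and scalar multiplication as a stated fact, whereas you spell out the diagonal argument and the $\infty$/$r=0$ edge cases, but this is elaboration rather than a different approach.
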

\begin{proof}
	The latter immediately follows from the former.
	The former claim can be proved by induction on the structure of $F$.
	Note that \( \interpret{F} \colon \mathbb{E}(D) \to \mathbb{E}(D') \) is Scott (co)continuous if and only if \( \interpret{F}({-})(v') \colon \mathbb{E}(D) \to [0,\infty] \) is (co)continuous for every \( v' \in D' \) since \( \sup \) and \( \inf \) in \( \mathbb{E}(D') = (D' \to [0,\infty]) \) can be computed point-wise.
	So the claim follows from the fact that $({+}) : [0, \infty] \times [0, \infty] \to [0, \infty]$ and $r \cdot ({-}) : [0,\infty) \times [0, \infty] \to [0, \infty]$ (for every \( r \in [0,\infty) \)) are Scott continuous and Scott cocontinuous.
\end{proof}

\begin{corollary}
	\( \interpret{E} \colon \mathbb{E}(D) \to \mathbb{E}(D) \) has a least fixed-point.
	\qed
\end{corollary}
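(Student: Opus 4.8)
The plan is to invoke Kleene's fixed-point theorem, whose hypotheses are by now almost entirely established by the preceding results. First I would recall from Section~\ref{sec:pre:pre} that \( \mathbb{E}(D) = (D \to [0,\infty]) \), equipped with the pointwise order, is an \(\omega\)cpo with a bottom element. The bottom is the constant zero function \( \probzero \), and the least upper bound of an ascending chain in \( \mathbb{E}(D) \) is computed pointwise; each pointwise supremum exists because \( [0,\infty] \) is itself an \(\omega\)cpo (as noted in the preliminaries). Thus \( \mathbb{E}(D) \) satisfies both conditions required of the domain in Kleene's theorem.

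Second, I would appeal directly to Lemma~\ref{lem:continuity-of-interpretation}, which states that the interpretation \( \interpret{E} \colon \mathbb{E}(D) \to \mathbb{E}(D) \) of an equation system \( E \) is Scott continuous. (The lemma additionally gives Scott cocontinuity, but only continuity is needed for the least fixed point.) This discharges the remaining hypothesis, namely that the endofunction in question preserves suprema of ascending chains.

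Finally, having verified that \( \mathbb{E}(D) \) is an \(\omega\)cpo with bottom \( \probzero \) and that \( \interpret{E} \) is Scott continuous, I would apply Kleene's fixed-point theorem to conclude that \( \interpret{E} \) has a least fixed point, explicitly given by \( \mu \interpret{E} = \sup_n \interpret{E}^n(\probzero) \). This is exactly the object denoted \( \mu \interpret{E} \) in Definition~\ref{def:interpretation-fixed-point-equation}, so the corollary also retroactively justifies that definition. Since every ingredient is supplied by earlier statements, there is no real obstacle here; the proof is a one-line composition of the two facts above, and the only point worth a moment's care is confirming that \( \mathbb{E}(D) \) genuinely possesses a bottom element, which is immediate.
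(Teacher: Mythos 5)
Your proposal is correct and matches the paper's intended argument: the corollary is stated immediately after Lemma~\ref{lem:continuity-of-interpretation} precisely so that it follows by applying Kleene's fixed-point theorem to the $\omega$cpo $\mathbb{E}(D)$ with bottom $\probzero$ and the Scott continuity of $\interpret{E}$. The only minor addendum is that the paper also notes (in a remark) that $\mathbb{E}(D)$ is in fact a complete lattice, so existence of $\mu\interpret{E}$ would already follow from Knaster--Tarski without continuity, but the Kleene route you take is the one the paper relies on since it needs the characterization $\mu\interpret{E} = \sup_n \interpret{E}^n(\probzero)$ later.
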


An important property of the interpretations of quantitative formulas
is the affineness.
\begin{definition}[Affineness]\label{def:affineness}
	A function \( K \colon \mathbb{E}(D) \to \mathbb{E}(D') \) is \emph{affine} if it is Scott continuous and cocontinuous and moreover satisfies
	\begin{equation*}
		K(\alpha \eta_1 + (1-\alpha) \eta_2)
		\quad=\quad
		\alpha K(\eta_1) + (1-\alpha) K(\eta_2)
	\end{equation*}
	for every \( \eta_1, \eta_2 \in \mathbb{E}(D) \) and \( \alpha \in [0,1] \).
\end{definition}
\begin{lemma}\label{lem:affineness-of-interpretation}
	The interpretations \( \interpret{F} \) and \( \interpret{E} \) are affine.
\end{lemma}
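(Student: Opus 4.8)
The plan is to prove affineness of \( \interpret{F} \) by structural induction on the grammar of quantitative formulas, and then obtain affineness of \( \interpret{E} \) as an immediate consequence. As in the proof of Lemma~\ref{lem:continuity-of-interpretation}, I would first observe that affineness is a pointwise condition: \( \interpret{F} \colon \mathbb{E}(D) \to \mathbb{E}(D') \) is affine if and only if \( \interpret{F}({-})(v') \colon \mathbb{E}(D) \to [0,\infty] \) is affine for every \( v' \in D' \), since both the sum and the scalar multiple on \( \mathbb{E}(D') \) are computed coordinatewise. This reduces the whole argument to checking affineness of \( [0,\infty] \)-valued maps at a fixed point \( v' \).

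For the base cases, the variable case \( X_j(e) \) is immediate: as a function of \( \eta \), \( \interpret{X_j(e)}(\eta)(v) = \eta_j(\interpret{e}(v)) \) is the composite of the \( j \)-th projection \( \eta \mapsto \eta_j \) with evaluation at \( \interpret{e}(v) \), both of which are linear and hence affine. The constant case \( t \) is the conceptually important one, and is precisely where affineness (rather than plain linearity) is needed: since \( \interpret{t}(\eta)(v) = \interpret{t}(v) \) does not depend on \( \eta \), we have \( \interpret{t}(\alpha\eta_1 + (1-\alpha)\eta_2)(v) = \interpret{t}(v) = (\alpha + (1-\alpha))\,\interpret{t}(v) = \alpha\interpret{t}(\eta_1)(v) + (1-\alpha)\interpret{t}(\eta_2)(v) \), using \( \alpha + (1-\alpha) = 1 \).

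For the inductive cases I would invoke the induction hypothesis on the immediate subformulas. For \( F_1 + F_2 \), affineness follows because a sum of affine maps is affine: after applying the hypothesis to each summand, one regroups the four resulting terms using commutativity and associativity of \( + \) together with the distributive law \( \alpha(a+b) = \alpha a + \alpha b \) in \( [0,\infty] \). For \( t \cdot F \), one applies the hypothesis to \( F \) and then pulls the scalar \( \interpret{t}(v) \) through the convex combination using commutativity and associativity of multiplication, i.e.\ \( \interpret{t}(v)\cdot(\alpha a) = \alpha\,(\interpret{t}(v)\cdot a) \). The conditional \( \ifexpr{\varphi}{F_1}{F_2} \) is handled by the pointwise reduction: for each fixed \( v \) the value is literally \( \interpret{F_1}({-})(v) \) or \( \interpret{F_2}({-})(v) \) according to the (parameter-independent) truth value of \( \interpret{\varphi}(v) \), so it inherits affineness from whichever branch is selected. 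Finally, affineness of \( \interpret{E}(\eta) = (\interpret{F_1}(\eta), \dots, \interpret{F_n}(\eta)) \) follows from affineness of each \( \interpret{F_i} \), again because the structure on \( \prod_j \mathbb{E}(D_j) \) is coordinatewise.

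The main obstacle I anticipate is not conceptual but bookkeeping over the extended non-negative reals: the algebraic identities used in the \( + \) and \( t\cdot({-}) \) cases — associativity, commutativity, and distributivity — must be verified to hold in \( [0,\infty] \) under the conventions of Notation~\ref{notation:extended-real}, in particular the edge cases involving \( 0\cdot\infty = 0 \) and \( r\cdot\infty = \infty \) for \( r \neq 0 \). These identities do hold, but one should confirm the degenerate cases (e.g.\ \( \alpha = 0 \), or an operand equal to \( \infty \)) rather than assume the familiar laws of finite arithmetic carry over verbatim.
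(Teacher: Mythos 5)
Your proposal is correct and follows exactly the paper's route: reduce affineness to the pointwise condition \( \interpret{F}({-})(v') \) being affine for each \( v' \), prove that by structural induction on \( F \), and derive affineness of \( \interpret{E} \) componentwise. The paper leaves the induction as ``easily proved''; your write-out of the individual cases, including the care with the arithmetic conventions on \( [0,\infty] \), simply supplies the details it omits.
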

\begin{proof}
	The affineness of \( \interpret{E} \) follows from the affineness of \( \interpret{F} \).
	We prove the affineness of \( \interpret{F} \colon \mathbb{E}(D) \to \mathbb{E}(D') \).
	Since the sum in \( \mathbb{E}(D') \) is defined as the point-wise sum, it suffices to prove that \( \interpret{F}({-})(v') \) is affine for every \( v' \in D' \).
	This claim can be easily proved by induction on \( F \).
\end{proof}

\begin{remark}
	It is important to distinguish between the affineness of the interpretations of quantitative formulas and the affineness of operations on data types.
	Lemma~\ref{lem:affineness-of-interpretation} shows that the interpretations of quantitative formulas are always affine, but it does not assume anything about the ``affineness'' of operations involved in the formulas.
	So Lemma~\ref{lem:affineness-of-interpretation} holds even if \( \mathcal{OP} \) has a ``non-affine'' operation such as multiplication \( ({-}) \times ({-}) \colon \mathbb{Z} \times \mathbb{Z} \to \mathbb{Z} \) and power \( ({-})^{({-})} \colon \mathbb{Z} \times \mathbb{N} \to \mathbb{Z} \).

	For example, \( F = x^2 \cdot X(e^x) \) is affine in the sense of Definition~\ref{def:affineness} because it is affine with respect to $X$.
	On the other hand, \( F = 2 \cdot X(x - 1) \cdot X(x + 1) \) is not affine and actually not a quantitative formula in the sense of Definition~\ref{def:fixed-point-equation}, even though terms (e.g.\ $x - 1$) in $F$ are affine.

	For readers familiar with semantics, this point can be explained as follows.
	The affineness of the interpretations of formulas in Lemma~\ref{lem:affineness-of-interpretation} corresponds, in terms of programming languages, to the affineness of control flow, i.e.~the property that a continuation is used at most once.
	This property holds in a wide range of programming languages.
	Moreover, since it concerns the usage pattern of continuations, it is independent of operations on data types.
\end{remark}

\subsection{A Probabilistic Programming Language and its Verification via Equation System}
\label{sec:pre:language-definition}
This subsection demonstrates the expressive power of the fixed-point equation systems introduced in Section~\ref{sec:pre:calculus-definition}.
To this end, we introduce a probabilistic programming language and show that various properties of programs
can be described in terms of fixed-point equation systems.
Readers who are already convinced that our fixed-point calculus is sufficiently expressive can safely skip this subsection (although some examples will be used later).

We consider probabilistic programs in the \emph{probabilistic guarded command language} (\emph{pGCL}).
The syntax is defined as follows.
\[ c \quad\coloneqq\quad \mathsf{skip} \mid c_1; c_2 \mid \assign{x}{e} \mid \pbranch{c_1}{p}{c_2} \mid \ifstmt{\varphi}{c_1}{c_2} \mid \while{\varphi}{c} \]
Here, $e \in \mathit{Exp}_D$ is an expression, $\varphi \in \mathit{Exp}_{\mathbb{B}}$ is a boolean expression, and $p \in \mathit{Exp}_{[0, 1]}$ is an expression representing a probability.
The meaning of each statement is standard (see e.g.~\cite{FengOOPSLA2023}).
For example, $\pbranch{c_1}{p}{c_2}$ is the probabilistic branching that executes $c_1$ with probability $p$ and $c_2$ with probability $1 - p$.
We see that various properties and quantities on a program \( c \) can be expressed as solutions of equation systems.

\subsubsection{Weakest Precondition Transformer}
The \emph{weakest preexpectation transformer}~\cite{McIver2005} is studied as a probabilistic counterpart of the weakest precondition transformer.
Assume a program \( c \) using variables \( x_1,\dots,x_n \) with type \( x_i \colon D_i \) and let \( D = \prod_{i = 1}^n D_i \).
The weakest preexpectation transformer $\mathrm{wp}[c] : \mathbb{E}(D) \to \mathbb{E}(D)$ of the program $c$ takes a postexpectation $f \in \mathbb{E}(D)$ and returns the weakest preexpectation, which is the expected value of $f$ after executing $c$.
The weakest preexpectation for a while loop $\mathrm{wp}[\while{\varphi}{c}](f) = \mu \Phi$ is the least fixed point of $\Phi(X) \coloneqq \ifexpr{\interpret{\varphi}}{\mathrm{wp}[c](X)}{f}$.
The concrete definition of $\mathrm{wp}[c]$
is found in \referappendix{Appendix}{A}{appx:wp-formally}.

It is not difficult to see that the weakest preexpectation is definable as the solution of a fixed-point equation system.
In the definition of $\mathrm{wp}[c]$, pre- and post-expectations are expressed as nested least fixed-points, for instance $F_0(\mu X_1. F_1(X_1, \mu X_2. F_2(X_2, \dots)))$.
Such nested least fixed-points can be equivalently represented as a pair $(F, E)$ where $F$ is a quantitative formula expressed in terms of the solution of the least fixed-point equations in $E$, for example $F = F_0(X_1)$ and $E = \{ X_1 =_{\mu} F_1(X_1, X_2), X_2 =_{\mu} F_2(X_2, \dots), \dots \}$ corresponding to the above example.
Based on this observation, we reformulate the definition of $\mathrm{wp}[c]$ in terms of fixed-point equation systems.
Specifically, we define a translation $\mathrm{wp}'[c]$ that takes and returns a pair of a quantitative formula $F$ and a set of least fixed-point equations $E$.
For example, we have $\mathrm{wp}'[\while{\varphi}{c}](F, E) = (X(\widetilde{x}) , E' \cup \{ X(\widetilde{x}) =_{\mu} \ifexpr{\varphi}{F'}{F} \})$ when \( (F', E') = \mathrm{wp}'[c](X(\widetilde{x}), E) \), which corresponds to  $\mathrm{wp}[\while{\varphi}{c}](f) = \mu \Phi$ described above.
See \referappendix{Appendix}{A}{appx:wp-formally} for the complete list of rules.
For each non-negative real term (i.e.\ postexpectation) $t$, we have $\mathrm{wp}[c](t) = \interpret{F}(\mu \interpret{E})$ where $(F, E) = \mathrm{wp}'[c](t, \emptyset)$.
That is, the weakest preexpectation $\mathrm{wp}[c](t)$ is the \( W \) component of the solution of \( E \cup \{ W(\tilde{x}) =_{\mu} F \} \) (where \( W \) is a fresh quantitative predicate variable and \( \tilde{x} = (x_1,\dots,x_n) \) is the list of variables used in \( c \)).

\begin{example}\label{ex:biased-random-walk}
	Consider the following program of a biased random walk.
	\[ c_{\mathrm{rw}} \qquad\equiv\qquad \while{x > 0}{\pbranch{\assign{x}{x - 1}}{1/3}{\assign{x}{x + 1}}} \]
	For any postexpectation $f : \mathbb{Z} \to [0, 1]$, the weakest preexpectation $\mathrm{wp}[c_{\mathrm{rw}}](f)$ is given as the least solution $X : \mathbb{Z} \to [0, 1]$ for the following fixed-point equation:
	\begin{equation}
		X(x) \quad=_{\mu}\quad \ifexpr{x > 0}{\tfrac{1}{3} X(x - 1) + \tfrac{2}{3} X(x + 1)}{f(x)}
		\label{eq:biased-random-walk}
	\end{equation}
	When $f = \probone$ is the constant function, then $\mathrm{wp}[c_{\mathrm{rw}}](f)$ gives the termination probability.
	\qed
\end{example}

\subsubsection{Expected Cost Analysis}
The expected running time or other expected costs of a program are also properties of interest.
Here, we focus on expected running time.
To define the notion of running time, we add a new command \( \tickstmt \) to the programming language:
$c \coloneqq \cdots \mid \tickstmt$.
This command increments the running time by one (and other commands do not change the time).

The expected running time is characterized by the \emph{expected runtime transformer} $\mathrm{ert}[c]$~\cite{KaminskiJACM2018}.
In the current setting, \( \mathrm{ert} \) is similar to \( \mathrm{wp} \) but \( \mathrm{ert}[\tickstmt](f) := f + 1 \).

The expected runtime transformer can be straightforwardly expressed as the solution of a fixed-point equation system by the same reason as \( \mathrm{wp} \).
The translation $\mathrm{ert}'[c]$ is defined almost in the same way as $\mathrm{wp}'[c]$ except that $\mathrm{ert}'[c]$ is extended for the tick operator:
$\mathrm{ert}'[\tickstmt](F, E) \coloneqq (F + 1, E)$.

\begin{example}\label{ex:biased-random-walk-ert-definition}
	Consider the lower bound of the expected runtime of the biased random walk.
	\begin{equation}
		c_{\mathrm{rw}'} \quad=\quad \while{x > 0}{\tickstmt;\ (\pbranch{\assign{x}{x - 1}}{2/3}{\assign{x}{x + 1}})}
		\label{eq:biased-random-walk-ert-program}
	\end{equation}
	The expected running time is given as the least solution for the following equation.
	\begin{equation}
		X(x : \mathbf{int}) \quad=_{\mu}\quad \ifexpr{x > 0}{\tfrac{2}{3} X(x - 1) + \tfrac{1}{3} X(x + 1) + 1}{0}
		\label{eq:biased-random-walk-ert}
	\end{equation}
	We write \( E'_{\mathrm{rw}} \) for the equation system consisting only of the above equation.
	\qed
\end{example}

\subsubsection{Cost Moment Analysis}
Higher moments of the running time (i.e.\ the expected value of $k$-th power of the running time) of a probabilistic program are studied in~\cite{KuraTACAS2019,AguirreMSCS2022} as an extension of the expected runtime transformer.

We illustrate the logical representation of the second moment of the runtime.
The translation $\mathrm{rt}^{(2)}[c]$ takes and returns a tuple of \emph{two} quantitative formulas $F_1, F_2$ and a set of equations $E$.
Intuitively, $F_1$ is the first moment of the runtime and $F_2$ is the second moment.
In most cases, the translation is defined by applying $\mathrm{ert}'[c]$ component-wise.
For example, the translation for $\assign{x}{e}$ is defined as follows:
\[ \mathrm{rt}^{(2)}[\assign{x}{e}]((F_1, F_2), E) \quad\coloneqq\quad ((F_1[e/x], F_2[e/x]), E) \]
The only exception for the component-wise definition is the case for the $\tickstmt$ command, which is defined by the binomial expansion.
\[ \mathrm{rt}^{(2)}[\tickstmt]((F_1, F_2), E) \quad\coloneqq\quad ((F_1 + 1, F_2 + 2 F_1 + 1), E) \]

\begin{example}\label{ex:equation-systems-cost-moment}
	Consider the second moment of the runtime of the biased random walk~\eqref{eq:biased-random-walk-ert-program}.
	We obtain the following equation system by applying the above translation:
	\begin{align}
		X_1(x) \quad&=_{\mu}\quad \ifexpr{x > 0}{\tfrac{2}{3} X_1(x - 1) + \tfrac{1}{3} X_1(x + 1) + 1}{0} \\
		X_2(x) \quad&=_{\mu}\quad \ifexpr{x > 0}{\tfrac{2}{3} X_2(x - 1) + \tfrac{1}{3} X_2(x + 1) + 2 \left( \tfrac{2}{3} X_1(x - 1) + \tfrac{1}{3} X_1(x + 1) \right) + 1}{0}
	\end{align}
	The \( X_1 \) and \( X_2 \) components of the solution give the first and the second moment of the runtime, respectively.
	Note that the right-hand-side of \( X_2 \) contains a non-probabilistic weighted sum (i.e., the sum \( \tfrac{2}{3} + \tfrac{1}{3} + \tfrac{4}{3} + \tfrac{2}{3} \) of coefficients for quantitative predicate variables \( X_1 \) and \( X_2 \) exceeds \( 1 \)).
	\qed
\end{example}

\subsubsection{Soft/Hard Conditioning}\label{sec:soft-hard-conditioning}
Extensions of the weakest preexpectation transformer for soft/hard conditioning are also studied~\cite{SzymczakSETSS2020,OlmedoTOPLAS2018}.
We consider only soft conditioning for simplicity and provide a translation to equation systems.
Here, pGCL is extended with $\scoring{e}$, which models soft conditioning.
The statement $\scoring{e}$ scales the probability of the current execution trace by multiplying $e$ where $e$ is $[0, 1]$-valued expression.
Hard conditioning is a special case of soft conditioning where $e$ is given by the Iverson bracket $[\varphi]$ for a boolean expression $\varphi$.
The conditional weakest preexpectation is defined as $\mathrm{cwp}[c](f) = \mathrm{cwp}_1[c](f) / (\probone - \mathrm{cwp}_2[c](\probzero))$ where $\mathrm{cwp}_1[c]$ and $\mathrm{cwp}_2[c]$ are defined as least fixed points.
The corresponding translation $\mathrm{cwp}'_1[c](F, E)$ and $\mathrm{cwp}'_2[c](F, E)$ to equation systems is defined similarly to the translation of the weakest preexpectation with the following modification:
\[ \mathrm{cwp}'_1[\scoring{e}](F, E) \coloneqq (e \cdot F_1, E_1) \qquad \mathrm{cwp}'_2[\scoring{e}](F, E) \coloneqq ((1 - e) + e \cdot F_2, E_2) \]
where $(F_i, E_i) = \mathrm{cwp}'_i[c](F, E)$ for $i = 1, 2$.
As a side note, $\mathrm{cwp}_2$ is originally defined as the greatest fixed point, but here we define it as the least fixed point by applying $1 - ({-})$.

\section{Uniqueness of Fixed Points}
\label{sec:unique-fixed-points-unbounded}
As suggested in Introduction, our approach provides a lower bound \( \eta \) for \( \mu \interpret{E} \) based on two components:
a ranking argument establishing \( \mu \interpret{E} = \nu \interpret{E} \), and an invariant \( \eta \) of \( E \) (i.e., \( \eta \le \interpret{E}(\eta) \)).
This section formally describes the ranking argument and proves its correctness.

\subsection{Restricting the Domain and the Codomain}\label{sec:restriction-by-prefixed-point}
As we saw in Introduction, the least and greatest fixed-points do not necessarily coincide even if the underlying system is almost-surely terminating.
Consider, for example, \( E'_{\mathrm{rw}} \) in Example~\ref{ex:biased-random-walk-ert-definition} and let \( K = \interpret{E'_{\mathrm{rw}}} \).
The fixed-point equation \( E'_{\mathrm{rw}} \) corresponds to the expected runtime of a random work biased toward \( 0 \), so the underlying program is almost-surely terminating.
However, it has at least two fixed points: $X_1(x) = \ifexpr{x > 0}{3 x}{0}$ and $X_2(x) = \ifexpr{x > 0}{3x + (2^x-1)}{0}$.
Its greatest fixed point is \( \inf_n K^n(\infty) \), which is \( X_3(x) = \ifexpr{x > 0}{\infty}{0} \).

Our idea for overcoming this issue is to replace \( \mathbf{\infty} \), the maximum element in \( \mathbb{E}(D) \), with another well-behaved element \( u \in \mathbb{E}(D) \) and consider \( \inf_n K^n(u) \) instead of \( \inf_n K^n(\mathbf{\infty}) \).
Of course, not just any choice of \( u \) will work.
In particular, since the correctness proof of our reasoning principle relies on Knaster-Tarski Theorem for the greatest fixed point, \( \inf_n K^n(u) \) must, in some sense, behave like the iterative computation of a greatest fixed point.

Fortunately, the condition for \( \inf_n K^n(u) \) to be seen as the computation of a greatest fixed point turns out to be remarkably simple: it is just \( K(u) \le u \).
For \( u \in \mathbb{E}(D) \), let \( \mathbb{E}_{\le u}(D) \) be the sub-poset given by \( \{ \eta \in \mathbb{E}(D) \mid \eta \le u \} \).
\begin{lemma}\label{lem:restrict-domain-codomain}
	Let \( K : \mathbb{E}(D) \to \mathbb{E}(D) \) be a Scott continuous and cocontinuous function.
	If \( u \) is a prefixed point of \( K \), i.e., \( K(u) \le u \), then the restriction \( \restrictby{K}{u} \) of \( K \) to \( \mathbb{E}_{\le u}(D) \) is well-defined and \( \inf_n K^n(u) = \nu \restrictby{K}{u} \).
\end{lemma}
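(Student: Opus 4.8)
The plan is to recognize the statement as an instance of the dual Kleene fixed-point theorem applied to the sub-$\omega$ccpo $\mathbb{E}_{\le u}(D)$, so the work reduces to checking that this subposet carries the required structure and that $K$ restricts to it compatibly.

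First I would establish well-definedness of $\restrictby{K}{u}$. Since $K$ is Scott continuous, it is monotone; combined with the prefixed-point hypothesis $K(u) \le u$, any $\eta \le u$ satisfies $K(\eta) \le K(u) \le u$, so $K(\eta) \in \mathbb{E}_{\le u}(D)$. Thus $K$ maps $\mathbb{E}_{\le u}(D)$ into itself, and $\restrictby{K}{u}$ is the evident corestriction. A consequence worth recording immediately is that iterating $K$ on $u$ produces a descending chain: from $K(u) \le u$ and monotonicity one gets $K^{n+1}(u) \le K^n(u)$ by induction, so $\lim_n K^n(u)$ is exactly the greatest lower bound $\inf_n K^n(u)$, which exists because $\mathbb{E}(D)$ is a complete lattice (computed pointwise).

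Next I would verify that $\mathbb{E}_{\le u}(D)$ is an $\omega$ccpo with top element $u$. The top is clearly $u$ itself. For cocompleteness, given a descending chain in $\mathbb{E}_{\le u}(D)$, its greatest lower bound in $\mathbb{E}(D)$ is again bounded above by $u$ (each term is $\le u$, hence so is their infimum), so infima of descending chains stay inside $\mathbb{E}_{\le u}(D)$ and coincide with those computed in the ambient $\mathbb{E}(D)$. This coincidence of infima is the crucial point: because $K$ is Scott cocontinuous on $\mathbb{E}(D)$ and the relevant infima are the same whether taken in the subposet or the whole space, the restriction $\restrictby{K}{u}$ inherits Scott cocontinuity as a map on $\mathbb{E}_{\le u}(D)$.

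With these ingredients in place, the dual of Kleene's fixed-point theorem applies to the Scott-cocontinuous endofunction $\restrictby{K}{u}$ on the $\omega$ccpo $\mathbb{E}_{\le u}(D)$ with top $u$, giving
\begin{equation*}
    \nu \restrictby{K}{u} \quad=\quad \inf_n (\restrictby{K}{u})^n(u).
\end{equation*}
Since $\restrictby{K}{u}$ agrees with $K$ on $\mathbb{E}_{\le u}(D)$ and all iterates $K^n(u)$ remain in this subposet, we have $(\restrictby{K}{u})^n(u) = K^n(u)$, so the right-hand side equals $\inf_n K^n(u) = \lim_n K^n(u)$, which closes the argument. The main obstacle is the one isolated above, namely confirming that greatest lower bounds in $\mathbb{E}_{\le u}(D)$ agree with those in $\mathbb{E}(D)$; this is what lets cocontinuity of $K$ transfer verbatim to the restriction and is the hinge on which the application of the dual Kleene theorem rests. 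Everything else is routine once well-definedness and the descending-chain observation are in hand.
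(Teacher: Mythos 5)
Your proposal is correct and follows essentially the same route as the paper's proof: well-definedness from monotonicity together with $K(u)\le u$, and then the identification $\lim_n K^n(u) = \nu \restrictby{K}{u}$ via the dual Kleene theorem applied to the restricted domain with top $u$. The paper states this in two sentences and leaves implicit the checks you spell out (that $\mathbb{E}_{\le u}(D)$ is an $\omega$ccpo whose infima agree with those of $\mathbb{E}(D)$, so cocontinuity transfers), so your write-up is simply a more detailed version of the same argument.
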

\begin{proof}
	By monotonicity, if $\eta \le u$, then $K(\eta) \le K(u) \le u$.
	So \( \restrictby{K}{u} \) is well-defined.
	We have \( \inf_n K^n(u) = \nu \restrictby{K}{u} \) since \( u \) is the maximum element of the restricted domain \( \mathbb{E}_{\le u}(D) \).
\end{proof}

\subsection{Ranking Supermartingales for Unique Fixed Points}\label{sec:generalised-ranking-supermartingales}
The remaining challenge is how to show that \( \inf_n K^n(u) = \sup_n K^n(\probzero) \) where \( u \) is a prefixed point of \( K \) and \( \probzero \) is the zero constant function.
Given an affine function \( K \), we write its linear part as \( \Diff K \colon \mathbb{E}(D) \to \mathbb{E}(D) \), which is defined by
$(\Diff K)(\eta) \coloneqq K(\eta) - K(\probzero)$.
Note that if \( u \) is a prefixed point of \( K \), then \( u \) is also a prefixed point of \( \Diff K \) since \( \Diff K(u) \le K(u) \le u \).
\begin{lemma}\label{lem:affine-k-induces-linear-knt}
	If\/ \( K \) is affine,	then \( \nonterm{K} \) is linear, i.e., \( \nonterm{K}(\eta_1 + \eta_2) = \nonterm{K}(\eta_1) + \nonterm{K}(\eta_2) \) and \( \nonterm{K}(\alpha \eta) = \alpha \nonterm{K}(\eta) \) for every \( \eta,\eta_1,\eta_2 \in \mathbb{E}(D) \) and \( \alpha \in [0,\infty) \).
	\qed
\end{lemma}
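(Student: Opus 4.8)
The plan is to view $\nonterm{K}(\eta) = K(\eta) - K(\probzero)$ as the \emph{linear part} of the affine map $K$ and to verify the two constituents of linearity---additivity and positive homogeneity---one at a time, each time feeding a single instance of the affineness identity of Definition~\ref{def:affineness} into the computation. All arguments are pointwise: I fix $v$ and abbreviate $a_i = K(\eta_i)(v)$ and $b = K(\probzero)(v)$. Since $\probzero \le \eta_i$ and $K$ is monotone (monotonicity is itself a consequence of affineness: for $\eta \le \eta'$ one writes $\eta'$ as a convex combination $\alpha \eta + (1-\alpha)\zeta$ with $\zeta \in \mathbb{E}(D)$, so that affineness gives $K(\eta') \ge \alpha K(\eta)$, and letting $\alpha \to 1$ yields $K(\eta) \le K(\eta')$; alternatively, all affine $K$ of interest are Scott continuous by Lemma~\ref{lem:continuity-of-interpretation}, hence monotone), we have $b \le a_i$, so each subtraction $a_i - b$ is well-defined in the sense of Notation~\ref{notation:extended-real} and $\nonterm{K}$ is well-defined.

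First I would prove homogeneity for $\alpha \in [0,1]$: instantiating Definition~\ref{def:affineness} with $\eta_2 = \probzero$ gives $K(\alpha\eta) = \alpha K(\eta) + (1-\alpha)K(\probzero)$, and subtracting $K(\probzero)$ yields $\nonterm{K}(\alpha\eta) = \alpha\,\nonterm{K}(\eta)$. For additivity I would apply affineness at $\alpha = \tfrac12$ to obtain $K(\tfrac12\eta_1 + \tfrac12\eta_2) = \tfrac12 K(\eta_1) + \tfrac12 K(\eta_2)$; subtracting $K(\probzero)$ and invoking the $\tfrac12$-homogeneity just proved turns this into $\tfrac12\,\nonterm{K}(\eta_1 + \eta_2) = \tfrac12\big(\nonterm{K}(\eta_1) + \nonterm{K}(\eta_2)\big)$, and since multiplication by $\tfrac12$ is injective on $[0,\infty]$ I may cancel the common factor. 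Finally, homogeneity for arbitrary $\alpha \in [0,\infty)$ follows by writing $\alpha = n + r$ with $n \in \mathbb{N}$ and $r \in [0,1)$: $n$-fold additivity gives $\nonterm{K}(n\eta) = n\,\nonterm{K}(\eta)$, and combining with the fractional part via the $[0,1]$ case gives $\nonterm{K}(\alpha\eta) = (n+r)\,\nonterm{K}(\eta) = \alpha\,\nonterm{K}(\eta)$.

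The main obstacle is not the algebra of affine maps but the extended-real arithmetic it rests on. Because subtraction obeys $\infty - \infty = 0$ and $\infty - z = \infty$ for $z < \infty$, the cancellations that drive the argument---namely $\alpha a + (1-\alpha)b - b = \alpha(a-b)$ in the homogeneity step and $\tfrac12 a_1 + \tfrac12 a_2 - b = \tfrac12(a_1-b) + \tfrac12(a_2-b)$ in the additivity step---are not formal identities and must be re-established by a short case analysis on whether each of $a_1, a_2, b$ is finite or $\infty$. I expect every case to close using $b \le a_i$ together with the conventions $0 \cdot \infty = 0$ and $r \cdot \infty = \infty$ for $r \ne 0$, but this bookkeeping, rather than any conceptual difficulty, is where the actual effort of the proof is concentrated.
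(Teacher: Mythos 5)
Your proposal is correct and follows essentially the same route as the paper's proof in the appendix: homogeneity is obtained from the affineness identity instantiated at $\probzero$, additivity from the $\alpha=\tfrac12$ instance combined with homogeneity, and the extended-real cancellations (the $\infty-\infty=0$ cases) are handled by the same kind of finite/infinite case analysis, using $K(\probzero)\le K(\eta_i)$. The only differences are cosmetic --- you cancel a factor of $\tfrac12$ where the paper scales the arguments to $2\eta_i$, and you extend homogeneity to $\alpha>1$ via $\alpha=n+r$ and additivity where the paper applies the $(0,1)$ case to $\beta=1/\alpha$ --- so no further changes are needed.
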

A sufficient condition for \( \inf_n K^n(u) = \sup_n K^n(\probzero) \) is the existence of \emph{\( u \)-ranking supermartingale}.
\begin{definition}
	We say $r : D \to [0, \infty)$ is a \emph{$u$-ranking supermartingale}\footnote{This is an abuse of terminology, as $r$ is not necessarily related to a stochastic process in general.}
	w.r.t.~$\nonterm{K}$ if it satisfies
	\begin{equation}
		\nonterm{K}(r) + u \le r.
		\label{eq:generalised-supermartingale}
	\end{equation}
\end{definition}

\begin{remark}
	We explain the connection between a ranking supermartingale for program termination~\cite{ChakarovCAV2013} and \( u \)-ranking supermartingale in the above definition.
	Consider the program \( c_{\mathrm{rw}} \) in Example~\ref{ex:biased-random-walk-ert-definition}.
	A \emph{ranking supermartingale} for \( c_{\mathrm{rw'}} \) is a function \( r \colon \mathbb{N} \to [0, \infty) \) such that \( \tfrac{2}{3} r(x-1) + \tfrac{1}{3} r(x+1) \le r(x) - 1 \) for every \( n > 0 \).
	Let \( K'_{\mathrm{rw}} = \interpret{E'_{\mathrm{rw}}} \).
	Then
	\begin{equation*}
		(\Diff K'_{\mathrm{rw}})(X)(x) \quad=\quad \ifexpr{x > 0}{\tfrac{2}{3} X(x - 1) + \tfrac{1}{3} X(x + 1)}{0},
	\end{equation*}
	so \( (\Diff K'_{\mathrm{rw}})(r) + \probone \le r \) if and only if \( r \) is a ranking supermartingale for \( c_{\mathrm{rw'}} \).
	This observation can be generalized: when \( K \) is the expected runtime transformer or weakest preexpectation transformer of a program, a \( \probone \)-ranking supermartingale for \( K \) coincides with a ranking supermartingale for the program.
	The above definition is a generalization applicable to any kind of transformer \( K \) and an upper-bound \( u \) of the least fixed point.
\end{remark}

It was a pleasant surprise to us that this simple attempt actually works.

\begin{proposition}\label{prop:diff-gfp-lfp-for-affine-k}
	Let \( K \colon \mathbb{E}(D) \to \mathbb{E}(D) \) be an affine function on \( \mathbb{E}(D) \).
	Then, for every \( n \ge 1 \), we have \( K^{n}(u) - K^n(\probzero) = (\nonterm{K})^n(u) \).
\end{proposition}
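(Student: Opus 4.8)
The plan is to prove the equivalent \emph{additive} identity
\[ K^n(u) \;=\; K^n(\probzero) + (\nonterm{K})^n(u) \]
for every $n \ge 1$, and then read off the stated equation using the subtraction convention of Notation~\ref{notation:extended-real}. The engine of the argument is the decomposition of the affine map $K$ into its constant part $K(\probzero)$ and its linear part $\nonterm{K}$. Since $K$ is monotone, $\probzero \le \eta$ gives $K(\probzero) \le K(\eta)$, so $\nonterm{K}(\eta) = K(\eta) - K(\probzero)$ is well defined, and adding $K(\probzero)$ back recovers the \emph{reconstruction identity} $K(\eta) = K(\probzero) + \nonterm{K}(\eta)$. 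I would first verify that this identity holds pointwise in $[0,\infty]$: it is immediate wherever $K(\probzero)$ is finite, and both sides equal $\infty$ wherever $K(\probzero) = \infty$.

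With the reconstruction identity available, the induction is short. The base case $n = 1$ is the identity applied to $\eta = u$. For the inductive step, assuming $K^n(u) = K^n(\probzero) + (\nonterm{K})^n(u)$, I would compute
\[ K^{n+1}(u) = K\bigl(K^n(\probzero) + (\nonterm{K})^n(u)\bigr) = K(\probzero) + \nonterm{K}\bigl(K^n(\probzero)\bigr) + \nonterm{K}\bigl((\nonterm{K})^n(u)\bigr), \]
where the first step uses the induction hypothesis and the second applies the reconstruction identity and then the additivity of $\nonterm{K}$ (Lemma~\ref{lem:affine-k-induces-linear-knt}). Regrouping $K(\probzero) + \nonterm{K}(K^n(\probzero))$ back into $K(K^n(\probzero)) = K^{n+1}(\probzero)$ by the reconstruction identity, and observing $\nonterm{K}((\nonterm{K})^n(u)) = (\nonterm{K})^{n+1}(u)$, yields the additive identity at $n+1$. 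This single use of additivity of the linear part is the algebraic heart of the proof.

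Finally, I would pass from the additive identity to the subtractive statement: by definition $K^n(u) - K^n(\probzero)$ is the least $z$ with $K^n(u) = K^n(\probzero) + z$, and the additive identity exhibits $(\nonterm{K})^n(u)$ as one such $z$. The delicate point — which I expect to be the main obstacle — is the behaviour at $\infty$. The convention $\infty - \infty = 0$ means the recovery of $(\nonterm{K})^n(u)$ could only fail at points where $K^n(\probzero) = \infty$ yet $(\nonterm{K})^n(u) > 0$. I would rule such points out using finiteness of $K^n(\probzero)$: the constant part $K(\probzero)$ is finite and $\nonterm{K}$ has finite coefficients, so each finite iterate $K^n(\probzero)$ stays finite, and in the remaining cases (with $K^n(\probzero)$ finite) the subtraction returns exactly $(\nonterm{K})^n(u)$ whether this value is finite or $\infty$. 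Besides this bookkeeping, the only other step needing attention is justifying the reconstruction identity uniformly over $[0,\infty]$.
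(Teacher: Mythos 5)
Your proof is correct and follows essentially the same route as the paper's: an induction on \( n \) that establishes the additive identity \( K^n(u) = K^n(\probzero) + (\nonterm{K})^n(u) \) via the decomposition \( K(\eta) = K(\probzero) + \nonterm{K}(\eta) \) and the additivity of \( \nonterm{K} \) from Lemma~\ref{lem:affine-k-induces-linear-knt}. The only difference is your extra care in recovering the subtractive form under the convention \( \infty - \infty = 0 \) — a point the paper silently elides; your finiteness justification there is the one slightly informal step for an abstract affine \( K \), but in the setting where the proposition is actually used (Corollary~\ref{cor:unique-fixed-point-by-ranking-supermartingale}, where \( u \colon D \to [0,\infty) \)) all iterates are bounded by the finite-valued \( u \), so the issue disappears.
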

\begin{proof}
	By induction on $n$.
	The case \( n = 1 \) follows from the definition.
	If $K^n(\eta) = (\nonterm{K})^n(\eta) + K^n(\probzero)$, by the linearity of $\nonterm{K}$,
	\begin{align}
		K^{n + 1}(\eta) &= (\nonterm{K})(K^n(\eta)) + K(\probzero) \\
		&= \nonterm{K}((\nonterm{K})^n(\eta)) + (\nonterm{K})(K^n(\probzero)) + K(\probzero) \\
		&= \nonterm{K}^{n+1}(\eta) + K^{n + 1}(\probzero)
		\qedhere
	\end{align}
\end{proof}

\begin{theorem}\label{thm:unique-fixed-point-by-ranking-supermartingale}
	Let \( K : \mathbb{E}(D) \to \mathbb{E}(D) \) be an affine function and \( u : D \to [0, \infty) \).
	If \( \nonterm{K} \) has a \( u \)-ranking supermartingale, then \( \lim_n (\nonterm{K})^n(u) = \probzero \) (pointwise).
\end{theorem}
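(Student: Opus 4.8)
The plan is to abbreviate $L := \nonterm{K}$ and to extract from the defining inequality $L(r) + u \le r$ a telescoping estimate that bounds the partial sums $\sum_{k=0}^{n-1} L^k(u)$ by the finite function $r$; this forces each term $L^n(u)$ to vanish in the limit, and non-negativity then upgrades the limit to the infimum. First I would record the two properties of $L$ that drive everything. By Lemma~\ref{lem:affine-k-induces-linear-knt}, $L$ is linear, and since its codomain lies in $\mathbb{E}(D)$ (i.e.\ $L$ is $[0,\infty]$-valued), additivity already yields monotonicity: for $\eta_1 \le \eta_2$ one writes $\eta_2 = \eta_1 + (\eta_2 - \eta_1)$ with $\eta_2 - \eta_1 \ge \probzero$, whence $L(\eta_2) = L(\eta_1) + L(\eta_2 - \eta_1) \ge L(\eta_1)$.

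The second step establishes that the iterates of $r$ stay finite, which is what makes the extended-real subtractions below harmless. Because $u \ge \probzero$, the supermartingale condition $L(r) + u \le r$ already gives $L(r) \le r$, so $r$ is a prefixed point of $L$. By monotonicity and induction, $L^n(r) \le r$ for every $n$; in particular each $L^n(r)$ is finite-valued, as $r$ takes values in $[0,\infty)$, and the sequence $\bigl(L^n(r)(d)\bigr)_n$ is decreasing at every $d$.

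Next I would apply the monotone linear operator $L^k$ to $L(r) + u \le r$. Additivity gives $L^k\!\bigl(L(r) + u\bigr) = L^{k+1}(r) + L^k(u)$, so monotonicity yields $L^{k+1}(r) + L^k(u) \le L^k(r)$. Since $L^k(r)$ is finite, this rearranges pointwise to $L^k(u) \le L^k(r) - L^{k+1}(r)$, and summing over $k = 0,\dots,n-1$ telescopes to
\begin{equation*}
  \sum_{k=0}^{n-1} L^k(u) \;\le\; r - L^n(r) \;\le\; r.
\end{equation*}
Evaluating at an arbitrary $d \in D$, the partial sums $\sum_{k=0}^{n-1} L^k(u)(d)$ are bounded above by the finite number $r(d)$, so the series $\sum_k L^k(u)(d)$ of non-negative terms converges.

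Finally, convergence of the series forces $L^k(u)(d) \to 0$, and since every term is non-negative this gives $\inf_n L^n(u)(d) = 0$ directly — the infimum of a non-negative sequence tending to $0$ is $0$, with no monotonicity of $\bigl(L^n(u)\bigr)_n$ required. As $d$ was arbitrary and infima in $\mathbb{E}(D)$ are computed pointwise, I conclude $\inf_n (\nonterm{K})^n(u) = \probzero$. I expect the main obstacle to be the bookkeeping around extended-real arithmetic: the telescoping identity is only legitimate once every $L^n(r)$ is known to be finite, so the prefixed-point observation of the second step is the genuine crux. A secondary subtlety is that the chain $\bigl(L^n(u)\bigr)_n$ need not be decreasing, so one must argue from the ordinary limit of the non-negative terms rather than from a monotone-limit (descending-chain) characterization of the infimum.
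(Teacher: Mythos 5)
Your proposal is correct and follows essentially the same route as the paper: both iterate the supermartingale inequality to obtain \( r \ge (\nonterm{K})^n(r) + \sum_{k=0}^{n-1}(\nonterm{K})^k(u) \), deduce that the non-negative series \( \sum_k (\nonterm{K})^k(u)(d) \) is bounded by the finite value \( r(d) \), and conclude that its terms tend to \( 0 \). The only differences are presentational — you argue directly via telescoping (carefully justifying the extended-real subtraction through the finiteness of \( (\nonterm{K})^n(r) \)) where the paper argues by contradiction, and you make explicit the final passage from the limit to the infimum, which the paper leaves implicit.
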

\begin{proof}
	By contradiction.
	Assume $\lim_{n \to \infty} (\nonterm{K})^n(u)(d) \neq 0$\footnote{Here, we extend the standard topology of \( [0, \infty) \) to \( [0, \infty] \) by treating \( \infty \) as a point at infinity (i.e. a neighbourhood base for \( \infty \) is given by intervals of the form \( [a, \infty] \) for $a \in [0, \infty)$).
	This ensures the well-definedness of the limit \( \lim_n (\nonterm{K})^n(u)(d) \) even when the sequence \( \{ (\nonterm{K})^n(u)(d) \}_n \) contains \( \infty \).
	Alternatively, one may argue by a case analysis on whether \( \{ (\nonterm{K})^n(u)(d) \}_n \) contains \( \infty \) and show that in either case, the sum \( \sum_{n = 0}^{\infty} (\nonterm{K})^n(u)(d) \) is infinite.} for some $d \in D$.
	Then, there exists $\epsilon > 0$ such that for infinitely many $n$, we have $(\nonterm{K})^n(u)(d) > \epsilon$.
	So \( \sum_{n = 0}^{\infty} (\nonterm{K})^n(u)(d) = \infty \).
	By monotonicity and linearity (Lemma~\ref{lem:affine-k-induces-linear-knt}) of \( \nonterm{K} \), we have the following inequality.
	\begin{equation*}
		\textstyle
		r\ \ge\ \nonterm{K}(r) + u\ \ge\ (\nonterm{K})^2(r) + (\nonterm{K})(u) + u\ \ge \dots \ge\ (\nonterm{K})^n(r) + \sum_{i = 0}^{n - 1} (\nonterm{K})^i(u)
	\end{equation*}
	Taking the limit as $n \to \infty$, we obtain $r \ge \sum_{n = 0}^{\infty} \nonterm{K}^n(u)$.
	However, this leads to a contradiction:
	$\infty > r(x) \ge \sum_{n = 0}^{\infty} (\nonterm{K})^n(u)(x) = \infty$.
\end{proof}

\begin{corollary}\label{cor:unique-fixed-point-by-ranking-supermartingale}
	Let $u : D \to [0, \infty)$ be a prefixed point of an affine function $K$.
	If\/ \( \nonterm{K} \) has a \( u \)-ranking supermartingale,
	the restriction \( \restrictby{K}{u} \)
	has a unique fixed point.
\end{corollary}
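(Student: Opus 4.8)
The plan is to exhibit a fixed point of \( \restrictby{K}{u} \) and then show it is the only one by sandwiching every fixed point between the two Kleene iterations from \( \probzero \) and from \( u \), whose common limit I will argue is forced by the \( u \)-ranking supermartingale. First I would record the structural facts. Since \( u \) is a prefixed point of the affine (hence monotone) map \( K \), Lemma~\ref{lem:restrict-domain-codomain} tells us that \( \restrictby{K}{u} \) is a well-defined endofunction of \( \mathbb{E}_{\le u}(D) \); this poset is an \( \omega \)cpo with bottom \( \probzero \) and an \( \omega \)ccpo with top \( u \), and \( K \) remains Scott continuous and cocontinuous on it. Consequently \( \restrictby{K}{u} \) has a least fixed point \( \mu \restrictby{K}{u} = \lim_n K^n(\probzero) \) by Kleene's theorem and a greatest fixed point \( \nu \restrictby{K}{u} = \lim_n K^n(u) \) by Lemma~\ref{lem:restrict-domain-codomain}; in particular a fixed point exists, so only uniqueness remains.

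For uniqueness, note that any fixed point \( \eta \) lies in \( \mathbb{E}_{\le u}(D) \), so \( \probzero \le \eta \le u \), and applying the monotone iterate \( K^n \) gives \( K^n(\probzero) \le \eta \le K^n(u) \) for every \( n \), i.e.\ \( \mu \restrictby{K}{u} \le \eta \le \nu \restrictby{K}{u} \). Hence it suffices to prove \( \mu \restrictby{K}{u} = \nu \restrictby{K}{u} \). The key identity comes from Proposition~\ref{prop:diff-gfp-lfp-for-affine-k}: for \( n \ge 1 \) we have \( K^n(u) - K^n(\probzero) = (\nonterm{K})^n(u) \), and since \( K^n(\probzero) \le K^n(u) \) the definition of subtraction in \( [0,\infty] \) (Notation~\ref{notation:extended-real}) yields the pointwise equality \( K^n(u) = K^n(\probzero) + (\nonterm{K})^n(u) \). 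Because \( \nonterm{K} \) admits a \( u \)-ranking supermartingale, Theorem~\ref{thm:unique-fixed-point-by-ranking-supermartingale} gives \( \inf_n (\nonterm{K})^n(u) = \probzero \), so the \emph{error term} vanishes in the limit.

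It then remains to take limits. Fixing \( d \in D \), the inequality \( \mu \restrictby{K}{u}(d) \le \nu \restrictby{K}{u}(d) \) is automatic, and from \( K^n(u)(d) = K^n(\probzero)(d) + (\nonterm{K})^n(u)(d) \le \mu \restrictby{K}{u}(d) + (\nonterm{K})^n(u)(d) \) I would take the infimum over \( n \) to conclude \( \nu \restrictby{K}{u}(d) \le \mu \restrictby{K}{u}(d) \). The hard part --- such as it is --- is the arithmetic of \( [0,\infty] \): because subtraction is only partial and addition is not cancellative there (\( \infty - \infty = 0 \)), pulling the infimum through the sum requires a case split on whether \( \mu \restrictby{K}{u}(d) \) is finite. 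When it is infinite, \( \nu \restrictby{K}{u}(d) = \infty = \mu \restrictby{K}{u}(d) \) is immediate from \( \mu \le \nu \); when it is finite, \( \inf_n \bigl( \mu \restrictby{K}{u}(d) + (\nonterm{K})^n(u)(d) \bigr) = \mu \restrictby{K}{u}(d) + \inf_n (\nonterm{K})^n(u)(d) = \mu \restrictby{K}{u}(d) \). Thus \( \mu \restrictby{K}{u} = \nu \restrictby{K}{u} \), every fixed point is squeezed onto this common value, and \( \restrictby{K}{u} \) has a unique fixed point.
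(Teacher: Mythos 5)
Your proposal is correct and follows essentially the same route as the paper: identify $\nu \restrictby{K}{u} = \lim_n K^n(u)$ via Lemma~\ref{lem:restrict-domain-codomain}, apply Proposition~\ref{prop:diff-gfp-lfp-for-affine-k} to write $K^n(u) = K^n(\probzero) + (\nonterm{K})^n(u)$, and invoke Theorem~\ref{thm:unique-fixed-point-by-ranking-supermartingale} to kill the error term in the limit. Your extra care in passing the infimum through the sum (the finiteness case split, which in fact never triggers since $u$ is $[0,\infty)$-valued and dominates everything) and the explicit sandwiching of an arbitrary fixed point are details the paper leaves implicit, not a different argument.
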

\begin{proof}
	Note that the greatest fixed-point of \( \restrictby{K}{u} \) is \( \inf_n K^n(u) \).
	By Proposition~\ref{prop:diff-gfp-lfp-for-affine-k}, we have \( K^n(u) = K^n(\probzero) + (\Diff K)^n(u) \),
	so \( \nu \restrictby{K}{u} = \mu \restrictby{K}{u} + \lim_n (\Diff K)^n(u) \)\footnote{Since $K^n(u)$ and $K^n(\probzero)$ are finite for every $n$, \( \inf_n K^n(u) \) and \( \sup_n K^n(\probzero) \) are equal to the topological limits \( \lim_n K^n(u) \) and \( \lim_n K^n(\probzero) \), respectively.} by taking the limit of \( n \to \infty \).
	We have \( \lim_n (\nonterm{K})^n(u) = 0 \) by Theorem~\ref{thm:unique-fixed-point-by-ranking-supermartingale}.
	Hence \( \nu \restrictby{K}{u} = \mu \restrictby{K}{u} \).
\end{proof}

\section{Reasoning Principle and Examples}\label{sec:reasoning-principle}
In this section, we propose proof rules for establishing lower bounds of least fixed points and examine its proving power.
The results from the previous sections give sufficient conditions for the uniqueness of fixed-points, which, together with the Knaster--Tarski theorem, yields rules for lower-bounding least fixed-points.
However, in general, the fixed points of a given equation system \( E \) may not be unique, and it may not be easy to find a suitable \( u \) such that \( E \) has a unique fixed-point in the restricted domain \( \mathbb{E}_{\le u}(D) \).
To address this difficulty, we first consider transforming the equation system \( E \) into a more tractable equation \( E' \) that under-approximates \( E \), and then apply the unique-fixed-point-based lower-bound proof rules to \( E' \).
This idea of under-approximating \( E \) can also be found in the literature on lower-bound verification, for example in guard-strengthening~\cite{FengOOPSLA2023}, $\gamma$-scaling~\cite{UrabeLICS2017,TakisakaATVA2018}, and stochastic invariants~\cite{ChatterjeePOPL2017,ChatterjeeCAV2022}.
Incorporating under-approximation of \( E \) also provides a clearer basis for comparison with these approaches, as discussed in more detail in \referappendix{Appendix}{D}{sec:simulating-existing-techniques-by-underapproximation}.

\subsection{Proof Rule}
\begin{theorem}\label{thm:unbounded-reasoning-principle}
	Let $E$ be a fixed-point equation system.
	We have \( \eta' \le \mu \interpret{E} \) if there exist:
	\begin{enumerate}
		\item[$(a)$] a fixed-point equation system \( E' \) such that \( \interpret{E'} \le \interpret{E} \),
		\item[$(b)$] a prefixed point \( u \) of\/ \( \interpret{E'} \), i.e., \( \interpret{E'}(u) \le u \),
		\item[$(c)$] a \( u \)-ranking supermartingale \( r : D \to [0, \infty) \) of\/ \( \nonterm{\interpret{E'}} \), and
		\item[$(d)$] an invariant \( \eta' \in \mathbb{E'}_{\le u}(D) \) of\/ \( \interpret{E'} \), i.e., \( \eta' \le \interpret{E'}(\eta') \) with \( \eta' \le u \).
	\end{enumerate}
\end{theorem}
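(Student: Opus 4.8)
The plan is to abbreviate $K := \interpret{E}$ and $K' := \interpret{E'}$, both of which are affine by Lemma~\ref{lem:affineness-of-interpretation} and Scott continuous and cocontinuous by Lemma~\ref{lem:continuity-of-interpretation}. The overall strategy is to establish the bound \emph{inside} the restricted domain $\mathbb{E}_{\le u}(D)$, where conditions $(b)$ and $(c)$ let me invoke uniqueness of the fixed point, and then to transport the resulting bound first to the global least fixed point $\mu K'$ of the under-approximation and finally to $\mu \interpret{E} = \mu K$.

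First I would use condition $(b)$ together with Lemma~\ref{lem:restrict-domain-codomain} to see that the restriction $\restrictby{K'}{u}$ to $\mathbb{E}_{\le u}(D)$ is well-defined, with greatest fixed point $\nu \restrictby{K'}{u} = \lim_n (K')^n(u)$. Before applying Corollary~\ref{cor:unique-fixed-point-by-ranking-supermartingale} I would record that condition $(c)$ forces $u$ to be finite-valued: since $\nonterm{K'}$ is monotone with $\nonterm{K'}(\probzero) = \probzero$, we have $\nonterm{K'}(r) \ge \probzero$, and hence $u \le \nonterm{K'}(r) + u \le r < \infty$, which is exactly the hypothesis $u : D \to [0,\infty)$ demanded by the corollary. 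Conditions $(b)$ and $(c)$ then feed into Corollary~\ref{cor:unique-fixed-point-by-ranking-supermartingale}, yielding $\mu \restrictby{K'}{u} = \nu \restrictby{K'}{u}$. Meanwhile, condition $(d)$ says $\eta'$ lies in $\mathbb{E}_{\le u}(D)$ and is a postfixed point of $\restrictby{K'}{u}$ there, so the Knaster--Tarski theorem applied on the complete lattice $\mathbb{E}_{\le u}(D)$ gives $\eta' \le \nu \restrictby{K'}{u}$. Chaining these two facts, $\eta' \le \nu \restrictby{K'}{u} = \mu \restrictby{K'}{u}$.

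It then remains to connect $\mu \restrictby{K'}{u}$ to $\mu K$. Since $u$ is a prefixed point of $K'$, part (a) of Knaster--Tarski gives $\mu K' \le u$, so $\mu K'$ already lives in $\mathbb{E}_{\le u}(D)$ and therefore coincides with the least fixed point of the restriction, $\mu \restrictby{K'}{u} = \mu K'$. Finally, condition $(a)$ gives $K' \le K$ pointwise; evaluating at $\mu K$ yields $K'(\mu K) \le K(\mu K) = \mu K$, so $\mu K$ is a prefixed point of $K'$, and another application of Knaster--Tarski (a) gives $\mu K' \le \mu K$. Combining the whole chain, $\eta' \le \mu K' \le \mu K = \mu \interpret{E}$.

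The main obstacle I anticipate is the careful bookkeeping around domain restriction rather than any deep estimate: one must verify that $\mathbb{E}_{\le u}(D)$ is itself a complete lattice (a principal down-set closed under the sups and infs of $\mathbb{E}(D)$, with top $u$) so that the Knaster--Tarski theorem legitimately applies there, and one must keep the several uses of Knaster--Tarski pointing in the right directions — part (b) once to lower-bound $\eta'$ by the greatest fixed point of the restriction, and part (a) twice to obtain both $\mu K' \le u$ and $\mu K' \le \mu K$. The identification $\mu \restrictby{K'}{u} = \mu K'$ is the crux: it is precisely the step where $\mu K' \le u$ is needed, and it relies on $u$ being a prefixed point of the under-approximation $K'$ (not of $K$), so that restricting the domain does not discard the least fixed point.
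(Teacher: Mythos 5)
Your proof is correct and follows essentially the same route as the paper's: conditions $(b)$ and $(c)$ feed Corollary~\ref{cor:unique-fixed-point-by-ranking-supermartingale} to obtain uniqueness of the fixed point of the restriction, condition $(d)$ plus Knaster--Tarski gives $\eta' \le \nu \restrictby{K'}{u} = \mu \restrictby{K'}{u}$, and condition $(a)$ transfers the bound to $\mu \interpret{E}$. The one place you genuinely diverge is the final transfer: the paper passes through $\mu \restrictby{\interpret{E}}{u}$ and identifies it with $\mu\interpret{E}$ via the Kleene iterates from $\probzero$, whereas you first identify $\mu \restrictby{K'}{u}$ with the unrestricted $\mu K'$ (using $\mu K' \le u$, another instance of Knaster--Tarski part (a)) and then deduce $\mu K' \le \mu K$ from $K'(\mu K) \le K(\mu K) = \mu K$. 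Your version is slightly tidier: since $u$ is only assumed to be a prefixed point of $\interpret{E'}$, not of $\interpret{E}$, the restriction $\restrictby{\interpret{E}}{u}$ appearing in the paper's proof is not literally a self-map of $\mathbb{E}_{\le u}(D)$ and must be read as shorthand for the iteration from $\probzero$; you avoid that reading entirely. Your preliminary observation that $(c)$ forces $u$ to be finite-valued (via $u \le \nonterm{K'}(r) + u \le r < \infty$), so that the hypothesis $u : D \to [0,\infty)$ of the corollary is actually met, is also a point the paper leaves implicit and is worth having on record.
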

\begin{proof}
	Using Corollary~\ref{cor:unique-fixed-point-by-ranking-supermartingale} with $(b)$,
	\begin{equation}
		(d)
		\:\xLongrightarrow{\textit{Knaster-Tarski}}\:
		\eta' \le \nu \restrictby{\interpret{E'}}{u}
		\:\xLongrightarrow{(c)}\:
		\eta' \le \mu \restrictby{\interpret{E'}}{u}
		\:\Longrightarrow\:
		\eta' \le \mu \interpret{E'}
		\:\xLongrightarrow{(a)}\:
		\eta' \le \mu \interpret{E}.
	\end{equation}
	Here, we use \( \mu \restrictby{\interpret{E'}}{u} = \lim_n \restrictby{\interpret{E'}}{u}^n(\probzero) = \lim_n \interpret{E'}^n(\probzero) = \mu \interpret{E'} \).
\end{proof}

\begin{remark}\label{rem:on-the-motivating-classical-rule}
	The rule in Theorem~\ref{thm:unbounded-reasoning-principle} corresponds to a classical termination proof method for non-probabilistic programs, which originally motivated this study.
	A naive adaptation of this rule to the classical setting, where fixed-point expressions take values in \(\{0,1\}\) rather than \([0,1]\) or \([0,\infty]\), proceeds as follows: Given a fixed-point expression \( E \), (a) provide an under-approximation \( \interpret{E'} \le \interpret{E} \), (b) supply a ranking function \( r \colon D \to \mathbb{N} \) that prove the ``termination'' of \( E' \), and (c) find a pre-fixed point \( \eta' \le \interpret{E'}(\eta') \) of \( E' \).
	However, in the classical case, this method contains a certain redundancy.
	Once a ``ranking function candidate'' \( r \colon D \to \mathbb{N} \) is fixed, there exists the best under-approximation \( E^{(r)} \) of \( E \) for which \( r \) is a ranking function.
	Then the above procedure can be reformulated as: (b') Fix a ranking function candidate \( r \colon D \to \mathbb{N} \), and (c') find a pre-fixed point \( \eta' \le \interpret{E^{(r)}}(\eta') \) of \( E^{(r)} \).
	This refined rule coincides with the well-known rule in classical termination analysis~\cite{Cook2005,Cook2006} (or its reformulation~\cite{Unno2023} in fixed-point logic).
	Our rule in Theorem~\ref{thm:unbounded-reasoning-principle} was derived by essentially reversing the direction of this argument.
\end{remark}

\subsection{Examples}\label{sec:underapproximating-monotone-functions}\label{sec:1ufp-reasoning}
We begin by examining the case where \( u = \probone \).
In our setting, finding a ``good'' under-approximation $E'$ is a non-trivial task.
We explain two approaches by examples.

\begin{example}[guard-strengthening]
	Consider the termination probability of the program in Example~\ref{ex:biased-random-walk} (i.e.\ we set the postexpectation to $f = \probone$), and let $K$ be the monotone function for~\eqref{eq:biased-random-walk}.
	The guard-strengthening~\cite{FengOOPSLA2023} provides a way to under-approximate the function $K$
	in Example~\ref{ex:biased-random-walk}.
	We replace the then-branch with $0$ if $x \ge M$ where \( M \) is some constant.
	\[ K'(X)(x) \quad\coloneqq\quad \ifexpr{0 < x}{(\ifexpr{x < M}{\tfrac{1}{3} X(x - 1) + \tfrac{2}{3} X(x + 1)}{0})}{1} \]
	It is easy to give a \( \probone \)-ranking supermartingale for \( \nonterm{K'} \)
	and thus, $K'$ has a unique fixed point in \(\mathbb{E}_{\le 1}(D)\), and any lower bound of $\nu K' = \mu K'$ gives a lower bound of $\mu K$.
	\qed
\end{example}

\begin{example}[subtracting probabilities]\label{ex:biased-random-walk-epsilon-subtraction}
	Another way to under-approximate the function $K$ in Example~\ref{ex:biased-random-walk} with $f = \probone$ is to subtract probabilities.
	Let $K'_{\epsilon} \le K$ be an approximation given by
	\[ K'_\epsilon(X)(x) \quad\coloneqq\quad \ifexpr{x > 0}{\tfrac{1}{3} X(x - 1) + \left(\tfrac{2}{3} - \epsilon \right) X(x + 1)}{1} \qquad \text{where $0 < \epsilon \le 2/3$.} \]
	This function corresponds to the random walk in Example~\ref{ex:biased-random-walk} with a small probability $\epsilon$ of abortion for each iteration.
	The modified random walk is almost surely terminating and \( K'_\epsilon \) has a \( \probone \)-ranking supermartingale,
	so $K'_\epsilon$ has a unique fixed point in \( \mathbb{E}_{\le 1}(\mathbb{N}) \).
	An invariant \( \eta \) of \( K'_{\epsilon} \) is given by
	\[ \eta(x) = \ifexpr{x > 0}{a^x}{1} \qquad \text{where } a = (3 - \sqrt{1 + 12 \epsilon}) / (4 - 6 \epsilon) \]
	If we take the limit $\epsilon \to 0$, then we obtain $a = 1/2$, which gives the exact bound.
	\qed
\end{example}

\begin{example}\label{ex:feng23-ex30}
	Consider the following example from~\cite[Example~30]{FengOOPSLA2023}, an example showing the limitation by the guard-strengthen principle:
	\[ \while{x \neq y}{ \pbranch{\assign{x}{y}}{1/3} {(\pbranch{\{\assign{z}{x};\ \assign{x}{y};\ \assign{y}{z}\}}{1/2}{\mathsf{diverge}})} } \]
	Here, we treat \( \mathsf{diverge} \) as an abbreviation for $\while{\mathbf{true}}{\skipstmt}$.

	The termination probability is given as the solution of the following equation system $E$
	\begin{gather*}
		X(x, y) =_{\mu} \ifexpr{x \neq y}{\tfrac{1}{3} X(y, y) + \tfrac{1}{3} X(y, x) + \tfrac{1}{3} Y()}{1}
		\qquad
		Y() =_{\mu} Y(),
	\end{gather*}
	which is \( 1 \)-bounded.
	Unfortunately, \( E \) has no ranking supermartingale since \( Y() =_\eta Y() \) actually diverges.
	Consider the system \( E' \) obtained by changing the weight of the diverging branch to $0$:
	\begin{gather*}
		X'(x, y) =_{\mu} \ifexpr{x \neq y}{\tfrac{1}{3} X'(y, y) + \tfrac{1}{3} X'(y, x) + 0 Y'()}{1}
		\qquad
		Y'() =_{\mu} 0
	\end{gather*}
	Obviously \( \interpret{E'} \le \interpret{E} \), and it is easy to show the ``termination'' of \( E' \) by giving a ranking supermartingale.
	Let \( \ell \) be a function given by \( \ell(x,y) = 1 \) when \( x = 1 \) and \( \ell(x,y) = 1/2 \) otherwise.
	Then \( [X' \mapsto \ell, Y' \mapsto 0] \) is an invariant of \( E' \), so it is a lower bound of the least solution of \( E \).
	This estimation is actually exact.
	\qed
\end{example}

\begin{example}\label{ex:biased-random-walk-ert-solved}
	Theorem~\ref{thm:unbounded-reasoning-principle}
	gives a lower bound for Example~\ref{ex:biased-random-walk-ert-definition} as follows.
	Let $E$ be the equation system~\eqref{eq:biased-random-walk-ert}.
	We first need to find $E'$ such that $\interpret{E'} \le \interpret{E}$, but in this case, we can take $E' = E$.
	We give a prefixed point \( u \) of $E$: let \( u(x) = \ifexpr{x > 0}{6 x}{0} \).
	A \( u \)-ranking supermartingale is given by \( r(x) = \ifexpr{x > 0}{9 x (x + 3)}{0} \).
	The check of \( \nonterm{\interpret{E}(r)} + u \le r \) is routine: it might be slightly easier by appealing to \( \nonterm{\interpret{E}} = \interpret{\nonterm{E}} \), where \( \nonterm{E} \) has the following equation over a quantitative predicate variable \( \nonterm{X} \).
	\[ \nonterm{X}(x : \mathbf{int}) \quad=_{\mu}\quad \ifexpr{x > 0}{\tfrac{2}{3} \nonterm{X}(x - 1) + \tfrac{1}{3} \nonterm{X}(x + 1)}{0} \]
	Then, \( \eta(x) = \ifexpr{x > 0}{b x}{0} \) (where \( 0 \le b \le 3 \)) is an invariant of \( E \) satisfying \( \eta \le u \).
	So it is a lower bound of \( \mu \interpret{E} \).
	This estimation is exact when \( b = 3 \).
	\qed
\end{example}

\begin{example}\label{ex:biased-random-walk-probability-using-u}
	Interestingly, a non-trivial \( u \) in Theorem~\ref{thm:unbounded-reasoning-principle} is also useful for reasoning about \(1\)-bounded equations.
	Let us again consider the termination probability of the biased random walk in Example~\ref{ex:biased-random-walk}.
	It is known that the least fixed point is \( \eta(x) = \ifexpr{x > 0}{(\tfrac{1}{2})^x}{1} \), and we aim to establish this result.
	In Example~\ref{ex:biased-random-walk-epsilon-subtraction}, this was shown by taking the limit of approximations.
	This exact bound can be given directly without taking the limit of approximations by using Theorem~\ref{thm:unbounded-reasoning-principle}.
	
	Let \( E \) be the fixed-point equation system described in Example~\ref{ex:biased-random-walk}.
	Let \( E' = E \) and \( u \) be the following function:
	\begin{equation*}
		u(x)
		\quad=\quad
		\ifexpr{x > 0}{\left(1 / 2\right)^x}{1}.
	\end{equation*}
	It is straightforward to verify that \( u \) is indeed a fixed-point and, in particular, a prefixed point.
	The invariant \( l \) is also \( u \).
	The remaining task is to find a \( u \)-ranking supermartingale.
	A \( u \)-ranking supermartingale can be explicitly given by
	\begin{equation*}
		r(x)
		\quad=\quad
		\ifexpr{x > 0}{18\left(2/3\right)^x}{1}.
	\end{equation*}
	Thus, by Theorem~\ref{thm:unbounded-reasoning-principle}, we conclude that \( u \le \mu\interpret{E} \).
	Since \( u \) itself is a fixed-point, it follows that \( u \) is indeed the least fixed-point.
	\qed
\end{example}

The exact bound in Example~\ref{ex:biased-random-walk-probability-using-u} cannot be obtained by existing methods that use stochastic invariants \cite{ChatterjeeCAV2022,AbateCAV2025,HenzingerCAV2025,FengOOPSLA2023}.
The common idea in these methods is to find a set \( I \subseteq D \) of program states such that \( \mathrm{Pr}(\text{termination} \mid \text{staying within $I$}) = 1 \) and \( \mathrm{Pr}(\text{staying within $I$}) \ge 1 - p \) for some \( p \).
Given such $I$, the termination probability is lower bounded as follows.
\[ \mathrm{Pr}(\text{termination}) \quad\ge\quad \mathrm{Pr}(\text{termination} \mid \text{staying within $I$}) \cdot \mathrm{Pr}(\text{staying within $I$}) \quad\ge\quad 1 - p \]
A pair \( (I, p) \) such that \( \mathrm{Pr}(\text{staying within $I$}) \ge 1 - p \) is called a \emph{stochastic invariant}.
Now, consider stochastic invariants for the biased random walk in Example~\ref{ex:biased-random-walk}.
Without loss of generality, we can assume that \( I = I[0, M] \) for some $M > 0$ where \( I[0, M] = \{ x \in \mathbb{N} \mid 0 \le x \le M \} \) because \( \mathrm{Pr}(\text{termination} \mid \text{staying within $I$}) = 1 \) must be satisfied.
However, calculations show \( \mathrm{Pr}(\text{staying within \(I[0, M]\)}) = 1 - (1 - 2^{-x}) / (1 - 2^{M + 1}) < 2^{-x} \) where $x$ is the initial state such that \( 0 < x \le M \).
Therefore, there is no stochastic invariant $(I, 1 - 2^{-x})$ that achieves the exact bound \( 2^{-x} \) for the biased random walk.

\begin{example}
	Consider the equation system in Example~\ref{ex:equation-systems-cost-moment} that describes the first and the second moment of the runtime of the biased random walk.
	The analysis for the first moment (Example~\ref{ex:biased-random-walk-ert-solved}) can be extended to the second moment as follows.
	Let \( u^{(1)}(x) = \ifexpr{x > 0}{6 x}{0} \), \( r^{(1)}(x) = \ifexpr{x > 0}{9 x (x + 3)}{0} \), and \( \eta^{(1)}(x) = \ifexpr{x > 0}{3 x}{0} \) be the witnesses taken from Example~\ref{ex:biased-random-walk-ert-solved}.
	We add the following functions as witnesses for the second moment.
	\begin{align*}
		u^{(2)}(x) &\quad=\quad \ifexpr{x > 0}{18 x^2 + 45 x}{0} \\
		r^{(2)}(x) &\quad=\quad \ifexpr{x > 0}{36 x^3 + (585/2) x^2 + (1683 / 2) x}{0} \\
		\eta^{(2)}(x) &\quad=\quad \ifexpr{x > 0}{9 x^2 + 24 x}{0} \tag*{\qed}
	\end{align*}
\end{example}

\subsection{On Completeness}
Our proof rule is not complete in the strict sense, i.e., cannot prove the exact lower bound for every equation system.
For example, our approach is unable to prove that the termination probability of the unbiased random walk (of dimension \( 1 \)) is \( 1 \).
Let \( E \) be the equation describing the termination of the unbiased random walk.
Since \( \probone > \nu \interpret{E'} \) for any strict under-approximation \( E' < E \), we have to choose \( E' = E \).
Any prefixed point \( u \) of \( E \) satisfies \( u \ge \probone \) as \( \probone \) is the least fixed-point, so a \( u \)-ranking supermartingale of \( E \) is a ranking submartingale for the unbiased random walk, but such a ranking submartingale does not exist (because the unbiased random walk does not exhibit positive almost-sure termination).

However, for \( 1 \)-bounded equation systems \( E \), our proof rule has completeness in the following weak sense: it can prove a lower bound arbitrary close to the exact bound.
Recall Example~\ref{ex:biased-random-walk-epsilon-subtraction}.
Note that there is no specific reason to subtract $\epsilon$ from the right branch of the probabilistic branching.
We could instead subtract $\epsilon$ from the left branch or multiply $0 \le \gamma < 1$ to both branches.
In any case, the program becomes almost surely terminating after these modifications.
This approach is called \emph{\( \gamma \)-scaling} and studied in \citet{UrabeLICS2017} and \citet{TakisakaATVA2018}.
Using \( \gamma \)-scaling, we can show the weak form\footnote{Note that the completeness results for lower bound estimation in the literature (e.g.~\citet{FengOOPSLA2023} and \citet{MajumdarPOPL2025}) are of this weak form.} of completeness of the reasoning principle.
\begin{theorem}
	If \( E \) is a \( 1 \)-bounded equation system (i.e.\ \( \interpret{E}(\probone) \le \probone \)), then the rule in Theorem~\ref{thm:unbounded-reasoning-principle} is complete in the following sense: the rule provides a lower bound arbitrary close to the exact bound (i.e., given \( d \in D \) and \( \epsilon > 0 \), there exists \( \eta' \) such that \( \eta' \le \mu \interpret{E}_{\le \probone} \) is provable by the rule in Theorem~\ref{thm:unbounded-reasoning-principle} and that \( \mu\interpret{E}_{\le \probone}(d) - \epsilon < \eta'(d) \le \mu \interpret{E}_{\le \probone}(d) \)).
	\qed
\end{theorem}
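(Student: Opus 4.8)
The plan is to instantiate the rule of Theorem~\ref{thm:unbounded-reasoning-principle} with a $\gamma$-scaled under-approximation of $E$ and then let $\gamma \to 1$, following the $\gamma$-scaling idea sketched above. Write $K = \interpret{E} \colon \mathbb{E}_{\le 1}(D) \to \mathbb{E}_{\le 1}(D)$. For $\gamma \in [0,1)$, let $E'$ be obtained from $E$ by replacing every occurrence $X_j(\widetilde e)$ in each right-hand side with the scaled term $\gamma \cdot X_j(\widetilde e)$, and set $K_\gamma := \interpret{E'}$. Since this transformation scales only the predicate-variable occurrences and leaves the constant terms untouched, affineness (Lemma~\ref{lem:affineness-of-interpretation} and Lemma~\ref{lem:affine-k-induces-linear-knt}) gives
\begin{equation*}
	K_\gamma(\eta) = \gamma \cdot \nonterm{K}(\eta) + K(\probzero),
	\qquad\text{hence}\qquad
	\nonterm{K_\gamma} = \gamma \cdot \nonterm{K}.
\end{equation*}
Because $\nonterm{K}(\eta) \ge \probzero$ and $\gamma \le 1$, we get $K_\gamma \le K$, so condition $(a)$ holds.

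For the remaining witnesses I would take $u = \probone$ throughout. It is a prefixed point of $K_\gamma$ since $K_\gamma(\probone) \le K(\probone) \le \probone$, giving $(b)$. For the $u$-ranking supermartingale I would use the constant function $r := \tfrac{1}{1-\gamma}\probone \colon D \to [0,\infty)$: by linearity of $\nonterm{K}$ and $\nonterm{K}(\probone) \le K(\probone) \le \probone$,
\begin{equation*}
	\nonterm{K_\gamma}(r) + u
	= \tfrac{\gamma}{1-\gamma}\,\nonterm{K}(\probone) + \probone
	\le \tfrac{\gamma}{1-\gamma}\probone + \probone
	= \tfrac{1}{1-\gamma}\probone = r,
\end{equation*}
which is condition $(c)$. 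Finally, by Corollary~\ref{cor:unique-fixed-point-by-ranking-supermartingale} the restriction $\restrictby{K_\gamma}{\probone}$ has a unique fixed point, and I would take $\eta'_\gamma := \mu K_\gamma = \nu \restrictby{K_\gamma}{\probone}$ as the invariant in $(d)$ (being a fixed point, it is in particular a postfixed point, and $\eta'_\gamma \le \probone$). Theorem~\ref{thm:unbounded-reasoning-principle} then yields $\eta'_\gamma = \mu K_\gamma \le \mu\interpret{E}_{\le 1}$ for every $\gamma \in [0,1)$, which is already the upper inequality $\eta'_\gamma(d) \le \mu\interpret{E}_{\le 1}(d)$.

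It then remains to show that $\mu K_\gamma$ approaches $\mu K$ as $\gamma \to 1$, which I expect to be the crux. Since $\gamma \mapsto K_\gamma$ and $\mu$ are both monotone, the family $\{\mu K_\gamma\}_{\gamma<1}$ is increasing and bounded above by $\mu K$; fixing a cofinal sequence $\gamma_n \uparrow 1$ (say $\gamma_n = 1-1/n$), let $L := \sup_n \mu K_{\gamma_n} \le \mu K$. Writing $a_n := \mu K_{\gamma_n}$, the fixed-point identity $a_n = \gamma_n \nonterm{K}(a_n) + K(\probzero)$ together with $K(\probzero) \le a_n$ (all values lie in $[0,1]$, so the subtractions are ordinary real subtractions) gives
\begin{equation*}
	K(a_n) = \nonterm{K}(a_n) + K(\probzero) = \tfrac{1}{\gamma_n}\bigl(a_n - K(\probzero)\bigr) + K(\probzero).
\end{equation*}
Evaluating pointwise and letting $n \to \infty$ shows $K(a_n) \to L$ pointwise; since $a_n$ is an increasing chain and $K$ is Scott continuous (Lemma~\ref{lem:continuity-of-interpretation}), $K(L) = \sup_n K(a_n) = L$. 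Thus $L$ is a fixed point of $K$, so $\mu K \le L$, and with $L \le \mu K$ we obtain $\sup_{\gamma<1}\mu K_\gamma = \mu K$ pointwise. Given $d$ and $\epsilon$, I would then pick $\gamma$ with $\mu\interpret{E}_{\le 1}(d) - \epsilon < \mu K_\gamma(d)$ and set $\eta' := \eta'_\gamma$.

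The main obstacle is exactly this convergence $\sup_{\gamma<1}\mu K_\gamma = \mu K$: it is here that $1$-boundedness is essential, both to keep $a_n - K(\probzero)$ finite and to pass the factor $1/\gamma_n$ through the limit, and this is precisely why the statement is restricted to $K \colon \mathbb{E}_{\le 1}(D) \to \mathbb{E}_{\le 1}(D)$ and fails in the unbounded setting (cf.\ the unbiased-random-walk discussion above). By contrast, the witness constructions $(a)$–$(d)$ are routine once $u = \probone$ and $r = \tfrac{1}{1-\gamma}\probone$ are fixed, and they go through verbatim for any $1$-bounded equation system.
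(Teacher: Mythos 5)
Your proposal is correct and follows the same route as the paper: the paper's proof of this theorem is exactly the $\gamma$-scaling argument sketched in the surrounding text (and spelled out in the appendix ``Simulating $\gamma$-Scaled Submartingales''), with $u = \probone$, the constant supermartingale $r = \tfrac{1}{1-\gamma}\probone$, and $\mu\interpret{E'}$ as the invariant. The one place you go beyond the paper is the convergence step $\sup_{\gamma<1}\mu K_\gamma = \mu K$: the paper simply asserts that ``$\gamma$-scaling provides a lower bound arbitrarily close to the exact bound'' and cites \citet{UrabeLICS2017} and \citet{TakisakaATVA2018}, whereas you prove it directly from the fixed-point identity $a_n = \gamma_n\,\nonterm{K}(a_n) + K(\probzero)$, $1$-boundedness (so the subtraction $a_n - K(\probzero)$ is an ordinary real subtraction), and Scott continuity of $K$. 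That self-contained argument is sound --- the chain $a_n$ is increasing because $K_\gamma$ is monotone in $\gamma$, and $K(\sup_n a_n) = \sup_n K(a_n) = \lim_n K(a_n) = L$ forces $L$ to be a fixed point dominating $\mu K$ from below --- so it is a genuine, if modest, strengthening of what the paper writes down. The only cosmetic difference is that you scale only the predicate-variable occurrences ($\nonterm{K_\gamma} = \gamma\,\nonterm{K}$, constant part kept) while the appendix scales the whole right-hand side; both are valid under-approximations and both converge, yours slightly faster since $\mu K_\gamma \ge K(\probzero)$ for all $\gamma$.
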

\begin{proof}
	We construct a witness for the rule in Theorem~\ref{thm:unbounded-reasoning-principle} as follows.
	Let \( K \coloneqq \interpret{E} \).
	Given \( 0 \le \gamma < 1 \), we define \( K_{\gamma} : \mathbb{E}(D) \to \mathbb{E}(D) \) by \( K_{\gamma}(X) := \gamma \cdot K(X) \).
	Obviously, $K_{\gamma}$ is an under-approximation of $K$.
	Let \( u(x) = 1 \) and \( r(x) = 1 / (1 - \gamma) \) for all \( x \in D \).
	Then, \( u \) is a prefixed point of \( K_{\gamma} \) and \( r \) is a \( u \)-ranking supermartingale because
	\[ \nonterm{K_{\gamma}}(r) + u \quad\le\quad \gamma \cdot K(1 / (1 - \gamma) \cdot \probone) + \probone \quad\le\quad 1 / (1 - \gamma) \cdot \probone \quad=\quad r. \]
	In this situation, the least fixed point $\eta' = \mu K_{\gamma}$ obviously gives an invariant.
	It remains to show that the lower bound \( \eta' = \mu K_{\gamma} \) approaches the least fixed point \( \mu K \) as \( \gamma \to 1 \).
	This is shown as follows.
	For any \( n \), we have \( (K_{\gamma})^n(\probzero) \ge \gamma^n K^n(\probzero) \) because \( K(\gamma \cdot X) \ge \gamma \cdot K(X) \) for any \( X \).
	\[ K(\gamma \cdot X) = \nonterm{K}(\gamma \cdot X) + K(\probzero)
		= \gamma \cdot \nonterm{K}(X) + K(\probzero)
		\ge \gamma \cdot \nonterm{K}(X) + \gamma \cdot K(\probzero)
		= \gamma \cdot K(X) \]
	Given \( \epsilon > 0 \) and \( d \in D \), we can choose sufficiently large \( n \) such that \( K^{n}(\probzero)(d) \ge \mu K(d) - \epsilon / 2 \) and sufficiently large \( \gamma \) such that $\gamma^n \ge 1 - \epsilon / 2$.
	Then, we have \( \mu K(d) - \epsilon \le \mu K_{\gamma}(d) = \eta'(d) \).
	\[ \mu K(d) - \epsilon \le K^{n}(\probzero)(d) - \epsilon / 2
		\le (1 - \epsilon / 2) \cdot K^{n}(\probzero)(d)
		\le \gamma^n \cdot K^{n}(\probzero)(d)
		\le (K_{\gamma})^{n}(\probzero)(d)
		\le \mu K_{\gamma}(d)
		\qedhere \]
\end{proof}

\section{Implementation and Experiments}\label{sec:implementation}

In this section, we describe a template-based algorithm to find lower-bound certificates and present experimental results.
To avoid incorrect results caused by numerical errors, our implementation focuses on bound checking problems that can be handled using SMT-based solvers.
A bound checking problem is a decision problem that asks whether the least fixed point of an equation system is lower bounded by a given value.
We do not consider bound inference algorithms that rely on numerical optimization.
However, if numerical errors are not a concern in a particular application, our reduction from bound checking problems to constraint solving problems can be used for bound inference as well.

We consider equation systems with multiple equations in this section.
In the theoretical development so far, we focused mainly on equation systems consisting of a single equation without loss of generality (Remark~\ref{rem:simplify-expectation}).
This is just for theoretical simplicity, and for implementation, it is more convenient to consider equation systems with multiple equations.

\paragraph{Queried equation system}
A bound checking problem is formalized as an equation system with a query.
A \emph{queried equation system} is a pair $(E, F \bowtie t)$ of an equation system $E = \{ X_i(\widetilde{x_i}) =_{\mu} F_i \mid i = 1, \dots, n \}$ and a query $F \bowtie t$ where $F$ is a quantitative formula over $\{ X_1, \dots, X_n \}$, $t \in \mathit{Exp}_{[0, \infty)}$ is a non-negative real term, and ${\bowtie} \in \{ {\le}, {\ge} \}$.
A queried equation system is valid if the solution of $E$ satisfies the query $F \bowtie t$, i.e, if $(X_1, \dots, X_n) = (X_1^{*}, \dots, X_n^{*})$ is the least fixed point of $E$, then $F[X_1^{*}/X_1, \dots, X_n^{*}/X_n]$ has a lower bound $F[X_1^{*}/X_1, \dots, X_n^{*}/X_n] \ge t$.
Since we are interested in the lower bound of the least fixed point, we mainly consider the case where $F \ge t$.
\begin{example}
	Consider the expected cost of the biased random walk in Example~\ref{ex:biased-random-walk-ert-definition}.
	Suppose that the aim is to verify that the expected runtime starting from $x = 1$ is lower bounded by $3$.
	Then, the queried equation system for this problem is $(E, X(1) \ge 3)$ where $E = \{ X(x : \mathbf{int}) =_{\mu} \ifexpr{x > 0}{\frac{2}{3} X(x - 1) + \frac{1}{3} X(x + 1) + 1}{0} \}$.
	\qed
\end{example}

Bound checking problems can be solved by finding a witness in Theorem~\ref{thm:unbounded-reasoning-principle}.
\begin{corollary}
	Let $(E, F \bowtie t)$ be a queried equation system where $E = \{ X_i(\widetilde{x_i}) =_{\mu} F_i \mid i = 1, \dots, n \}$.
	If there exists a witness $(E', u, r, \eta)$ that satisfies the conditions in Theorem~\ref{thm:unbounded-reasoning-principle} as well as $F[\eta_1/X_1, \dots, \eta_n/X_n] \bowtie t$, then the queried equation system $(E, F \bowtie t)$ is valid.
	\qed
\end{corollary}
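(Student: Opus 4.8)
The plan is to combine the lower bound delivered by Theorem~\ref{thm:unbounded-reasoning-principle} with the monotonicity of $\interpret{F}$ recorded in Lemma~\ref{lem:continuity-of-interpretation}. Throughout I would write $(X_1^*,\dots,X_n^*) = \mu\interpret{E}$ for the solution of $E$, and I would first record the bookkeeping fact that the substitution notation agrees with semantic application, namely $F[\eta_1/X_1,\dots,\eta_n/X_n] = \interpret{F}(\eta)$: substituting the closed expectations $\eta_j$ for the predicate variables $X_j$ yields a ground formula whose interpretation is exactly $\interpret{F}$ applied to the assignment $\eta = (\eta_1,\dots,\eta_n)$. This identification lets me read the extra witness hypothesis $F[\eta_1/X_1,\dots,\eta_n/X_n] \bowtie t$ as a statement about $\interpret{F}(\eta)$.

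First I would apply Theorem~\ref{thm:unbounded-reasoning-principle} to the witness $(E',u,r,\eta)$. Conditions $(a)$--$(d)$ are precisely the hypotheses of that theorem, so its conclusion gives $\eta \le \mu\interpret{E}$, i.e.\ $\eta_j \le X_j^*$ pointwise for every $j$. This step is a direct invocation and requires no further work.

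Second I would transport the bound on $F$ from $\eta$ to the true solution. By Lemma~\ref{lem:continuity-of-interpretation}, $\interpret{F}$ is Scott continuous and hence monotone; applying monotonicity to $\eta \le \mu\interpret{E}$ yields $\interpret{F}(\eta) \le \interpret{F}(\mu\interpret{E})$, that is $F[\eta_1/X_1,\dots,\eta_n/X_n] \le F[X_1^*/X_1,\dots,X_n^*/X_n]$. Combining this with the witness hypothesis $F[\eta_1/X_1,\dots,\eta_n/X_n] \ge t$ gives $t \le \interpret{F}(\eta) \le \interpret{F}(\mu\interpret{E})$, so the solution of $E$ satisfies the query and the queried equation system is valid.

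The only step demanding genuine attention is the direction of the relation $\bowtie$. The monotonicity chain points from $\eta$ upward to $\mu\interpret{E}$, so a lower bound on $\interpret{F}(\eta)$ is inherited by $\interpret{F}(\mu\interpret{E})$; this is exactly what certifies a $\ge$-query, the case of interest for lower-bound verification. For a $\le$-query the same chain is unhelpful, since it bounds $\interpret{F}(\mu\interpret{E})$ from below rather than above, so the witness scheme of Theorem~\ref{thm:unbounded-reasoning-principle} only certifies the $\ge$ direction. I would therefore present the argument for $\bowtie = {\ge}$ and note explicitly that the scheme is tailored to lower-bound queries, consistently with the paper's stated focus on lower bounds of the least fixed point.
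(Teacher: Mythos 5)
Your proof is correct and is exactly the argument the paper leaves implicit (the corollary is stated with no proof, as an immediate consequence of Theorem~\ref{thm:unbounded-reasoning-principle} plus monotonicity of \( \interpret{F} \)). Your observation that the witness scheme only certifies the \( \ge \) direction of \( \bowtie \) is a fair and careful reading, consistent with the paper's remark that it ``mainly consider[s] the case where \( F \ge t \)'' and handles upper bounds by a separate Knaster--Tarski argument.
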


\subsection{Reduction to Polynomial Quantified Entailments}\label{sec:reduction-to-pqe}

\subsubsection{Polynomial quantified entailments}
We aim to obtain a lower-bound certificate as a solution of polynomial quantified entailments (PQEs), which can be solved by the \textsc{PolyQEnt} PQE solver~\cite{ChatterjeeATVA2025}.
Before describing the reduction to PQEs, we first review the definition of PQEs.
A \emph{PQE} is a constraint of the following form.
\[ \forall \widetilde{x},\quad \Phi(\widetilde{x}; \widetilde{\theta}) \implies \Psi(\widetilde{x}; \widetilde{\theta}) \]
Here, $\Phi(\widetilde{x}; \widetilde{\theta})$ and $\Psi(\widetilde{x}; \widetilde{\theta})$ are boolean combinations of polynomial inequalities with unknown parameters $\widetilde{\theta}$.
A set of PQEs is \emph{satisfiable} if there exists an assignment for the unknown parameters that makes all the PQEs true.

PolyQEnt supports a few positivity theorems to solve PQEs.
Among them, we use Handelman's theorem.
Handelman's theorem is applicable when $\Phi(\widetilde{x}; \widetilde{\theta})$ is a conjunction of linear inequalities (i.e.\ \emph{polyhedra}), and $\Psi(\widetilde{x}; \widetilde{\theta})$ is a polynomial inequality.
Completeness is guaranteed when $\Phi(\widetilde{x}; \widetilde{\theta})$ is bounded, each linear inequality in $\Phi(\widetilde{x}; \widetilde{\theta})$ is non-strict, and $\Psi(\widetilde{x}; \widetilde{\theta})$ is a strict polynomial inequality.

\subsubsection{Input format}\label{sec:input-format-for-implementation}
The input of our implementation is a queried equation system $(E, F \ge t)$ where $E = \{ X_i(\widetilde{x_i}) =_{\mu} F_i \mid i = 1, \dots, n \}$.
We assume that (i) the domain of each quantitative predicate variable $X_i$ is $\mathbb{R}^m_i$, (ii) real or non-negative real terms are polynomials, and (iii) boolean terms are boolean combinations of linear inequalities.
For convenience, we further assume that quantitative formulas in $F_1, \dots, F_n, F$ are given in the \emph{normal form}, which is defined as follows.
\begin{equation}
	F_{\text{normal}} \quad\coloneqq\quad \sum_{j = 1}^{j_{\max}} [\varphi_j] \cdot A_j \qquad \text{where $\varphi_1, \dots, \varphi_{j_{\max}}$ are polyhedra and mutually disjoint.}
	\label{eq:normal-form-quantitative-formula}
\end{equation}
Here, $[\varphi_j]$ is the Iverson bracket, and $A$ is a quantitative formula of the following form.
\[ A \quad\coloneqq\quad \sum_{k = 1}^{k_{\max}} t_k \cdot X_{i_{k}}(\widetilde{e}_{k}) + t_0 \qquad\qquad \text{where $t_0, \dots, t_{k_{\max}}$ and $\widetilde{e}_{1}, \dots \widetilde{e}_{k_{\max}}$ are polynomial.} \]
Converting a quantitative formula to the normal form is straightforward.
\begin{example}
	Consider the quantitative formula \( \ifexpr{x > 0}{\frac{1}{3} X(x - 1) + \frac{2}{3} X(x + 1)}{1} \).
	The normal form is given by $[x > 0] \cdot \big(\frac{1}{3} X(x - 1) + \frac{2}{3} X(x + 1)\big) + [x \le 0] \cdot 1$.
	\qed
\end{example}

\subsubsection{Reduction}
Given a pair of a queried equation system $(E, F \ge t)$, we aim to find a witness in Theorem~\ref{thm:unbounded-reasoning-principle}.
We first explain our reduction for the case where $E'$ is fixed to $E' = E$ and later extend it to allow $E' \neq E$.
When $E' = E$, we need to find three witnesses for a lower bound: a prefixed point $u = (u_1, \dots, u_n)$, a $u$-ranking supermartingale $r = (r_1, \dots, r_n)$, and an invariant $\eta = (\eta_1, \dots, \eta_n)$ where $n$ is the number of equations in $E$.
We consider polynomial templates for these witnesses.
We impose the constraints that the witnesses satisfy (a) the conditions in Theorem~\ref{thm:unbounded-reasoning-principle}, (b) non-negativity, and (c) the query $F \ge t$.
\begin{align}
	u_i &\ge F_i[u/X] &
	r_i &\ge u_i + (\nonterm{F})_i[r/X] &
	\eta_i &\le u_i &
	\eta_i &\le F_i[\eta/X]
	\label{eq:template-constraint-2} \\
	u_i &\ge 0 &
	r_i &\ge 0 &
	\eta_i &\ge 0 &
	F[\eta/X] &\ge t
	\label{eq:template-constraint-3}
\end{align}
Here, $F[u/X]$ is a shorthand for $F[u_1/X_1, \dots, u_n/X_n]$ and $\nonterm{F}$ is obtained by replacing $t_j$ in~\eqref{eq:normal-form-quantitative-formula} with $0$.
Now, we can decompose these constraints into a set of PQEs.
For example, consider the constraint $u_i \ge F_i[u/X]$.
Since we assume that $F_i$ is in the normal form~\eqref{eq:normal-form-quantitative-formula}, we can write $F_i[u/X]$ as $\sum_{j = 1}^{j_{\max}} [\varphi_j] \cdot A_j[u/X]$.
Therefore, the constraint $u_i \ge F_i[u/X]$ is equivalent to (the conjunction of) PQEs $\{ \forall \widetilde{x_i}, \varphi_j \implies u_i \ge A_j[u/X] \mid j = 1, \dots, j_{\max} \}$.
This reduction itself is obviously sound and complete.
Moreover, the resulting PQEs are in the scope of Handelman's theorem and can be solved by \textsc{PolyQEnt}.

\begin{example}
	Consider the expected runtime of the biased random walk in Example~\ref{ex:biased-random-walk-ert-definition} (see also Example~\ref{ex:biased-random-walk-ert-solved}).
	To show that the expected cost from $x = 1$ is lower bounded by $3$, we need to find a witness $(u, r, \eta)$ satisfying the following constraints.
	\begin{align*}
		&u(x) \quad\ge\quad \ifexpr{x > 0}{\tfrac{2}{3} u(x - 1) + \tfrac{1}{3} u(x + 1) + 1}{0} \\
		&r(x) \quad\ge\quad u(x) + \ifexpr{x > 0}{\tfrac{2}{3} r(x - 1) + \tfrac{1}{3} r(x + 1)}{0} \\
		&\eta(x) \quad\le\quad \ifexpr{x > 0}{\tfrac{2}{3} \eta(x - 1) + \tfrac{1}{3} \eta(x + 1) + 1}{0} \\
		&\eta(x) \le u(x) \qquad
		u(x) \ge 0 \qquad
		r(x) \ge 0 \qquad
		\eta(x) \ge 0 \qquad
		\eta(1) \ge 3
	\end{align*}
	Assuming that $u$, $r$, and $\eta$ are polynomials, we can decompose these constraints into a set of PQEs.
	\begin{align*}
		&\forall x, x > 0 \implies u(x) \ge \tfrac{2}{3} u(x - 1) + \tfrac{1}{3} u(x + 1) + 1 &&
		\forall x, x \le 0 \implies u(x) \ge 0 \\
		&\forall x, x > 0 \implies r(x) \ge u(x) + \tfrac{2}{3} r(x - 1) + \tfrac{1}{3} r(x + 1) &&
		\forall x, x \le 0 \implies r(x) \ge u(x) \\
		&\forall x, x > 0 \implies \eta(x) \le \tfrac{2}{3} \eta(x - 1) + \tfrac{1}{3} \eta(x + 1) + 1 &&
		\forall x, x \le 0 \implies \eta(x) \le 0 \\
		&\forall x, u(x) \ge 0 \qquad
		\forall x, r(x) \ge 0 \qquad
		\forall x, \eta(x) \ge 0 \qquad
		\eta(1) \ge 3
		\tag*{\qed}
	\end{align*}
\end{example}

\subsection{Minor Extensions}\label{sec:minor-extensions}

\subsubsection{Piecewise polynomials}\label{sec:piecewise-polynomial-template}
We extend the polynomial template for $u$, $r$, and $\eta$ to piecewise polynomials.
This is straightforward once we decide how to divide the domain to pieces.
As a heuristic, we use the same branching structure as the equation system: for each equation $X_i(\widetilde{x}) =_{\mu} F_i$ in $E$ with the normal form of $F_i$ given as $\sum_{j = 1}^{j_{\max}} [\varphi_j] \cdot A_j$, we define the template for $u_i$, $r_i$, and $\eta_i$ by
\begin{equation}
	u_i(\widetilde{x}),\ r_i(\widetilde{x}),\ \eta_i(\widetilde{x}) \qquad=\qquad \sum_{j = 1}^{j_{\max}} [\varphi_j] \cdot p_j(\widetilde{x})
	\label{eq:template-expectations}
\end{equation}
where $p_j(\widetilde{x})$ is a polynomial with unknown coefficients.
Some existing template-based implementations such as \cite{ChatterjeeCAV2022} assumes that invariants, which in this case means a set of states $I \subseteq D$ such that a program stays in $I$ during its execution, are either manually provided or precomputed by existing invariant synthesis tools for non-probabilistic programs.
In contrast, our current implementation does not use manually provided invariants or precomputed invariants.
Therefore, we often need piecewise polynomial templates to satisfy non-negativity constraints.

\begin{example}
	Consider the expected runtime of the biased random walk in Example~\ref{ex:biased-random-walk-ert-definition}.
	Then, using the same branching structure as the equation system, the piecewise polynomial template for $u$, $r$, and $\eta$ is given by
	\[ u(x), r(x), \eta(x) \quad=\quad \ifexpr{x > 0}{p_1(x)}{p_2(x)} \]
	where $p_1$ and $p_2$ are polynomials with unknown coefficients.
	\qed
\end{example}

\subsubsection{Template for equation systems}\label{sec:template-equation-system}
We define the template for $E'$ by applying guard-strengthening and weight-subtraction (Section~\ref{sec:1ufp-reasoning}) to the original equation system $E$.
Specifically, for each equation $X(\widetilde{x}) =_{\mu} F_i$ in $E$, we define the corresponding equation $X(\widetilde{x}) =_{\mu} F_i'$ in $E'$ by (a) strengthening $\varphi_j$ and (b) multiplying $t_{j,k}$ by unknown parameters $0 \le a_{j, k} \le 1$.
\[ F_i = \sum_{j = 1}^{j_{\max}} [\varphi_j] \cdot \sum_{k = 1}^{k_{\max, j}} t_{j, k} \cdot X_{i_{j, k}}(\widetilde{e}_{j, k}) + t_{j, 0} \quad\mapsto\quad F_i' = \sum_{j = 1}^{j_{\max}} [\varphi' \land \varphi_j] \cdot \sum_{k = 1}^{k_{\max, j}} a_{j, k} t_{j, k} \cdot X_{i_{j, k}}(\widetilde{e}_{j, k}) + a_{j, 0} t_{j, 0} \]
Here, we use the normal form~\eqref{eq:normal-form-quantitative-formula} for $F_i$, and $\varphi'$ is a conjunction of linear inequalities with unknown coefficients.
If $0 \le a_{j, k} \le 1$ is satisfied, then the resulting equation system $E'$ satisfies $\interpret{E'} \le \interpret{E}$.
Therefore, we can extend the reduction to PQEs in Section~\ref{sec:reduction-to-pqe} to use $E'$ instead of $E$ and solve the PQEs together with the additional constraints $0 \le a_{j, k} \le 1$.

\subsection{Experimental Results}
\begin{table}[tbp]
	\caption{The results for bound checking problems.
	The Result column shows whether the implementation found witnesses for the benchmarks successfully (``valid'') or not (``unknown'', which means that \textsc{PolyQEnt} concluded that there is no solution within the template we consider).
	The timeout was set to 180 seconds.
	}
		\label{tab:results}
		\vspace{-1em}
		\begin{subtable}{\textwidth}
			\centering
			\subcaption{Benchmarks bound-checked against exact bounds.}
			\label{tab:results-exact-bounds}
			\small
			\begin{tabular}{c|ccc|ccc}
		& \multicolumn{3}{c}{Lower bound} & \multicolumn{3}{c}{Upper bound} \\
		Benchmark & Config & Result & Time (sec) & Config & Result & Time (sec) \\
		\hline
		\texttt{ert\_coin\_flip} & A & valid & 0.246 & A & valid & 0.268 \\
		\texttt{ert\_random\_walk} (Example~\ref{ex:biased-random-walk-ert-definition}) & A & valid & 0.593 & A & valid & 0.357 \\
		\hline
		\texttt{ert\_random\_walk\_2nd} (Example~\ref{ex:equation-systems-cost-moment}) & A & valid & 1.216 & A & valid & 0.481 \\
		\hline
		\texttt{feng23\_ex30} (Example~\ref{ex:feng23-ex30}) & A & valid & 0.916 & A & valid & 0.454 \\
		\texttt{chatterjee22\_fig6} & A & valid & 0.268 & A & valid & 0.425 \\
		\texttt{chatterjee22\_fig7} & A & valid & 0.289 & A & valid & 0.454 \\
		\texttt{chatterjee22\_fig8} & A & valid & 0.394 & A & valid & 0.599 \\
		\texttt{chatterjee22\_fig20} & A & valid & 0.541 & A & valid & 0.326 \\
		\texttt{chatterjee22\_fig21} & A & valid & 0.584 & A & valid & 0.321 \\
		\texttt{chatterjee22\_fig24} & C1 & valid & 8.591 & B & valid & 8.893 \\
		\texttt{hark19\_ex7} & C2 & valid & 0.347 & C3 & valid & 0.322 \\
		\texttt{hark19\_ex49} & A & valid & 1.467 & A & valid & 0.505 \\
		\hline
		\texttt{olmedo18\_intro\_cwp1} & A & valid & 0.417 & A & valid & 0.354 \\
		\texttt{olmedo18\_intro\_cwp2} & A & valid & 0.416 & A & valid & 0.326
	\end{tabular}
		\end{subtable}

	\vspace{1em}
	\begin{subtable}{\textwidth}
		\centering
		\subcaption{Benchmarks bound-checked against approximate values. ``?'' means that the exact value is unknown.}
		\label{tab:results-approximate-bounds}
		\small
	\begin{tabular}{c|c|cccc}
		& & \multicolumn{4}{c}{Lower bound} \\
		Benchmark & Exact & Bound & Config & Result & Time (sec) \\
		\hline
		\texttt{feng23\_ex19} & 2.0 & 1.0 & C2 & valid & 24.641 \\
		\texttt{chatterjee22\_fig14}\footnotemark & $? \in [0, 1]$ & 0.001 & B & valid & 14.235 \\
		\texttt{chatterjee22\_fig18} & $? \in [0, 1]$ & 0.49 & C1 & valid & 25.695 \\
		\texttt{chatterjee22\_fig22} & $? \in [0, 1]$ & 0.9 & C1 & valid & 0.873 \\
		\texttt{chatterjee22\_fig23} & $? \in [0, 1]$ & 0.92 & A & unknown & 15.046 \\
		\texttt{hark19\_ex53} & $\infty$ & $10000$ & C2 & valid & 0.290
	\end{tabular}
	\end{subtable}
\end{table}
\footnotetext{We manually provided hints for the branching structure of the templates for $u, r, \eta$ and $E'$.
Our tool usually construct the template by directly following the branching structure of the if statements in the given input.
For \texttt{chatterjee22\_fig14}, we manually refine the branching structure by splitting it into more fine-grained cases.
This corresponds to increasing the number of pieces in the piecewise polynomial template and therefore amounts to using a more expressive template.
}

We implemented a tool, \OurTool, which can solve queried equation systems for both lower and upper bounds of least fixed points.\footnote{Available at \OurToolUrl}
For lower bounds, \OurTool\ tries to find witnesses using the procedure described in Section~\ref{sec:reduction-to-pqe} and~\ref{sec:minor-extensions}.
For upper bounds, it applies the Knaster--Tarski theorem using the same templates as for lower bounds and then solves the resulting PQEs by \textsc{PolyQEnt}.

We collected all probabilistic programs from the literature about lower-bound verification~\cite{FengOOPSLA2023,HarkPOPL2020,ChatterjeeCAV2022}\footnote{Example numbers for \texttt{hark19\_ex*} are based on the arxiv version of \cite{HarkPOPL2020}.}, excluding benchmarks that require non-polynomial functions as witnesses.
We also excluded benchmarks that use continuous distributions, as our implementation does not support them.
Example~\ref{ex:equation-systems-cost-moment},~\ref{ex:biased-random-walk-ert-definition},~\ref{ex:biased-random-walk}, and a simple coin flip program are included in our benchmark set too.
The benchmarks cover a wide range of quantitative properties, including expected runtime (\texttt{ert\_coin\_flip}, \texttt{ert\_random\_walk}), higher moments of runtime (\texttt{ert\_random\_walk\_2nd}), weakest preexpectation (\texttt{feng23\_ex*}, \texttt{chatterjee22\_fig*}, \texttt{hark19\_ex*}), and conditional weakest preexpectation (\texttt{olmedo18\_intro\_cwp*}).
One of the benchmarks, \texttt{chatterjee22\_fig7} involves nondeterminism.
We do not have benchmarks for the conditional weakest preexpectation, as programs in \cite{OlmedoTOPLAS2018} require exponential functions as witnesses.
We show an example of such exponential witnesses for the conditional weakest preexpectation in \referappendix{Example}{C.1}{ex:olmedo18}.
We translated these programs into the input format, queried equation systems.
Translating probabilistic programs into equation systems is straightforward, as explained in Section~\ref{sec:problem}.
Moreover, we added queries that specify desired lower/upper bounds.
Most of the bounds used here are exact bounds except for \texttt{feng23\_ex19} and \texttt{hark19\_ex53}. For \texttt{feng23\_ex19}, \textsc{PolyQEnt} timed out when verifying the exact lower bound. Therefore, we report the result using an approximate lower bound along with the exact upper bound. The reason for using approximate bounds in \texttt{hark19\_ex53} is that the exact bound is $\infty$ in this case. As the exact bounds for \texttt{chatterjee22\_fig\{14,18,22,23\}} are unknown, we used the lower bounds reported in \cite{ChatterjeeCAV2022} as approximate lower bounds and omitted upper bound verification.

We were unable to compare our approach with existing tools for lower-bound verification \cite{ChatterjeeCAV2022,FengOOPSLA2023,HarkPOPL2020,TakisakaATVA2018} for several reasons.
First, differences in problem settings make a direct comparison difficult.
Our implementation is based on SMT solving, whereas some existing tools \cite{TakisakaATVA2018} rely on numerical optimization.
SMT-based approaches provide exact results without numerical error, while numerical-optimization-based approaches are typically more efficient but may introduce approximation errors.
Moreover, our implementation targets bound-checking problems, where the goal is to establish a lower bound exceeding a given threshold.
In contrast, some existing tools aim to infer tight lower bounds via optimization without assuming a predefined threshold.
Second, some prior works \cite{ChatterjeeCAV2022,FengOOPSLA2023} do not provide implementations or do not make them publicly available.
Finally, \textsc{cegispro2}, which implements the method of \cite{HarkPOPL2020}, imposes various restrictions on input programs.
Due to these restrictions, we were unable to apply \textsc{cegispro2} to most of our benchmarks.

Table~\ref{tab:results} shows the results of our experiments.
Our experiments were conducted on 12th Gen Intel(R) Core(TM) i7-1270P 2.20 GHz with 32 GB of memory, and the timeout was set to 180 seconds.
We used 5 configurations in the experiments.
Configuration A is the default configuration, which uses the algorithm described in Section~\ref{sec:reduction-to-pqe} with piecewise polynomial templates in Section~\ref{sec:piecewise-polynomial-template} but not with the template for equation systems in Section~\ref{sec:template-equation-system}.
The degrees of polynomial templates for $u$, $r$, and $\eta$ are 2, 3, and 2, respectively.
Configuration B differs from A in that the degrees of polynomial templates for $u$, $r$, and $\eta$ are 1, 1, and 1, respectively.
Configuration C1 differs from B by using the template for equation systems in Section~\ref{sec:template-equation-system}, where $\varphi'$ in the template for $E'$ is fixed to $\top$. Configuration C2 differs from C1 by allowing one linear inequality with unknown coefficients in $\varphi'$.
Configuration C3 is almost the same as Configuration C2, but we add an if-branching outside the template~\eqref{eq:template-expectations} for $u$, $r$, and $\eta$ where the boolean condition is given by a linear inequality with unknown coefficients.
Configurations used for each benchmark are shown in Table~\ref{tab:results}.

\paragraph{Strengths and limitations}
Our implementation was able to find \emph{exact} bounds for most of the benchmarks, which is a significant advantage over existing methods that can only find approximate bounds.
Our implementation also covered benchmarks that are beyond the scope of existing methods.
For example, \texttt{feng23\_ex30} is a benchmark that cannot be handled by the guard-strengthening method \cite{FengOOPSLA2023}.
Lower bounds of the expected runtime and higher moments of runtime are also beyond the scope of existing methods, but our implementation was able to find lower bounds.
On the other hand, our current implementation cannot handle benchmarks that require witnesses beyond the scope of polynomial templates.
We note that exponential templates are studied in the literature \cite{WangPLDI2021}.
However, this method relies on convex optimization and is not based on SMT solving.
Hence, we leave implementing exponential templates as future work.
Our implementation also does not support continuous distributions.
However, this is just a design choice.
If we change our implementation to take invariants (a set of states in which a program stays during its execution) as a part of the input and use (non-piecewise) polynomial templates, then it would be straightforward to support continuous distributions as well.

The failure to obtain an exact lower bound for \texttt{feng23\_ex19}, the need for manual hints in \texttt{chatterjee22\_fig14}, and the timeout in \texttt{chatterjee22\_fig23} may be improved by future enhancements to the backend solver, \textsc{PolyQEnt}. However, it is also true that the current implementation places a heavy burden on \textsc{PolyQEnt} by requiring it to simultaneously search for the four witnesses in Theorem~\ref{thm:unbounded-reasoning-principle}. We leave improvements to this part as future work.

\section{Related Work}
\label{sec:related}

One of the most important quantitative properties of probabilistic programs is the termination probability.
The verification of almost sure termination is a special case where the lower bound of the termination probability is $1$, which is sometimes called \emph{qualitative} termination analysis.
There is a line of work on verifying almost sure termination of probabilistic programs via ranking supermartingales~\cite{ChakarovCAV2013,FerrerFioritiPOPL2015,ChatterjeeCAV2016,TakisakaATVA2018,Kenyon-RobertsLICS2021,AgrawalPOPL2018,TakisakaCAV2024}.
Some of these works have also implemented template-based reduction to constraint solving problems such as linear programming or semidefinite programming.
\emph{Quantitative} termination analysis, where lower bounds can be any $p \in [0, 1]$, has been studied in~\cite{ChatterjeePOPL2017,ChatterjeeCAV2022,MajumdarPOPL2025}, where the notion of \emph{stochastic invariants} play an important role.
We note that using under-approximation \( E' \) in Theorem~\ref{thm:unique-fixed-point-by-ranking-supermartingale} allows us to simulate the use of stochastic invariants, as we describe in \referappendix{Appendix}{D.3}{sec:simulating-stochastic-invariants}.

A more general problem, lower bounds of the weakest preexpectation, has been studied in~\cite{McIver2005,HarkPOPL2020,FengOOPSLA2023}.
It was pointed out that the notion of uniform integrability is closely related to lower bounds of least fixed points~\cite[Theorem 10]{HarkPOPL2020}.
Uniform integrability gives a necessary and sufficient condition for the lower bounds, but it is not easy to check in general.
A few sufficient criteria for uniform integrability have been proposed in~\cite{HarkPOPL2020}, but these criteria are restricted to programs that are almost surely terminating.
Our Theorem~\ref{thm:unique-fixed-point-by-ranking-supermartingale} can be viewed as a new sufficient condition for uniform integrability \cite[Definition 11]{HarkPOPL2020}.
Proof principles for lower bounds have been also proposed in~\cite{McIver2005} but are restricted to bounded expectations.
To solve these limitations, the guard-strengthening technique has been proposed in~\cite{FengOOPSLA2023}.
However, there are cases where guard-strengthening is not applicable (e.g.~\cite[Example~30]{FengOOPSLA2023}).

Lower bounds of expected runtime have been studied in~\cite{HarkPOPL2020,FuVMCAI2019,WangPLDI2021a,WangPLDI2019,ChatterjeeOOPSLA2024}.
These methods are based on optional stopping theorems, which typically require some difference-boundedness conditions, such as difference-boundedness~\cite{FuVMCAI2019}, conditional difference-boundedness~\cite{HarkPOPL2020}, and bounded-update property~\cite{WangPLDI2021a,WangPLDI2019}.
An exception is the method proposed in~\cite{ChatterjeeOOPSLA2024}\footnote{This work mainly focuses on giving upper bounds. However since it allows general cost, including negative values, it can also be applied to lower-bound analysis by negating the cost.}, which instead requires the lower-bounded total cost condition.
Some of these works~\cite{WangPLDI2019,WangPLDI2021a,ChatterjeeOOPSLA2024} allows both positive and negative costs.
Although our setting does not allow negative expectations, we can handle negative costs to some extent, as we describe in \referappendix{Appendix}{G}{sec:experiment-negative-cost}.

The work~\cite{WangPLDI2024} studies lower bounds of expected weight of probabilistic programs with soft conditioning.
It proposes an OST-based method, which requires the bounded-update property, as well as a method based on the uniqueness of fixed points, which requires boundedness of expectations and almost-sure termination.

As far as we know, the only existing methods that apply the uniqueness of fixed points to probabilistic programs is $\gamma$-scaling submartingales~\cite{TakisakaATVA2018,UrabeLICS2017}, which give lower bounds of the termination probability, and the approaches in~\cite{WangPLDI2021,WangPLDI2024} rely on the almost-sure termination.
For the former, the idea of $\gamma$-scaling was obtained through abstract study of fixed points using category theory.
The key idea of their framework is to give the set of truth values (in this case, $[0, 1]$) a mathematical structure that guarantees the uniqueness.
Their approach can be seen as the ``dual'' of ours: while they refine the notion of truth values to obtain a unique fixed point, we modify the verification target to achieve the same result.
It is also unclear how we can give such a structure to the unbounded case (i.e.\ $[0, \infty]$).
The latter approaches~\cite{WangPLDI2021,WangPLDI2024} are closer to ours, but they are restricted to the bounded setting and do not give a general reasoning principle for lower bounds of unbounded expectations.

\section{Conclusions and Future Work}\label{sec:conc}
We proposed a new method for obtaining lower bounds of the least fixed point.
Our method is uniformly applicable to a wide range of quantitative properties of probabilistic programs, including the weakest preexpectation (with conditioning), the expected runtime, and higher moments of the runtime.
Technically, we give a new sufficient condition for the uniqueness of fixed points by generalising ranking supermartingales.
This result led to a new reasoning principle for lower bounds, which is applicable to quantitative verification problems that were beyond the scope of existing methods.
We also implemented our method and showed the effectiveness of our method through experiments.
As future work, we aim to apply exponential templates \cite{WangPLDI2021} to our framework and to implement it.
Handling negative costs is another interesting future work (\referappendix{Appendix}{G}{sec:experiment-negative-cost}).

\section*{Data Availability Statement}
\OurTool\ source code can be found at \OurToolUrl. We also provide an artifact \cite{artifact} to reproduce the evaluation in Section~\ref{sec:implementation}, including a snapshot of \OurTool, docker images, and benchmarks.

\begin{acks}
We thank the anonymous reviewers for their helpful comments and suggestions.
This study was supported by JST K Program Grant Number JPMJKP24U2; JSPS KAKENHI Grant Number JP23K24820, JP25H00446, JP25K21183, JP25K24739.
\end{acks}

\bibliographystyle{ACM-Reference-Format}
\bibliography{lower-bound,abbrv,prog_lang,arxiv}

\iflong
\clearpage
\appendix
\section{Formal Definition of the Weakest Preexpectation Transformer and its Representation as Fixed Point Equations}\label{appx:wp-formally}
Table~\ref{tab:wp} defines the weakest preexpectation transformer, and
Table~\ref{tab:equation-systems-wp} defines its translation to the fixed point equation system.

\begin{table}[b]
	\caption{The weakest preexpectation.
	For any $b : D \to \{ \mathbf{true}, \mathbf{false} \}$ and $f_1, f_2 : D \to [0, \infty]$, we define $\ifexpr{b}{f_1}{f_2} : D \to [0, \infty]$ by $(\ifexpr{b}{f_1}{f_2})(x) = f_1(x)$ if $b(x) = \mathbf{true}$ and $(\ifexpr{b}{f_1}{f_2})(x) = f_2(x)$ if $b(x) = \mathbf{false}$.}
	\label{tab:wp}
	\centering
	\begin{tabular}{l@{\hskip 1em}l}
		program $c$ \hspace{8em} & $\mathrm{wp}[c](f)$ \\
		\hline
		$\skipstmt$ & $f$ \\
		$c_1; c_2$ & $\mathrm{wp}[c_1](\mathrm{wp}[c_2](f))$ \\
		$\assign{x}{e}$ & $f[e/x]$ \\
		$\pbranch{c_1}{p}{c_2}$ & $\interpret{p} \cdot \mathrm{wp}[c_1](f) + (1 - \interpret{p}) \cdot \mathrm{wp}[c_2](f)$ \\
		$\ifstmt{\varphi}{c_1}{c_2}$ & $\ifexpr{\interpret{\varphi}}{\mathrm{wp}[c_1](f)}{\mathrm{wp}[c_2](f)}$ \\
		$\while{\varphi}{c}$ & $\mu \Phi^{\mathrm{wp}}$ \quad where $\Phi(X) \coloneqq \ifexpr{\interpret{\varphi}}{\mathrm{wp}[c](X)}{f}$
	\end{tabular}
\end{table}

\begin{table}[tbp]
	\caption{The translation into equation systems for weakest preexpectation.
		Here $(F_i, E_i) = \mathrm{wp}'[c_i](F, E)$ for $i = 1, 2$ and $(F', E') = \mathrm{wp}'[c](X(\widetilde{x}), E)$.
		$F[e/x]$ denotes substitution.}
	\label{tab:equation-systems-wp}
	\centering
	\begin{tabular}{l@{\hskip 1em}l}
		program $c$ \hspace{8em} & $\mathrm{wp}'[c](F, E)$ \\
		\hline
		$\skipstmt$ & $(F, E)$ \\
		$c_1; c_2$ & $\mathrm{wp}'[c_1](\mathrm{wp}'[c_2](F, E))$ \\
		$\assign{x}{e}$ & $(F[e/x], E)$ \\
		$\pbranch{c_1}{p}{c_2}$ & $(p \cdot F_1 + (1 - p) \cdot F_2, E_1 \cup E_2)$ \\
		$\ifstmt{\varphi}{c_1}{c_2}$ & $(\ifexpr{\varphi}{F_1}{F_2}, E_1 \cup E_2)$ \\
		$\while{\varphi}{c}$ & $(X(\widetilde{x}), E' \cup \{ X(\widetilde{x}) =_{\mu} \ifexpr{\varphi}{F'}{F} \})$
	\end{tabular}
\end{table}

\section{Proof of Lemma~\ref{lem:affine-k-induces-linear-knt}}
	We prove \( \nonterm{K}(\alpha \eta) = \alpha \nonterm{K}(\eta) \).
	\begin{itemize}
		\item Case \( \alpha = 0 \): Then \( \nonterm{K}(0 \eta) = \nonterm{K}(\probzero) = K(\probzero) - K(\probzero) = 0 = \alpha \nonterm{K}(\eta) \).
		\item Case \( \alpha = 1 \): Trivial.
		\item Case \( \alpha \in (0,1) \): Since \( K \) is affine,
			\begin{equation*}
				K(\alpha \eta) = K(\alpha \eta + (1-\alpha)\probzero) = \alpha K(\eta) + (1-\alpha)K(\probzero).
			\end{equation*}
			If \( K(\alpha \eta) < \infty \), then \( K(\eta), K(\probzero) < \infty \), and
			\begin{align*}
				\nonterm{K}(\alpha \eta)
				&= K(\alpha \eta) - K(\probzero)
				\\
				&= \alpha K(\eta) + (1-\alpha)K(\probzero) - K(\probzero)
				\\
				&= \alpha K(\eta) - \alpha K(\probzero)
				\\
				&= \alpha \nonterm{K}(\eta).
			\end{align*}
			Assume \( K(\alpha \eta) = \infty \).
			Then \( K(\eta) = \infty \).
			If \( K(\probzero) \neq \infty \), then both \( \alpha \nonterm{K}(\eta) = \alpha K(\eta) - \alpha K(\probzero) \) and \( \nonterm{K}(\alpha \eta) = K(\alpha \eta) - K(\probzero) \) are \( \infty \).
			If \( K(\probzero) = \infty \), by the monotonicity of \( K \), it is the constant function to \( \infty \).
			So \( \nonterm{K} \) is the constant function to \( 0 \), which trivially satisfies the condition.
		\item Case \( \alpha \in (1, \infty) \):
			Applying the above argument to \( \beta := (1/\alpha) \), we have \( \nonterm{K}(\beta \eta') = \beta \nonterm{K}(\eta') \) for every \( \eta' \).
			We obtain the claim by taking \( \eta' := \alpha \eta \) and multiplying \( \alpha \) to both sides.
	\end{itemize}

	We prove \( \nonterm{K}(\eta_1 + \eta_2) = \nonterm{K}(\eta_1 + \eta_2) \):
	\begin{align*}
		\nonterm{K}(\eta_1 + \eta_2)
		&= K(\eta_1 + \eta_2) - K(\probzero)
		\\
		&= K\left(\dfrac{1}{2} (2\eta_1) + \dfrac{1}{2}(2\eta_2)\right) - K(\probzero)
		\\
		&= \dfrac{1}{2}K(2\eta_1) + \dfrac{1}{2}K(2\eta_2)-K(0)
		\\
		&= \dfrac{1}{2}\nonterm{K}(2\eta_1) + \dfrac{1}{2}\nonterm{K}(2\eta_2)
		\\
		&= \nonterm{K}(\eta_1) + \nonterm{K}(\eta_2).
	\end{align*}

\iflong
\else
\section{Remarks on Greatest Fixed Points}\label{sec:upper-bound-gfp}
There are also quantitative properties of probabilistic programs that can be characterised by the greatest fixed point (e.g.\ the weakest liberal preexpectation and the conditional weakest preexpectation~\cite{OlmedoTOPLAS2018}).
We comment on the applicability of our result to greatest fixed points.

It is straightforward to apply our result to 1-bounded equation systems.
Once we establish the uniqueness of fixed points by Theorem~\ref{thm:unique-fixed-point-1-bounded}, we can give an upper bound of the greatest fixed point by a prefixed point.
Alternatively, we can consider the (order-reversing) isomorphism $1 - ({-}) : ([0, 1], {\le}) \to ([0, 1], {\ge})$ to translate greatest fixed points to least fixed points.

For unbounded expectations, however, it is not clear how we can use our result because the restriction to $\mathbb{E}_{\le u}(D)$ changes the greatest fixed point: $\nu \interpret{E} = \inf_n \interpret{E}^n(\mathbf{\infty})$ may be different from $\nu \restrictby{\interpret{E}}{u} = \inf_n \interpret{E}^n(u)$.
In practice, most of the properties characterised by the greatest fixed point are 1-bounded, and thus, we do not consider this problem in this paper.

Mixing least and greatest fixed points is often studied for verification of non-probabilistic programs.
For example, model checking for modal $\mu$-calculus is an example of such a problem.
Mixing least and greatest fixed points in our setting is an interesting direction, but we leave it for future work.
\fi

\section{More Examples}

\begin{example}\label{ex:olmedo18}
	The conditional weakest preexpectation is defined as follows.
	\[ \mathrm{cwp}[c](f) \quad\coloneqq\quad \frac{\mathrm{cwp}_1[c](f)}{\probone - \mathrm{cwp}_2[c](\probzero)} \]
	We can give a translation to equation systems as in Section~\ref{sec:soft-hard-conditioning}.
	If we have lower bounds $l_1 \le \mathrm{cwp}_1[c](f)$ and $l_2 \le \mathrm{cwp}_2[c](\probzero)$, then a lower bound of $\mathrm{cwp}[c](f)$ is given as follows.
	\[ \mathrm{cwp}[c](f) \quad=\quad \frac{\mathrm{cwp}_1[c](f)}{1 - \mathrm{cwp}_2[c](\probzero)} \quad\ge\quad \frac{l_1}{1 - l_2} \]

	Now, consider the following program taken from~\cite[Example~4.4]{OlmedoTOPLAS2018}.
	\begin{align*}
		c_{\mathrm{tails}} \quad&=\quad \assign{m}{0};\ \assign{b_1, b_2, b_3}{\mathbf{true}};\\
		&\qquad\qquad\while{b_1 \lor b_2 \lor b_3}{\\
			&\qquad\qquad\qquad (\pbranch{\assign{b_1}{\mathbf{true}}}{1/2}{\assign{b_1}{\mathbf{false}}});\\
			&\qquad\qquad\qquad (\pbranch{\assign{b_2}{\mathbf{true}}}{1/2}{\assign{b_2}{\mathbf{false}}});\\
			&\qquad\qquad\qquad (\pbranch{\assign{b_3}{\mathbf{true}}}{1/2}{\assign{b_3}{\mathbf{false}}});\\
			&\qquad\qquad\qquad \observe{\lnot b_1 \lor \lnot b_2 \lor \lnot b_3};\\
			&\qquad\qquad\qquad \assign{m}{m + 1}\\
		&\qquad\qquad}
	\end{align*}
	Here, $m$ is an integer variable, $b_1, b_2, b_3$ are boolean variables.
	We aim to obtain a lower bound of $\mathrm{cwp}[c_{\mathrm{tails}}]([m = N])$ where $N > 0$.
	We first consider the equation system for $\mathrm{cwp}_1[c_{\mathrm{tail}}]([m = N])$, which is given as follows.
	\[ X_1(m, b_1, b_2, b_3) \quad=_{\mu}\quad \ifexpr{b_1 \lor b_2 \lor b_3}{F_1}{[m = N]} \]
	where $\mathrm{cwp}_1[c_{\mathrm{tail}}]([m = N]) = X_1(0, \mathbf{true}, \mathbf{true}, \mathbf{true})$ and $F_1$ is defined as follows.
	\[ F_1 \quad=\quad \sum_{b_1, b_2, b_3 \in \{ \mathbf{true}, \mathbf{false} \}} \frac{1}{8} \cdot \ifexpr{\lnot b_1 \lor \lnot b_2 \lor \lnot b_3}{X_1(m+1, b_1, b_2, b_3)}{0}. \]
	A lower-bound certificate for $X_1(m, b_1, b_2, b_3)$ is given as follows.
	\begin{align*}
		&\text{prefixed point} & u_1(m, b_1, b_2, b_3) &= 1 \\
		&\text{$u$-ranking supermartingale} & r_1(m, b_1, b_2, b_3) &= 8 \\
		&\text{postfixed point} & l_1(m, b_1, b_2, b_3) &= \begin{cases}
			[m = N] & b_1 \lor b_2 \lor b_3 = \mathbf{false} \\
			\frac{1}{6} \left( \frac{3}{4} \right)^{N - m} & (b_1 \lor b_2 \lor b_3 = \mathbf{true}) \land m < N \\
			0 & (b_1 \lor b_2 \lor b_3 = \mathbf{true}) \land m \ge N
		\end{cases}
	\end{align*}

	The equation system for $\mathrm{cwp}_2[c_{\mathrm{tail}}](\probzero)$ is given as follows.
	\[ X_2(m, b_1, b_2, b_3) \quad=_{\mu}\quad \ifexpr{b_1 \lor b_2 \lor b_3}{F_2}{0} \]
	where $\mathrm{cwp}_2[c_{\mathrm{tail}}](\probzero) = X_2(0, \mathbf{true}, \mathbf{true}, \mathbf{true})$ and
	\[ F_2 \quad=\quad \sum_{b_1, b_2, b_3 \in \{ \mathbf{true}, \mathbf{false} \}} \frac{1}{8} \cdot \ifexpr{\lnot b_1 \lor \lnot b_2 \lor \lnot b_3}{X_2(m+1, b_1, b_2, b_3)}{1}. \]
	A lower-bound certificate for $X_2(m, b_1, b_2, b_3)$ is given as follows.
	\begin{align*}
		&\text{prefixed point} & u_2(m, b_1, b_2, b_3) &= 1 \\
		&\text{$u$-ranking supermartingale} & r_2(m, b_1, b_2, b_3) &= 8 \\
		&\text{postfixed point} & l_2(m, b_1, b_2, b_3) &= \begin{cases}
			0 & b_1 \lor b_2 \lor b_3 = \mathbf{false} \\
			\frac{1}{2} & b_1 \lor b_2 \lor b_3 = \mathbf{true}
		\end{cases}
	\end{align*}
	Therefore, we obtain the following lower bound, which is actually the exact value of $\mathrm{cwp}[c_{\mathrm{tails}}]([m = N])$.
	\[ \mathrm{cwp}[c_{\mathrm{tails}}]([m = N]) \quad\ge\quad \frac{l_1(0, \mathbf{true}, \mathbf{true}, \mathbf{true})}{1 - l_2(0, \mathbf{true}, \mathbf{true}, \mathbf{true})} = \frac{1}{3} \left( \frac{3}{4} \right)^{N} \]
\end{example}

\begin{example}[expected runtime of the coupon collector]\label{ex:coupon-collector-exact}
	Consider the following equation system for the coupon collector problem \cite[Example~47]{HarkPOPL2020}.
	\begin{align*}
		X(x) \quad&=_{\mu}\quad \ifexpr{0 < x}{Y(x, N + 1)}{0} \\
		Y(x, i) \quad&=_{\mu}\quad \ifexpr{x < i}{1 + \frac{1}{N} \sum_{i' = 1}^N Y(x, i')}{X(x - 1)}
	\end{align*}
	Here, $x$ and $i$ are integers.
	The least fixed point is given as follows.
	\begin{align*}
		X(x) &= \begin{cases}
			0 & x \le 0 \\
			N H_x & 0 < x \le N \\
			N H_N & N < x
		\end{cases} &
		Y(x, i) &= \begin{cases}
			\infty & x < i \land x \le 0 \\
			N H_{x} & x < i \land 0 < x \le N \\
			1 + N H_{N} & x < i \land N < x \\
			0 & x \ge i \land x \le 0 \\
			N H_{x - 1} & x \ge i \land 0 < x \le N \\
			N H_{N} & x \ge i \land N < x
		\end{cases}
	\end{align*}
	where $H_n = \sum_{k = 1}^n 1 / k$ is the $n$-th harmonic number.
	We consider the following under-approximate equation system.
	\begin{align*}
		X'(x) \quad&=_{\mu}\quad \ifexpr{0 < x}{Y'(x, N + 1)}{0} \\
		Y'(x, i) \quad&=_{\mu}\quad \ifexpr{x \le 0}{0}{\ifexpr{x < i}{1 + \frac{1}{N} \sum_{i' = 1}^N Y(x, i')}{X(x - 1)}}
	\end{align*}
	Then, we have the following pre- and post-fixed point.
	\begin{align*}
		u_X(x) = l_X(x) &= \begin{cases}
			0 & x \le 0 \\
			N H_x & 0 < x \le N \\
			N H_N & N < x
		\end{cases} \\
		u_Y(x, i) = l_Y(x, i) &= \begin{cases}
			0 & x \le 0 \\
			N H_{x} & x < i \land 0 < x \le N \\
			1 + N H_{N} & x < i \land N < x \\
			N H_{x - 1} & x \ge i \land 0 < x \le N \\
			N H_{N} & x \ge i \land N < x
		\end{cases}
	\end{align*}
	As for the $u$-ranking supermartingale, it suffices to give a $1$-ranking supermartingale because $u_X$ and $u_Y$ are bounded by $1 + N H_N$.
	A $1$-ranking supermartingale is given as follows.
	\begin{align*}
		r_X(x) &= \begin{cases}
			1 & x \le 0 \\
			2 N H_x + 2 x + 1 & 0 < x \le N \\
			2 N H_N + 2 x + 1 & N < x
		\end{cases} \qquad&
		r_Y(x, i) &= \begin{cases}
			1 & x \le 0 \\
			2 N H_{x} + 2 x & x < i \land 0 < x \le N \\
			1 + N H_{N} + 2 x & x < i \land N < x \\
			N H_{x - 1} + 2 x & x \ge i \land 0 < x \le N \\
			N H_{N} + 2 x & x \ge i \land N < x
		\end{cases}
	\end{align*}
	Then, the pair of $(1 + N H_N) \cdot r_X$ and $(1 + N H_N) \cdot r_Y$ is a $u$-ranking supermartingale.
\end{example}

\begin{example}
	We give an approximate lower bound of Example~\ref{ex:coupon-collector-exact} using logarithmic bounds.
	Recall that the $n$-th harmonic number $H_n$ has the following logarithmic bound.
	\[ \ln (n + 1) \quad\le\quad H_n \quad\le\quad \ln n + 1 \qquad (n \ge 1) \]
	We can verify that the following function is a prefixed point of the equation system.
	\begin{align*}
		u_X(x) &= \begin{cases}
			0 & x \le 0 \\
			N (\ln x + 1) & 0 < x \le N \\
			N (\ln N + 1) & N < x
		\end{cases} \qquad&
		u_Y(x, i) &= \begin{cases}
			0 & x \le 0 \\
			N (\ln x + 1) & x < i \land 0 < x \le N \\
			1 + (\ln N + 1) & x < i \land N < x \\
			0 & x \ge i \land x = 1 \\
			N (\ln (x - 1) + 1) & x \ge i \land 1 < x \le N \\
			N (\ln N + 1) & x \ge i \land N < x
		\end{cases}
	\end{align*}
	The non-trivial part is the case $x < i \land 0 < x \le N$ for $u_Y$.
	If $x > 1$, then we have the following.
	\begin{align*}
		1 + \frac{1}{N} \sum_{i' = 1}^N u_Y(x, i') &= 1 + \frac{x}{N} \cdot N (\ln (x - 1) + 1) + \frac{N - x}{N} \cdot N (\ln x + 1) \\
		&= 1 + x \ln \frac{x - 1}{x} + N (\ln x + 1) \\
		&\le N (\ln x + 1) \\
		&= u_Y(x, i)
	\end{align*}
	If $x = 1$, then we have the following.
	\[ 1 + \frac{1}{N} \sum_{i' = 1}^N u_Y(x, i') = 1 + \frac{N - 1}{N} \cdot N = N = u_Y(x, i) \]
	To give a $u$-ranking supermartingale, it suffices to give a $1$-ranking supermartingale because $u_X$ and $u_Y$ are bounded.
	\begin{align*}
		r_X(x) &= \begin{cases}
			1 & x \le 0 \\
			2 N (\ln x + 1) + 2 x + 1 & 0 < x \le N \\
			2 N (\ln N + 1) + 2 x + 1 & N < x
		\end{cases} \\
		r_Y(x, i) &= \begin{cases}
			1 & x \le 0 \\
			2 N (\ln x + 1) + 2 x & x < i \land 0 < x \le N \\
			1 + N (\ln N + 1) + 2 x & x < i \land N < x \\
			2 & x \ge i \land x = 1 \\
			N (\ln (x - 1) + 1) + 2 x & x \ge i \land 1 < x \le N \\
			N (\ln N + 1) + 2 x & x \ge i \land N < x
		\end{cases}
	\end{align*}
	A postfixed point is given as follows.
	\begin{align*}
		l_X(x) &= \begin{cases}
			0 & x \le 0 \\
			N \ln (x + 1) & 0 < x \le N \\
			N \ln (N + 1) & N < x
		\end{cases} \qquad&
		l_Y(x, i) &= \begin{cases}
			0 & x \le 0 \\
			N \ln (x + 1) & x < i \land 0 < x \le N \\
			1 + N \ln (N + 1) & x < i \land N < x \\
			N \ln x & x \ge i \land 0 < x \le N \\
			N \ln (N + 1) & x \ge i \land N < x
		\end{cases}
	\end{align*}
	Again, the non-trivial part is the case $x < i \land 0 < x \le N$ for $u_Y$.
	If $x > 1$, then we have the following.
	\begin{align*}
		1 + \frac{1}{N} \sum_{i' = 1}^N l_Y(x, i') &= 1 + \frac{x}{N} \cdot N \ln x + \frac{N - x}{N} \cdot N \ln (x + 1) \\
		&= 1 + x \ln \frac{x}{x + 1} + N \ln (x + 1) \\
		&\ge N \ln (x + 1) \\
		&= l_Y(x, i)
	\end{align*}
	If $x = 1$, then we have the following.
	\[ 1 + \frac{1}{N} \sum_{i' = 1}^N l_Y(x, i') = 1 + \frac{N - 1}{N} \cdot N \ln 2 = N \ln 2 + 1 - \ln 2 \ge l_Y(x, i) \]
\end{example}

\section{Simulating Existing Techniques in Our Framework}
\label{sec:simulating-existing-techniques-by-underapproximation}

\subsection{Simulating $\gamma$-Scaled Submartingales}
We can simulate the idea of $\gamma$-scaled submartingales by scaling the right-hand sides of a 1-bounded equation system by $0 < \gamma < 1$.
Given an equation system
\[ E \quad=\quad \{\ X_1 =_{\mu} F_1, \dots, X_n =_{\mu} F_n\ \} \]
we define the $\gamma$-scaled equation system as follows.
\[ E' \quad=\quad \{\ X_1 =_{\mu} \gamma \cdot F_1, \dots, X_n =_{\mu} \gamma \cdot F_n\ \} \]
Here, the scalar multiplication $\gamma \cdot ({-})$ is extended to quantitative formulas in the natural way.

Then, $E'$ has a unique fixed point and satisfies $\interpret{E'} \le \interpret{E}$.
Since $E$ is 1-bounded, we have $\interpret{E}(\mathbf{1}) \le \mathbf{1}$ and $\interpret{\nonterm{E}}(\mathbf{1}) \le \mathbf{1}$.
By definition of $E'$, we have $\interpret{E'} = \gamma \cdot \interpret{E}$ and $\interpret{\nonterm{E'}} = \gamma \cdot \interpret{\nonterm{E}}$.
Therefore, the constant function $r(x) = 1 / (1 - \gamma)$ is a 1-ranking supermartingale for $E'$.
\[ \interpret{\nonterm{E'}}(r) + \mathbf{1} \quad=\quad \frac{\gamma}{1 - \gamma} \cdot \interpret{\nonterm{E}}(\mathbf{1}) + \mathbf{1} \quad\le\quad \frac{1}{1 - \gamma} \cdot \mathbf{1} \quad=\quad r \]

\subsection{Simulating Guard-Strengthening}\label{sec:guard-strengthening}
The guard-strengthening~\cite[Corollary 13]{FengOOPSLA2023} is a way to obtain $E'$ from $E$ such that $\interpret{E'} \le \interpret{E}$ holds and $E'$ has a unique fixed point.
Given a program of the form $\mathbf{while}\ (\varphi) \ \{ C \}$, the guard-strengthening~\cite[Corollary 13]{FengOOPSLA2023} gives a lower bound of $\mathrm{wp}[\mathbf{while}\ (\varphi) \ \{ C \}](f)$ as $\mathrm{wp}[\mathbf{while}\ (\varphi') \ \{ C \}]([\lnot \varphi] \cdot f)$.

We can simulate the guard-strengthening in our framework.
The weakest preexpectation $\mathrm{wp}[\while{\varphi}{C}](f)$ is translated into the following equation system.
\[ E \quad=\quad \{ X(x) =_{\mu} \ifexpr{\varphi}{\mathrm{wp}[C](X)}{f} \} \]
The guard-strengthening is simulated by defining $E'$ as follows.
\begin{equation}
	E' \quad=\quad \{ X'(x) =_{\mu} \ifexpr{\varphi' \lor \lnot \varphi}{(\ifexpr{\varphi}{\mathrm{wp}[C](X')}{f})}{0} \}
	\label{eq:guard-strengthening}
\end{equation}
The quantitative predicate $X'$ is defined in the same way as $X$ except when neither $\varphi'$ nor $\lnot \varphi$ holds, in which case the value of $X'$ is truncated to $0$.
That is, we have the following equation.
\[ \interpret{E'}(X')(x) = \begin{cases}
	\interpret{E}(X')(x) & \text{if $x$ satisfies $\varphi' \lor \lnot \varphi$} \\
	0 & \text{otherwise}
\end{cases} \]
Therefore, we have $\interpret{E'} \le \interpret{E}$.
If $\varphi' \implies \varphi$ holds, then we can rewrite \eqref{eq:guard-strengthening} as follows.
\[ X'(x) =_{\mu} \ifexpr{\varphi'}{\mathrm{wp}[C](X')}{[\lnot \varphi] \cdot f} \]
This coincides with the equation system for $\mathrm{wp}[\while{\varphi'}{C}]([\lnot \varphi] \cdot f)$.

\subsection{Simulating Stochastic Invariants}
\label{sec:simulating-stochastic-invariants}
Stochastic invariants provide a method~\cite{ChatterjeeCAV2022} to reason about lower bounds of termination probabilities.
A stochastic invariant $(I, p)$ is a pair of a set $I$ of states of a probabilistic system and $p \in [0, 1]$ such that the probability of the system leaving $I$ is at most $p$.
If there exists a stochastic invariant $(I, p)$ and the probabilistic system almost surely terminates or leaves $I$, then the termination probability is at least $1 - p$.
We can simulate the idea of stochastic invariants in our framework.
The basic idea is similar to guard-strengthening.
We define an equation system $E'$ such that $\interpret{E'} \le \interpret{E}$ by restricting the domain to the set $I$.
Then, a termination certificate ensures that $\interpret{E'}$ has a unique fixed point.
A non-trivial part is that a certificate for the stochastic invariant, which is a prefixed point of some function, can be turned into a postfixed point by inverting a certificate by $1 - ({-})$.
We will describe this in detail below.

For simplicity, suppose that we have a while loop $\while{\varphi}{c}$.
Then, the termination probability is given by $\mathrm{wp}[\while{\varphi}{c}](1)$, which is the solution of the following equation system.
\[ E \quad=\quad \{ X =_{\mu} \ifexpr{\varphi}{\mathrm{wp}[c](X)}{1} \} \]
Let $D$ be the set of program states.
Observe that $\interpret{E} : \mathbb{E}(D) \to \mathbb{E}(D)$ is an affine function such that $\interpret{E} (1) = 1$.
Note that we have $\interpret{E} (1) = 1$ for the case of termination probability problems because by Table~\ref{tab:equation-systems-wp}, if $F$ and $E$ have total weight 1, then so is $\mathrm{wp}'[c](F, E)$.
Here, we say $K(\mathbf{1}) \in \mathbb{E}(D)$ is the \emph{total weight} of $K : \mathbb{E}(D) \to \mathbb{E}(D)$.

Now, suppose we have a stochastic invariant $(I, p)$ such that the probabilistic system almost surely terminates or leaves $I$.
We assume that the stochastic invariant is witnessed by a stochastic invariant indicator $f : D \to [0, \infty)$~\cite{ChatterjeeCAV2022} and that the almost sure termination is witnessed by a ranking supermartingale $r : D \to [0, \infty)$.
Concretely, $f$ and $r$ satisfy the following conditions.
\begin{itemize}
	\item $\interpret{E} (f) (x) \le f(x)$ for any $x \in I$
	\item $f(x) \ge 1$ for any $x \notin I$
	\item $f(x_0) \le p$ where $x_0$ is the initial state
	\item $\nonterm{\interpret{E}} (r) (x) + 1 \le r(x)$ for any $x \in I$
\end{itemize}
In this situation, we can define a lower-bound certificate for the termination probability as follows.
\begin{itemize}
	\item We define $E'$ such that $\interpret{E'} \le \interpret{E}$ as follows.
	\[ E' \quad=\quad \{ X =_{\mu} F'(X) \} \quad\text{where}\quad F' (X) (x) = \chi_I \cdot \interpret{E}(X)(x) \]
	This definition is the same that for the guard-strengthening described in Section~\ref{sec:guard-strengthening}.
	Note that $\nonterm{\interpret{E'}} = \nonterm{F'}$ is given as follows.
	\[ \nonterm{F'} (X) (x) = \chi_I \cdot \nonterm{\interpret{E}}(X)(x) \]
	\item We define $u = 1$. Then, $u$ is a prefixed point of $\interpret{E'}$ because $\interpret{E'} 1 \le \interpret{E} 1 = 1$ holds.
	\item We define $r' = r + 1$. Then, $r'$ is a $u$-ranking supermartingale.
	We have $\nonterm{\interpret{E'}} (r') (x) + 1 \le r'(x)$ for any $x \in I$.
	If $x \in I$, then
	\begin{align*}
		\nonterm{\interpret{E'}} (r') (x) + 1 &= \nonterm{\interpret{E}}(r')(x) + 1 \\
		&\le \nonterm{\interpret{E}} (r)(x) + 2 \\
		&\le r(x) + 1 \\
		&= r'(x).
	\end{align*}
	If $x \notin I$, then $\nonterm{\interpret{E'}} (r') (x) + 1 = 1 \le r'(x)$.
	Here, we have $r' \ge 1$ because we defined $r'$ as $r' = r + 1$.
	\item We define $f' = \min \{ f, 1 \}$. Then, we have $\interpret{E} (f') (x) \le f'(x)$ for any $x \in I$ and $f'(x) = 1$ for any $x \notin I$.
	This is a \emph{prefixed} point of $X \mapsto \ifexpr{({-}) \in I}{\interpret{E}(X)}{1}$.
	Now, we define $l = 1 - f'$.
	Then, $l$ is a \emph{postfixed} point of $\interpret{E'}$.
	If $x \in I$, then
	\[ 1 - f'(x) \le 1 - \interpret{E} (f')(x) \le \interpret{E}(1 - f')(x). \]
	If $x \notin I$, then
	\[ 1 - f'(x) = 0 = \interpret{E'} (1 - f')(x). \]
	Obviously, $l \le u$ also holds.
\end{itemize}
This lower-bound certificate witnesses that the termination probability is at least $l$.
Specifically, if the initial state is $x_0$, then the termination probability is at least $l(x_0) \ge 1 - p$.

\section{Nondeterminism}

\subsection{Demonic Nondeterminism}\label{sec:demonic-nondeterminism}

We consider extending our results to demonic nondeterminism, which means that we aim to obtain a lower bound for \emph{any} scheduler that resolves nondeterminism.
Specifically, we consider fixed-point equation systems that contain the $\min$ operator.
\begin{equation*}
	F \quad\coloneqq\quad X(\widetilde{e}) \mid t \mid F_1 + F_2 \mid t \cdot F \mid \ifexpr{\varphi}{F_1}{F_2} \mid \min \{ F_1, \dots, F_n \}
\end{equation*}
The interpretation of quantitative formulas is extended as follows.
\[ \interpret{\min \{ F_1, \dots, F_n \}} \quad\coloneqq\quad \min \{ \interpret{F_1}, \dots, \interpret{F_n} \} \]

\begin{example}[demonically fair random walk]\label{ex:demonically-fair-random-walk}
	\cite[Section~5.2]{McIverPOPL2018}
	We write $\square$ for nondeterministic branching.
	\[ \while{x > 0}{\pbranch{\assign{x}{x - 1}}{1 / 2}{(\assign{x}{x + 1} \mathrel{\square} \skipstmt)}} \]
	The weakest preexpectation is given using the $\min$ operator \cite{McIver2005}.
	\[ X(x) \quad=_{\mu}\quad \ifexpr{x > 0}{\frac{1}{2} X(x - 1) + \frac{1}{2} \min \{ X(x + 1), X(x) \}}{1} \tag*{\qed} \]
\end{example}
Quantitative formulas extended with the $\min$ operator are not in the scope of Corollary~\ref{cor:unique-fixed-point-by-ranking-supermartingale} because the $\min$ operator is not affine.
We need to extend our framework to handle the $\min$ operator.

For simplicity, consider the equation system of the following form:
\begin{equation}
	X \quad=_{\mu}\quad \min \{ F_1, \dots, F_n \} \label{eq:min-equation}
\end{equation}
where $F_1, \dots, F_n$ are quantitative formulas over $\{ X \}$ that do not contain the $\min$ operator.
Note that we can always rewrite the equation system into this form by distributing $\min$ over other operators.
\begin{lemma}\label{lem:min-distributive}
	The $\min$ operator is distributive over addition, scalar multiplication, and the if-then-else operator as follows.
	The same equations hold if we replace each $\min$ with $\max$.
	\[ \min \{ F_1, \dots, F_m \} + \min \{ G_1, \dots, G_n \} \quad=\quad \min \{ F_i + G_j \mid i = 1, \dots, m; j = 1, \dots, n \} \]
	\[ t \cdot \min \{ F_1, \dots, F_n \} \quad=\quad \min \{ t \cdot F_1, \dots, t \cdot F_n \} \]
	\begin{align*}
		&\ifexpr{\varphi}{\min \{ F_1, \dots, F_m \}}{\min \{ G_1, \dots, G_n \}} \\
		&=\quad \min \{ \ifexpr{\varphi}{F_i}{G_j} \mid i = 1, \dots, m; j = 1, \dots, n \}
	\end{align*}
\end{lemma}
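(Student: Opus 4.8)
The plan is to reduce the three identities to pointwise identities over the extended non-negative reals $[0,\infty]$, and then verify each of them using only the monotonicity of the relevant operation together with the fact that the minimum (resp.\ maximum) over a finite set is attained. As in the proof of Lemma~\ref{lem:affineness-of-interpretation}, the interpretation of each construct — $+$, $t \cdot (-)$, the if-then-else, and $\min$ — is computed pointwise in $\mathbb{E}(D') = (D' \to [0,\infty])$. Hence it suffices to check, for every $v \in D'$, the corresponding numerical identity in $[0,\infty]$, writing $a_i := \interpret{F_i}(\eta)(v)$, $b_j := \interpret{G_j}(\eta)(v)$, and $c := \interpret{t}(v) \in [0,\infty)$.

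First I would treat the additive case, where the claim becomes $\min_i a_i + \min_j b_j = \min_{i,j}(a_i + b_j)$. For ``$\ge$'', since the index sets are finite the minima are attained, say at $i_0, j_0$, and then $\min_i a_i + \min_j b_j = a_{i_0} + b_{j_0} \ge \min_{i,j}(a_i + b_j)$. For ``$\le$'', monotonicity of $+ \colon [0,\infty] \times [0,\infty] \to [0,\infty]$ in each argument gives $a_i + b_j \ge \min_i a_i + \min_j b_j$ for all $i,j$, whence the minimum of the left-hand sides dominates the right-hand side. The scalar case $c \cdot \min_i a_i = \min_i (c \cdot a_i)$ is analogous: for $c > 0$ the map $c \cdot (-)$ is monotone and $\min$ is attained, while for $c = 0$ both sides equal $0$ because $0 \cdot \infty = 0$ in our convention.

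Next I would handle the if-then-else case, which is the simplest. Fixing $v$, either $\interpret{\varphi}(v)$ is true or false. In the true case both sides evaluate to $\min_i a_i$, since $\interpret{\ifexpr{\varphi}{F_i}{G_j}}(\eta)(v) = a_i$ is independent of $j$, so the minimum over the double-indexed family collapses to $\min_i a_i$; the false case is symmetric. Finally, the $\max$ versions follow by the very same arguments with all inequalities reversed, using that $\max$ over a finite set is likewise attained and that $+$ and $c \cdot (-)$ are monotone.

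I expect the only point requiring care — rather than a genuine obstacle — to be the extended arithmetic on $[0,\infty]$: one must keep track of the conventions $\infty + x = \infty$ and $0 \cdot \infty = 0$ from Notation~\ref{notation:extended-real}, verify that $+$ and $c \cdot (-)$ (for finite $c \ge 0$) remain monotone there, and observe that because each $t$ ranges over $[0,\infty)$ the problematic product $\infty \cdot \infty$ never arises. No limiting or continuity argument is needed, since all index sets are finite.
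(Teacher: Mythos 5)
Your proof is correct and is precisely the ``straightforward calculation'' that the paper's proof omits: a pointwise reduction to identities in $[0,\infty]$, verified via monotonicity of $+$ and $c\cdot({-})$ together with attainment of finite minima/maxima. Your attention to the conventions $0\cdot\infty=0$ and $\infty+x=\infty$ from Notation~\ref{notation:extended-real} is the only point of substance, and you handle it correctly.
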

\begin{proof}
	By straightforward calculations.
\end{proof}

Although the equation system~\eqref{eq:min-equation} itself is not affine, we can transform this into (a collection of) affine equation systems by the following proposition.

\begin{proposition}\label{prop:min-affinise}
	Let $K_1, \dots, K_n : \mathbb{E}(D) \to \mathbb{E}(D)$ be affine functions.
	Then, we have the following.
	\[ \mu X. \min \{ K_1(X), \dots, K_n(X) \} \qquad=\qquad \inf_{\substack{t_1, \dots, t_n : D \to [0, 1] \\ t_1 + \dots + t_n = \probone}} \mu X. \sum_{i = 1}^n t_i \cdot K_i(X) \]
\end{proposition}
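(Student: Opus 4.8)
The plan is to prove the two inequalities separately, relying only on monotonicity and the Knaster--Tarski theorem; the ranking-supermartingale machinery of Section~\ref{sec:generalised-ranking-supermartingales} is not needed here. Write $M(X) := \min\{K_1(X),\dots,K_n(X)\}$ for the (monotone, but non-affine) right-hand side of the $\min$-equation, and for each family $\vec{t} = (t_1,\dots,t_n)$ with $t_i : D \to [0,1]$ and $\sum_i t_i = \probone$ write $L_{\vec{t}}(X) := \sum_{i=1}^n t_i \cdot K_i(X)$ for the associated affine map. Each $L_{\vec{t}}$ is affine (a pointwise weighted sum of affine maps) and monotone, $M$ is monotone, and since $\mathbb{E}(D)$ is a complete lattice all the least fixed points $\mu M$ and $\mu L_{\vec{t}}$ exist. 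The key pointwise fact is that a convex combination dominates the minimum: for every $X$ and every $d \in D$, $M(X)(d) = \min_i K_i(X)(d) \le \sum_i t_i(d)\, K_i(X)(d) = L_{\vec{t}}(X)(d)$, where the extended-real cases ($0\cdot\infty = 0$, some summand $= \infty$) are checked routinely.

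For the inequality $\mu M \le \inf_{\vec{t}} \mu L_{\vec{t}}$, fix $\vec{t}$. The pointwise fact gives $M \le L_{\vec{t}}$ as functions on $\mathbb{E}(D)$. Evaluating at the fixed point $\mu L_{\vec{t}}$ yields $M(\mu L_{\vec{t}}) \le L_{\vec{t}}(\mu L_{\vec{t}}) = \mu L_{\vec{t}}$, so $\mu L_{\vec{t}}$ is a prefixed point of $M$; by the Knaster--Tarski theorem, $\mu M \le \mu L_{\vec{t}}$. As $\vec{t}$ was arbitrary, taking the infimum gives $\mu M \le \inf_{\vec{t}} \mu L_{\vec{t}}$.

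For the reverse inequality I would exhibit a single ``greedy'' witness. For each $d \in D$ choose an index $i^{*}(d)$ attaining $\min_i K_i(\mu M)(d)$ — possible because the minimum ranges over the \emph{finite} set $\{1,\dots,n\}$ — and define $\vec{t}^{*}$ by $t^{*}_{i^{*}(d)}(d) := 1$ and $t^{*}_j(d) := 0$ for $j \neq i^{*}(d)$. Then, pointwise, $L_{\vec{t}^{*}}(\mu M)(d) = K_{i^{*}(d)}(\mu M)(d) = \min_i K_i(\mu M)(d) = M(\mu M)(d) = \mu M(d)$, using that $\mu M$ is a fixed point of $M$. Hence $\mu M$ is itself a fixed point — in particular a prefixed point — of $L_{\vec{t}^{*}}$, so Knaster--Tarski gives $\mu L_{\vec{t}^{*}} \le \mu M$, and therefore $\inf_{\vec{t}} \mu L_{\vec{t}} \le \mu L_{\vec{t}^{*}} \le \mu M$. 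Combining the two inequalities proves the claim.

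The step I expect to be the crux is the construction of the greedy witness $\vec{t}^{*}$: the whole argument hinges on the pointwise minimum being \emph{attained}, which is precisely where the finiteness of $\{K_1,\dots,K_n\}$ is used. In the measurable extension sketched in Appendix~\ref{sec:integration}, this same step would additionally require a measurable-selection argument to ensure that $d \mapsto i^{*}(d)$, and hence $\vec{t}^{*}$, is measurable; with infinitely many branches one would instead pick an $\epsilon$-approximate selector and pass to a limit. The remaining ingredients — affineness and monotonicity of $L_{\vec{t}}$, and the extended-real verification of $\min \le$ weighted average — are routine.
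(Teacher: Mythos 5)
Your proof is correct and follows essentially the same route as the paper's: the $\le$ direction via the pointwise domination $\min_i K_i \le \sum_i t_i K_i$ and Knaster--Tarski, and the $\ge$ direction via the greedy selector $\vec{t}^{*}$ built from an $\argmin$ at $\mu M$, which makes $\mu M$ a fixed point of $L_{\vec{t}^{*}}$. Your added remark on measurable selection in the continuous setting goes beyond what the paper records but does not change the argument.
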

\begin{proof}
	It is immediate to show $\le$ because $\min \{ K_1, \dots, K_n \} \le \sum_{i = 1}^n t_i \cdot K_i$ for any $t_1, \dots, t_n : D \to [0, 1]$ such that $t_1 + \dots + t_n = \probone$.
	For the converse direction, let $K \coloneqq \min \{ K_1, \dots, K_n \}$ and $t_i : D \to \{0, 1\}$ be the function such that $t_i(x) = 1$ if $i = \argmin_j K_j(\mu X. K(X))(x)$ and $t_i(x) = 0$ otherwise.
	Here, $\argmin$ returns the smallest index when there are multiple indices that minimize the value so that we have $t_1 + \dots + t_n = \probone$.
	Then, we have for any $x \in D$,
	\[ (\mu X. K(X))(x) \quad=\quad K(\mu X. K(X))(x) \quad=\quad \sum_{i = 1}^n t_i \cdot K_i(\mu X. K(X))(x). \]
	That is, $\mu X. K(X)$ is a fixed point of $\sum_{i = 1}^n t_i \cdot K_i$, so we have $\mu X. K(X) \ge \mu X. \sum_{i = 1}^n t_i \cdot K_i(X)$.
\end{proof}

Now, we extend the reasoning principle in Theorem~\ref{thm:unbounded-reasoning-principle} to the equation system~\eqref{eq:min-equation}.
By Proposition~\ref{prop:min-affinise}, we can give a lower bound if there exists a common lower-bound certificate for any $t_1, \dots, t_n : D \to [0, 1]$ such that $t_1 + \dots + t_n = \probone$.
Concretely:
\begin{itemize}
	\item A prefixed point $u$ such that
	$\sum_{i = 1}^n t_i \cdot \interpret{F_i}(u) \le u$.
	\item A $u$-ranking supermartingale $r$ such that
	$\sum_{i = 1}^n t_i \cdot \interpret{\nonterm{F_i}}(r) + u \le r$.
	\item An invariant $\eta$ such that $\eta \le u$ and
	$\sum_{i = 1}^n t_i \cdot \interpret{F_i}(\eta) \ge \eta$.
\end{itemize}

We can simplify this condition.
Actually, it is sufficient to consider only the case where $t_i = \probone$ for some $i$, and the above conditions are equivalent to the following.
\begin{theorem}
	Consider the equation system $X =_{\mu} F$.
	If we have the following data, then $\eta$ is a lower bound of the least fixed point of the equation system.
	\begin{itemize}
		\item A prefixed point $u$ of $F^{\max}$:
		\[ \interpret{F^{\max}}(u) \le u \]
		\item A $u$-ranking supermartingale $r$ with respect to $F^{\max}$:
		\[ \interpret{\nonterm{F^{\max}}}(r) + u \le r \]
		\item An invariant $\eta$ of $F$:
		\[ \interpret{F}(\eta) \ge \eta \]
	\end{itemize}
	where $F^{\max}$ is obtained from $F$ by replacing each $\min$ with $\max$ (its interpretation $\interpret{F^{\max}}$ is defined as expected), and $\nonterm{F}$ is defined inductively as follows.
	\begin{align*}
		&\nonterm{(X(\widetilde{e}))} &&= X(\widetilde{e}) \\
		&\nonterm{t} &&= 0 \\
		&\nonterm{(F_1 + F_2)} &&= \nonterm{F_1} + \nonterm{F_2} \\
		&\nonterm{(t \cdot F)} &&= t \cdot \nonterm{F} \\
		&\nonterm{(\ifexpr{\varphi}{F_1}{F_2})} &&= \ifexpr{\varphi}{\nonterm{F_1}}{\nonterm{F_2}} \\
		&\nonterm{(\min \{ F_1, \dots, F_n \})} &&= \min \{ \nonterm{F_1}, \dots, \nonterm{F_n} \}
	\end{align*}
\end{theorem}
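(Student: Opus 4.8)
The plan is to reduce the statement to the affine case already settled by Theorem~\ref{thm:unbounded-reasoning-principle}, using the scheduler decomposition of Proposition~\ref{prop:min-affinise}. First I would use Lemma~\ref{lem:min-distributive} to put $F$ into the normal form $\min\{F_1,\dots,F_n\}$ with each $F_i$ free of $\min$, so that every $\interpret{F_i}\colon\mathbb{E}(D)\to\mathbb{E}(D)$ is affine. Because the distributivity in Lemma~\ref{lem:min-distributive} holds verbatim for $\max$, the same rewriting turns $F^{\max}$ into $\max\{F_1,\dots,F_n\}$, and since $\nonterm{({-})}$ is defined compositionally and commutes with these rewrites, I would record the semantic identities $\interpret{F^{\max}}=\max_i\interpret{F_i}$, $\interpret{\nonterm{F^{\max}}}=\max_i\interpret{\nonterm{F_i}}$, and $\interpret{F}=\min_i\interpret{F_i}$. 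Here I also use $\nonterm{\interpret{F_i}}=\interpret{\nonterm{F_i}}$, i.e.\ that the syntactic $\nonterm{({-})}$ computes the linear part, exactly as in Example~\ref{ex:biased-random-walk-ert-solved}.

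By Proposition~\ref{prop:min-affinise}, the least fixed point of $X=_\mu F$ equals $\inf_t \mu X.\,\sum_i t_i\cdot\interpret{F_i}(X)$, with the infimum ranging over $t_1,\dots,t_n\colon D\to[0,1]$ satisfying $\sum_i t_i=\probone$. Hence it suffices to show $\eta\le \mu X.\,K_t(X)$ for every such $t$, where $K_t\coloneqq\sum_i t_i\cdot\interpret{F_i}$ is affine (a pointwise convex combination of affine maps, affine by Lemma~\ref{lem:affineness-of-interpretation}). Taking the infimum over $t$ then yields $\eta\le\mu\interpret{E}$.

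For a fixed $t$ I would apply Theorem~\ref{thm:unbounded-reasoning-principle} to the affine system $X=_\mu K_t(X)$ with $E'=E$, so that condition $(a)$ is trivial, reusing the given witnesses $u,r,\eta$. The remaining conditions follow from the pointwise inequality $\sum_i t_i a_i\le\max_i a_i$ (valid since $\sum_i t_i=1$) for the two upper-bound constraints and its dual $\sum_i t_i a_i\ge\min_i a_i$ for the lower-bound one. Concretely, $(b)$ $K_t(u)=\sum_i t_i\,\interpret{F_i}(u)\le\max_i\interpret{F_i}(u)=\interpret{F^{\max}}(u)\le u$; $(c)$ since $\nonterm{K_t}=\sum_i t_i\,\interpret{\nonterm{F_i}}$ by linearity of $\nonterm{({-})}$, we obtain $\nonterm{K_t}(r)\le\interpret{\nonterm{F^{\max}}}(r)$, hence $\nonterm{K_t}(r)+u\le r$; and $(d)$ $K_t(\eta)=\sum_i t_i\,\interpret{F_i}(\eta)\ge\min_i\interpret{F_i}(\eta)=\interpret{F}(\eta)\ge\eta$, together with $\eta\le u$, which is part of the invariant data exactly as in Theorem~\ref{thm:unbounded-reasoning-principle}$(d)$. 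Theorem~\ref{thm:unbounded-reasoning-principle} then gives $\eta\le\mu X.\,K_t(X)$, closing the fixed-$t$ step.

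The main obstacle is not any single inequality but the bookkeeping that the purely syntactic operations $({-})^{\max}$ and $\nonterm{({-})}$ interact correctly with the normalization into $\min\{F_1,\dots,F_n\}$: one must verify that $\interpret{F^{\max}}$ and $\interpret{\nonterm{F^{\max}}}$, computed on the original $F$, genuinely equal the demonic worst-case functions $\max_i\interpret{F_i}$ and $\max_i\interpret{\nonterm{F_i}}$. Once this is established, the identity underlying the theorem's simplification — that the supremum over schedulers is attained at the vertices of the simplex, so that point masses $t_i=\probone$ already give the binding constraints — is precisely the inequality $\sum_i t_i a_i\le\max_i a_i$, and everything else is a direct instantiation of the affine reasoning principle.
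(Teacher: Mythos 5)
Your proposal is correct and takes essentially the same route as the paper's proof: normalize $F$ to $\min\{F_1,\dots,F_n\}$ via Lemma~\ref{lem:min-distributive}, reduce to the affine systems $\sum_i t_i\cdot\interpret{F_i}$ via Proposition~\ref{prop:min-affinise}, and verify that the given witnesses satisfy the hypotheses of Theorem~\ref{thm:unbounded-reasoning-principle} uniformly in $t$ using $\min_i a_i\le\sum_i t_i a_i\le\max_i a_i$. The only cosmetic difference is that the paper phrases the reduction to the simplex vertices as an equivalence of constraint sets, whereas you prove only the implication actually needed, and you are somewhat more explicit about the commutation of $({-})^{\max}$ and $\nonterm{({-})}$ with the normalization.
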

\begin{proof}
	We first rewrite $F$ into the form $\min \{ F_1, \dots, F_n \}$ where $F_1, \dots, F_n$ do not contain $\min$.
	By Proposition~\ref{prop:min-affinise} and Theorem~\ref{thm:unbounded-reasoning-principle}, we can give a lower bound if there exists a common lower-bound certificate for any $t_1, \dots, t_n : D \to [0, 1]$ such that $t_1 + \dots + t_n = \probone$.
	Concretely:
	\begin{itemize}
		\item A prefixed point $u$ such that
		$\sum_{i = 1}^n t_i \cdot \interpret{F_i}(u) \le u$.
		\item A $u$-ranking supermartingale $r$ such that
		$\sum_{i = 1}^n t_i \cdot \interpret{\nonterm{F_i}}(r) + u \le r$.
		\item An invariant $\eta$ such that $\eta \le u$ and
		$\sum_{i = 1}^n t_i \cdot \interpret{F_i}(\eta) \ge \eta$.
	\end{itemize}

	We can simplify this condition.
	Actually, it is sufficient to consider only the case where $t_i = \probone$ for some $i$, and the above conditions are equivalent to the following.
	\begin{itemize}
		\item A prefixed point $u$ such that
		$\max_i \interpret{F_i}(u) \le u$.
		\item A $u$-ranking supermartingale $r$ such that
		$\max_i \interpret{\nonterm{F_i}}(r) + u \le r$.
		\item An invariant $\eta$ such that $\eta \le u$ and
		$\min_i \interpret{F_i}(\eta) \ge \eta$.
	\end{itemize}

	Since $\max$ also distributes over other operators similarly to $\min$, considering $\max_i F_i$ is equivalent to replacing \( \min \) with \( \max \) in the original quantitative formula.
\end{proof}

Thus, we can generalize Theorem~\ref{thm:unbounded-reasoning-principle} as follows.
\begin{corollary}
	Let $E$ be a fixed-point equation system with \(\min\).
	We have \( \eta' \le \mu \interpret{E} \) if there exist:
	\begin{enumerate}
		\item[$(a)$] a fixed-point equation system \( E' \) such that \( \interpret{E'} \le \interpret{E} \),
		\item[$(b)$] a prefixed point \( u \) of\/ \( \interpret{E'^{\max}} \), i.e., \( \interpret{E'^{\max}}(u) \le u \),
		\item[$(c)$] a \( u \)-ranking supermartingale \( r : D \to [0, \infty) \) of\/ \( \interpret{\nonterm{E'^{\max}}} \), and
		\item[$(d)$] an invariant \( \eta' \in \mathbb{E}_{\le u}(D) \) of\/ \( \interpret{E'} \), i.e., \( \eta' \le \interpret{E'}(\eta') \) with \( \eta' \le u \).
	\end{enumerate}
	where \( E'^{\max} \) is the equation system obtained by replacing every occurrence of \(\min\) in \( E' \) with \(\max\).
	\qed
\end{corollary}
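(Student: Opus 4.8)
The plan is to reduce this statement to two facts already available: the under-approximation step underlying Theorem~\ref{thm:unbounded-reasoning-principle}, and the $\min$-handling theorem stated immediately above for a single equation. First I would dispose of condition $(a)$. Since the least-fixed-point operator is monotone — if $\interpret{E'} \le \interpret{E}$ then $\interpret{E'}^n(\probzero) \le \interpret{E}^n(\probzero)$ for every $n$ by induction, and Kleene's theorem gives $\mu\interpret{E'} = \sup_n \interpret{E'}^n(\probzero) \le \sup_n \interpret{E}^n(\probzero) = \mu\interpret{E}$ — it suffices to prove the sharper inequality $\eta' \le \mu\interpret{E'}$. This isolates the role of $E$ and lets me work entirely with the under-approximant $E'$.

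Next I would bundle the system $E'$ into a single equation. Using the isomorphism $\prod_j \mathbb{E}(D_j) \cong \mathbb{E}(D)$ with $D = \coprod_j D_j$ from Remark~\ref{rem:simplify-expectation}, the system $E'$ becomes a single equation $X =_\mu F$ whose interpretation $\interpret{F} = \interpret{E'} \colon \mathbb{E}(D) \to \mathbb{E}(D)$ may contain $\min$. The point is that, under this identification, the three hypotheses $(b)$, $(c)$, $(d)$ are exactly the hypotheses of the single-equation theorem above: $(b)$ is the prefixed-point condition for $F^{\max}$, $(c)$ is the $u$-ranking supermartingale condition for $\nonterm{F^{\max}}$, and $(d)$ is the invariant condition $\interpret{F}(\eta') \ge \eta'$ together with the side condition $\eta' \le u$ (which, incidentally, supplies the domination that the single-equation theorem tacitly needs when it appeals to Theorem~\ref{thm:unbounded-reasoning-principle}). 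I would therefore invoke that theorem to obtain $\eta' \le \mu\interpret{E'}$, and chaining with the first step conclude $\eta' \le \mu\interpret{E'} \le \mu\interpret{E}$.

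The step that needs genuine care — the main obstacle — is checking that the two system-level rewrites commute with the single-equation reduction, namely $\interpret{F^{\max}} = \interpret{E'^{\max}}$ and $\interpret{\nonterm{F^{\max}}} = \interpret{\nonterm{E'^{\max}}}$ under the isomorphism. Both are routine but not vacuous: the interpretation on $\mathbb{E}(D) = (D \to [0,\infty])$ is computed pointwise over $D = \coprod_j D_j$, and both the replacement of $\min$ by $\max$ and the inductive operator $\nonterm{(\cdot)}$ act pointwise on the codomain, so performing these operations before or after collapsing the components into $\mathbb{E}(D)$ yields the same function. Once this compatibility is recorded, no further work remains.

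As an alternative, and to make the dependence on $\min$ explicit, I would instead expand $\interpret{E'}$ directly rather than quote the single-equation theorem. By distributing $\min$ (Lemma~\ref{lem:min-distributive}) each equation's body becomes a $\min$ of $\min$-free affine formulas, so $\mu\interpret{E'} = \inf_K \mu K$ ranges over the affine ``scheduler'' systems $K$ of Proposition~\ref{prop:min-affinise}. For each such $K$ one checks that $(b)$ furnishes a prefixed point of $K$ (any convex combination is dominated pointwise by the $\max$, hence $K(u) \le \interpret{E'^{\max}}(u) \le u$), $(c)$ a $u$-ranking supermartingale (the same domination, using linearity of $\nonterm{(\cdot)}$), and $(d)$ an invariant (since $\interpret{E'} = \min_i K_i$ lies below every combination, so $\eta' \le \interpret{E'}(\eta') \le K(\eta')$); applying Theorem~\ref{thm:unbounded-reasoning-principle} to each $K$ and taking the infimum recovers $\eta' \le \mu\interpret{E'}$. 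This route re-proves the single-equation theorem inline but avoids the bookkeeping of the isomorphism.
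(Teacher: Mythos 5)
Your proposal is correct and follows essentially the route the paper intends: the corollary is stated without proof precisely because it is the single-equation $\min$-theorem (whose proof via Lemma~\ref{lem:min-distributive}, Proposition~\ref{prop:min-affinise}, and Theorem~\ref{thm:unbounded-reasoning-principle} is exactly your ``alternative'' inline argument) combined with the under-approximation step $\mu\interpret{E'} \le \mu\interpret{E}$ and the multi-equation bookkeeping of Remark~\ref{rem:simplify-expectation}. Your observation that the side condition $\eta' \le u$ from $(d)$ is tacitly needed when the single-equation theorem appeals to Theorem~\ref{thm:unbounded-reasoning-principle} is accurate and worth recording.
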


\subsection{Max}
For the $\max$ operator, we cannot give a similar equation to Proposition~\ref{prop:min-affinise} because taking the dual of Proposition~\ref{prop:min-affinise} gives an equation for the \emph{greatest} fixed point, not the least fixed point.
Below, we sketch a naive approach to give a lower bound for the equation system with the $\max$ operator.

\[ X \quad=_{\mu}\quad \max \{ F_1, F_2 \} \]
We assume that $F_1$ and $F_2$ do not contain the $\max$ operator (nor the $\min$ operator).

\begin{proposition}\label{prop:max-affinise}
	Let $K_1, K_2 : \mathbb{E}(D) \to \mathbb{E}(D)$ be affine functions and $K = \max \{ K_1, K_2 \}$.
	\[ \sup_{t : D \to [0, 1]} \mu X. (t \cdot K_1(X) + (1 - t) \cdot K_2(X)) \quad\le\quad \mu X. \max \{ K_1(X), K_2(X) \} \]
\end{proposition}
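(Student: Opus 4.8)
The plan is to establish this as the ``easy'' inequality, in direct analogy with the $\le$ direction of Proposition~\ref{prop:min-affinise}: for each fixed $t \colon D \to [0,1]$ I would show that the least fixed point of the convex-combination operator is dominated by $\mu X. \max\{K_1(X), K_2(X)\}$, and then take the supremum over $t$.

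First I would fix $t \colon D \to [0,1]$ and abbreviate $K_t(X) := t \cdot K_1(X) + (1 - t) \cdot K_2(X)$ and $K := \max\{K_1, K_2\}$. The key pointwise observation is that a convex combination never exceeds the maximum: for every $X \in \mathbb{E}(D)$ and $d \in D$,
\begin{equation*}
	t(d)\, K_1(X)(d) + (1 - t(d))\, K_2(X)(d) \;\le\; \max\{ K_1(X)(d),\, K_2(X)(d) \} \;=\; K(X)(d),
\end{equation*}
so $K_t(X) \le K(X)$ holds for every $X$.

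Next I would instantiate this at $X = \mu K$. Both $K_t$ and $K$ are monotone — convex combinations and maxima of the monotone functions $K_1, K_2$ (monotone as interpretations, by Lemma~\ref{lem:continuity-of-interpretation}) are again monotone — so their least fixed points on the complete lattice $\mathbb{E}(D)$ exist. Since $\mu K$ is a fixed point of $K$, the pointwise inequality gives $K_t(\mu K) \le K(\mu K) = \mu K$; that is, $\mu K$ is a prefixed point of $K_t$. The Knaster--Tarski theorem then yields $\mu K_t \le \mu K$. As this bound holds uniformly for every $t$, taking the pointwise supremum over all $t \colon D \to [0,1]$ gives the claim $\sup_t \mu K_t \le \mu K$.

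I do not expect a genuine obstacle here, precisely because only the $\le$ direction is asserted: the entire content is the convex-combination-versus-maximum inequality together with a single application of Knaster--Tarski. The converse inequality — which would upgrade this to an equality as in Proposition~\ref{prop:min-affinise} — is exactly what fails for $\max$, and it is worth flagging why. Dualising the $\mathrm{argmin}$ construction used for $\min$ produces a selector $t$ witnessing that $\mu K$ is a fixed point of the \emph{greatest}-fixed-point version of $K_t$, not the least; hence no analogous $\ge$ bound is available, and the supremum on the left can sit strictly below $\mu K$.
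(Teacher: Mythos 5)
Your proof is correct and matches the argument the paper intends: the paper leaves Proposition~\ref{prop:max-affinise} without an explicit proof, but the $\le$ direction of Proposition~\ref{prop:min-affinise} is established by exactly the same pattern (convex combination bounded by the extremum, hence monotonicity of least fixed points), and your write-up carries that over faithfully, including the correct explanation of why the converse inequality fails for $\max$.
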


Note that $\ge$ does not follow from the dual of Proposition~\ref{prop:min-affinise} because the dual is the following equation for the \emph{greatest} fixed point.
\[ \sup_{t : D \to [0, 1]} \nu X. (t \cdot K_1(X) + (1 - t) \cdot K_2(X)) \quad=\quad \nu X. \max \{ K_1(X), K_2(X) \} \]

By Proposition~\ref{prop:max-affinise}, we can give a lower bound if there exists a lower-bound certificate for \emph{some} $t : D \to [0, 1]$.
Concretely, a certificate consists of the following data.
\begin{itemize}
	\item A function $t : D \to [0, 1]$.
	\item A prefixed point $u$:
	\[ t \cdot \interpret{F_1}(u) + (1 - t) \cdot \interpret{F_2}(u) \le u \]
	\item A $u$-ranking supermartingale $r$:
	\[ t \cdot \interpret{F_1}(r) + (1 - t) \cdot \interpret{F_2}(r) + u \le r \]
	\item An invariant postfixed point $\eta$ such that $\eta \le u$:
	\[ t \cdot \interpret{F_1}(\eta) + (1 - t) \cdot \interpret{F_2}(\eta) \ge l \]
\end{itemize}

If we aim to implement a lower-bound verification algorithm for the $\max$ operator, we need to find good $t$.
However, we leave the problem of finding $t$ as an open problem.

\section{Supporting Integration}\label{sec:integration}

We consider allowing the integral operator in quantitative formulas.
In this case, we assume that a data type $\mathcal{D}$ is a set of measurable spaces.
We also restrict the set $\mathbb{E}(D)$ of expectations to the set of measurable functions $f : D \to [0, 1]$.
The set $\mathbb{E}(D)$ is an $\omega$cpo and an $\omega$ccpo.

\begin{definition}
	Let $(D_1, \Sigma_1)$ and $(D_2, \Sigma_2)$ be measurable spaces.
	A \emph{transition kernel} from $(D_1, \Sigma_1)$ to $(D_2, \Sigma_2)$ is a function $\kappa : D_1 \times \Sigma_2 \to [0, \infty]$ such that the following conditions hold.
	\begin{itemize}
		\item For any $x \in D_1$, $\kappa(x, {-})$ is a measurable on $(D_2, \Sigma_2)$.
		\item For any $A \in \Sigma_2$, the function $\kappa({-}, A) : D_1 \to [0, \infty]$ is a measurable function.
	\end{itemize}
\end{definition}

\begin{definition}
	We define the \emph{quantitative formulas} over quantitative predicate variables $\mathcal{X} = \{ X_1, \dots, X_n \}$ and data type $D$ as follows.
	\[ F \quad\coloneqq\quad \sum_{i = 1}^n \int X_i \,\mathrm{d} \kappa_i + t \]
	where $\kappa_i$ is a transition kernel from $(D, \Sigma)$ to $(D_i, \Sigma_i)$ and $t$ is a non-negative real term.
\end{definition}

Note that the constructors of quantitative formulas in Definition~\ref{def:fixed-point-equation} are subsumed in this definition.
The interpretation of a quantitative formula $F$ is given as follows.
\[ \interpret{\sum_{i = 1}^n \int X_i \,\mathrm{d} \kappa_i + t}(\eta)(v) \quad\coloneqq\quad \sum_{i = 1}^n \int \eta_i \,\mathrm{d} \kappa_i(v, {-}) + \interpret{t} \]

Now, we consider equation systems.
Similarly to Remark~\ref{rem:simplify-expectation}, we assume that equation systems consist of a single equation of the form
\[ E \quad=\quad \{\quad X(x) =_{\mu} \int X \,\mathrm{d} \kappa + t \quad\} \]
Suppose $u : D \to [0, \infty)$ is a prefixed point of $K = \interpret{E}$.
Then, $K_{\le u} : \mathbb{E}_{\le u}(D) \to \mathbb{E}_{\le u}(D)$ is Scott continuous by the monotone convergence theorem and Scott cocontinuous by the dominated convergence theorem (dominated by $u$).
It is obvious that $K_{\le u}$ is affine.
Therefore, Corollary~\ref{cor:unique-fixed-point-by-ranking-supermartingale} is also valid in this case.

\section{Preliminary Experiments for Negative Costs}\label{sec:experiment-negative-cost}

\begin{table}[bt]
	\caption{Experimental results for cost analysis with negative costs.}
	\label{tab:experiment-negative-cost}
	\begin{tabular}{c|cccc|cccc}
		& \multicolumn{4}{c}{Lower bound} & \multicolumn{4}{c}{Upper bound} \\
		Benchmark & Bound & Config & Result & Time (sec) & Bound & Config & Result & Time (sec) \\
		\hline
		\texttt{wang19\_fig3\_neg} & 247.5 & A & valid & 0.339 & 250.0 & A & valid & 0.262 \\
		\texttt{wang19\_fig3\_pos} & 100.0 & A & valid & 0.324 & 101.0 & A & valid & 0.244
	\end{tabular}
\end{table}

Although our results in this paper are based on the $\omega$cpo and $\omega$ccpo structures of $[0, \infty]$, our results can handle cost analysis with negative costs \cite{WangPLDI2019} by decomposing costs into positive and negative parts.
\[ c = c^{+} - c^{-} \qquad\text{where}\qquad c^{+} \coloneqq \max \{ c, 0 \}, \quad c^{-} \coloneqq \max \{ -c, 0 \}. \]

For example, consider the following coin flip program where $\tickstmt(\alpha)$ increases the cost by $\alpha \in \mathbb{R}$.
\[ \while{\mathtt{random\_bool}(1/2)}{\pbranch{\tickstmt(-1)}{1/3}{\tickstmt(1)}} \]
This program can be translated into the following pair of equations.
\begin{align*}
	X^{+}() \quad&=_{\mu}\quad \frac{1}{2} \left( \frac{1}{3} (0 + X^{+}()) + \frac{2}{3} (1 + X^{+}()) \right) + \frac{1}{2} \cdot 0 \\
	X^{-}() \quad&=_{\mu}\quad \frac{1}{2} \left( \frac{1}{3} (1 + X^{-}()) + \frac{2}{3} (0 + X^{-}()) \right) + \frac{1}{2} \cdot 0
\end{align*}
The solution of this system is $(X^{+}(), X^{-}()) = (2 / 3, 1 / 3)$, and thus the expected cost is $X^{+}() - X^{-}() = 1 / 3$.

We conducted preliminary experiments on cost analysis with negative costs using a benchmark from \cite{WangPLDI2019}, as shown in Table~\ref{tab:experiment-negative-cost}.

\ifdraft
\section{Uniqueness of Fixed-Points: \( 1 \)-Bounded Case}\label{sec:unique-fixed-points-1bounded}
As suggested in Introduction, our approach provides a lower bound \( \eta \) for \( \mu \interpret{E} \) based on two components:
a ranking argument establishing \( \mu \interpret{E} = \nu \interpret{E} \), and an invariant \( \eta \) of \( E \) (i.e., \( \eta \le \interpret{E}(\eta) \)).
This section formally describes the ranking argument and proves its correctness,
focusing on \( 1 \)-bounded \( E \).
A way to deal with general \( E \) is discussed in Section~\ref{sec:unique-fixed-points-unbounded}.

Section~\ref{sec:1ufp-informally} explain the basic idea of how to guarantee the uniqueness of fixed-points, and Section~\ref{sec:1ufp-formally} formalizes the argument and proves its correctness.

\subsection{Uniqueness of Fixed Points, Informally}\label{sec:1ufp-informally}

While our goal is to establish the coincidence of the least and greatest fixed points, let us instead consider a case (Example~\ref{ex:biased-random-walk}) where they do not coincide and analyze their difference.
As we have mentioned, the least fixed point of~\eqref{eq:biased-random-walk} gives the weakest preexpectation $\mathrm{wp}[c_{\mathrm{rw}}](f)$, which is the expected value of \( f \) after the execution.
The greatest fixed-point of \eqref{eq:biased-random-walk} is also a known notion: it is the \emph{weakest liberal preexpectation}, which is the expected value of $f$ plus the probability of non-termination.
So the difference between the greatest fixed point and the least fixed point in this case is the probability of non-termination.
Therefore, if the program $c_{\mathrm{rw}}$ were almost surely terminating, then the non-termination probability would be $0$, so
the fixed-point equation~\eqref{eq:biased-random-walk} should have a unique solution.

To assess the range of applicability of this approach, let us take a closer look at how the non-termination probability is related to the difference of the greatest fixed point and the least fixed point.
The least (greatest) solution is given as the least (greatest) fixed point of the function $K : \mathbb{E}_{\le 1}(\mathbb{Z}) \to \mathbb{E}_{\le 1}(\mathbb{Z})$ defined from the right-hand side of the equation~\eqref{eq:biased-random-walk}.
\[ K(X)(x) \quad\coloneqq\quad \ifexpr{x > 0}{\frac{1}{3} X(x - 1) + \frac{2}{3} X(x + 1)}{f(x)} \]
Since $K$ is Scott continuous and cocontinuous, \( \nu K = \inf_n K^n(\probone) \) and \( \mu K = \sup_n K^n(\probzero) \) (where $\probzero, \probone : \mathbb{Z} \to [0, 1]$ are the constant functions), so
\[ \nu K - \mu K \quad=\quad \inf_n (K^n(\probone) - K^n(\probzero)) \]
Although slightly abrupt, we now introduce a function \( \Diff K \colon \mathbb{E}_{\le 1}(\mathbb{Z}) \to \mathbb{E}_{\le 1}(\mathbb{Z}) \) defined by
\begin{equation*}
	(\nonterm{K})(X)(x) \quad\coloneqq\quad \ifexpr{x > 0}{\frac{1}{3} X(x - 1) + \frac{2}{3} X(x + 1)}{\probzero}
	\label{eq:biased-random-walk-conical}
\end{equation*}
where the right-hand-side is obtained by replacing \( f(x) \) in the definition \( K \) with \( \probzero \).
Intuitively, \( (\nonterm{K})^n(\probone) \) represents the probability that the program has not terminated after \( n \) steps.
This function \( \nonterm{K} \) has a conceptually simpler definition
\begin{equation}
	(\Diff K)(u) \quad=\quad K(u) - K(\probzero)
\end{equation}
and a calculation using this characterization of \( \Diff K \) and the affineness of \( K \) (Lemma~\ref{lem:affineness-of-interpretation}) shows
\begin{equation}
	K^n(\probone) - K^n(\probzero) \quad=\quad (\nonterm{K})^n(\probone).
	\label{eq:diff-gfp-lfp-1bounded}
\end{equation}
Therefore \( \mu K = \nu K \) if and only if the greatest fixed-point of \( \nonterm{K} \) is \( \probzero \).

Interestingly, the function $\nonterm{K}$ also characterize \emph{ranking supermartingales}~\cite{ChakarovCAV2013} for the program $c_{\mathrm{rw}}$, which is a function from program states to non-negative real numbers that decreases on average by a certain amount after each step of the program.
That is, a function $r : \mathbb{Z} \to [0, \infty)$ is a ranking supermartingale for the program $c_{\mathrm{rw}}$ if and only if $\nonterm{K}(r) + \probone \le r$.
As a ranking supermartingale is a witness of almost-sure termination, this observation justifies our criterion, namely, the greatest and least fixed-points coincide when the underlying program is almost-surely terminating.

\subsection{Uniqueness of Fixed Points, Formally}\label{sec:1ufp-formally}
This subsection formalizes the argument in Section~\ref{sec:1ufp-informally}.
The operator \( \nonterm{K} \) has a simple semantic characterization: \( \nonterm{K}(u) := K(u) - K(\probzero) \).
The following lemma relates this semantic characterization with a syntactic transformation.
\begin{lemma}\label{lem:equation-system-nonterminating-part}
	Given a quantitative formula $F$, we define $\nonterm{F}$ by the following syntactic translation: 
	\begin{gather}
		\nonterm{(X_i(\widetilde{e}))} \coloneqq X_i(\widetilde{e}) \qquad
		\nonterm{(F_1 + F_2)} \coloneqq \nonterm{(F_1)} + \nonterm{(F_2)} \qquad
		\nonterm{(t \cdot F)} \coloneqq t \cdot \nonterm{F} \\
		\nonterm{(t)} \coloneqq 0 \qquad
		\nonterm{(\ifexpr{\varphi}{F_1}{F_2})} \coloneqq \ifexpr{\varphi}{\nonterm{(F_1)}}{\nonterm{(F_2)}}
	\end{gather}
	Note the cases for \( t \cdot F \) and \( t \).
	Let $\nonterm{E}$ be the equation system obtained by applying the translation above to each equation in $E$.
	Then, we have $\nonterm{\interpret{E}} = \interpret{\nonterm{E}}$.
	\qed
\end{lemma}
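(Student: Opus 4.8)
The plan is to reduce the statement to a per-formula identity and then prove it by structural induction, working pointwise. First I would observe that, since $\interpret{E}(\eta) = (\interpret{F_1}(\eta),\dots,\interpret{F_n}(\eta))$ and subtraction in $\prod_{j}\mathbb{E}(D_j) \cong \mathbb{E}(D)$ is computed componentwise and pointwise, the operator $\nonterm{\interpret{E}}$ decomposes as $\nonterm{\interpret{E}}(\eta)_i = \interpret{F_i}(\eta) - \interpret{F_i}(\probzero) = \nonterm{\interpret{F_i}}(\eta)$, while $\interpret{\nonterm{E}}(\eta)_i = \interpret{\nonterm{F_i}}(\eta)$. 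Hence it suffices to prove the single-formula identity $\nonterm{\interpret{F}} = \interpret{\nonterm{F}}$, where by definition $\nonterm{\interpret{F}}(\eta) = \interpret{F}(\eta) - \interpret{F}(\probzero)$, for every quantitative formula $F$.

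Before the induction I would record a preliminary observation: $\interpret{F}(\probzero)(v)$ is finite for every $v$. This is itself a trivial induction---the only constants occurring in $F$ are interpretations of non-negative real terms $t$, which take values in $[0,\infty)$, while every predicate-variable occurrence $X_i(\widetilde{e})$ evaluates to $0$ under $\probzero$; finite sums, scalar multiples, and branch selections of finite values remain finite. This finiteness is what makes the extended-real subtraction well behaved, since we then only ever subtract finite quantities and never encounter the degenerate identity $\infty - \infty = 0$.

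Next I would carry out the structural induction on $F$, pointwise at each $v \in D'$ (legitimate because the relevant operations and the order on $\mathbb{E}(D')$ are all defined pointwise, exactly as in the proof of Lemma~\ref{lem:continuity-of-interpretation}). The base cases are immediate: for $X_i(\widetilde{e})$ we have $\interpret{X_i(\widetilde{e})}(\probzero)(v) = 0$, so the difference is $\eta_i(\interpret{\widetilde{e}}(v))$, matching $\interpret{\nonterm{(X_i(\widetilde{e}))}} = \interpret{X_i(\widetilde{e})}$; for a term $t$, the interpretation is constant in $\eta$, so the difference is $0 = \interpret{\nonterm{t}}$. For $F_1 + F_2$ I would use that $\interpret{F_1}(\probzero)(v)$ and $\interpret{F_2}(\probzero)(v)$ are finite to distribute the subtraction, $(x_1 + x_2) - (a_1 + a_2) = (x_1 - a_1) + (x_2 - a_2)$, and then apply the induction hypothesis. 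For $t \cdot F$ I would invoke $r(x - a) = rx - ra$ (valid for $r \in [0,\infty)$ and finite $a$, again by the preliminary observation), and for $\ifexpr{\varphi}{F_1}{F_2}$ the guard $\interpret{\varphi}(v)$ selects the same branch on both sides, reducing directly to the induction hypothesis.

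I expect the only genuinely delicate point to be the additive and scalar-multiplication cases, where naive manipulation of extended-real subtraction could fail; the preliminary finiteness observation is precisely what neutralizes this obstacle, leaving routine algebraic identities in $[0,\infty]$. As a sanity check---and an alternative route for the inductive cases---one may note that $\nonterm{\interpret{F}}$ is linear by Lemma~\ref{lem:affine-k-induces-linear-knt} (since $\interpret{F}$ is affine by Lemma~\ref{lem:affineness-of-interpretation}), which confirms that the syntactic translation $\nonterm{(-)}$, designed to discard constants and preserve the linear constructors, computes exactly this linear part.
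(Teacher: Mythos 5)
Your proposal is correct. The paper itself gives no proof of this lemma---it is stated with a terminal \(\qed\) as a routine syntactic/semantic correspondence---so there is nothing to diverge from; your structural induction, reduced to the pointwise single-formula identity \( \nonterm{\interpret{F}} = \interpret{\nonterm{F}} \), is exactly the argument the authors are implicitly relying on. The one genuinely valuable thing you add is the preliminary observation that \( \interpret{F}(\probzero)(v) < \infty \) for all \( v \): under the paper's convention (Notation~\ref{notation:extended-real}, where \( \infty - \infty = 0 \)), the distributive identities \( (x_1+x_2)-(a_1+a_2) = (x_1-a_1)+(x_2-a_2) \) and \( r x - r a = r(x-a) \) used in the \( F_1 + F_2 \) and \( t \cdot F \) cases do require the subtrahends to be finite, and your finiteness lemma (together with the implicit monotonicity guaranteeing \( \interpret{F}(\eta) \ge \interpret{F}(\probzero) \), which makes the subtraction well defined at all) is precisely what closes that gap.
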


We noted in Lemma~\ref{lem:affineness-of-interpretation} that the interpretation \( \interpret{E} \) of an equation system is affine.
Then its difference, \( \nonterm{\interpret{E}} \), has even better properties.

\begin{theorem}\label{thm:unique-fixed-point-1-bounded}
	Let \( K : \mathbb{E}_{\le 1}(D) \to \mathbb{E}_{\le 1}(D) \) be a function that is Scott continuous, Scott cocontinuous and affine.
	If there exists a ranking supermartingale \( r : D \to [0,\infty) \) of\/ \( \nonterm{K} \), then \( \inf_{n} \nonterm{K}^n(\probone) = \probzero \) and moreover, \( \mu K = \nu K \).
\end{theorem}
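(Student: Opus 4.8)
The plan is to reduce the statement to the affine decomposition of $K$ together with the already-established machinery for $\probone$-ranking supermartingales. The crucial first observation is that, taking $u = \probone$, a ranking supermartingale of $\nonterm{K}$ as defined here — a function $r : D \to [0,\infty)$ with $\nonterm{K}(r) + \probone \le r$ — is exactly a $\probone$-ranking supermartingale. Therefore the first claim $\inf_n \nonterm{K}^n(\probone) = \probzero$ is nothing but the instance $u = \probone$ of Theorem~\ref{thm:unique-fixed-point-by-ranking-supermartingale}, and no separate argument is required.

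For the equality $\mu K = \nu K$, I would first note that $\probone$ is a prefixed point of $K$: since $K$ maps $\mathbb{E}_{\le 1}(D)$ into itself we have $K(\probone) \le \probone$ for free, and $\probone$ is the top of $\mathbb{E}_{\le 1}(D)$. Scott cocontinuity then yields $\nu K = \inf_n K^n(\probone)$ and Scott continuity yields $\mu K = \sup_n K^n(\probzero)$. The engine of the proof is the decomposition from Proposition~\ref{prop:diff-gfp-lfp-for-affine-k} with $u = \probone$ (equivalently equation~\eqref{eq:diff-gfp-lfp-1bounded}), that is,
\begin{equation*}
	K^n(\probone) \;=\; K^n(\probzero) + \nonterm{K}^n(\probone) \qquad \text{for every } n,
\end{equation*}
which itself follows by a routine induction using the linearity of $\nonterm{K}$ (Lemma~\ref{lem:affine-k-induces-linear-knt}).

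Given this, the conclusion follows from a short pointwise estimate. The inequality $\mu K \le \nu K$ holds automatically. For the converse, I would fix $d \in D$ and, for each $N$, bound
\begin{equation*}
	\nu K(d) \;\le\; K^N(\probone)(d) \;=\; K^N(\probzero)(d) + \nonterm{K}^N(\probone)(d) \;\le\; \mu K(d) + \nonterm{K}^N(\probone)(d),
\end{equation*}
and then take the infimum over $N$, invoking the first claim $\inf_N \nonterm{K}^N(\probone)(d) = 0$ to get $\nu K(d) \le \mu K(d)$. Since every quantity lies in $[0,1]$ and is thus finite, the subtraction implicit in the decomposition and this limit manipulation cause no difficulty.

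The step I expect to be most delicate is the bookkeeping of the two opposing monotone sequences — $K^n(\probzero)$ ascending to $\mu K$ while $\nonterm{K}^n(\probone)$ descends to $\probzero$ — where one must resist the tempting but unjustified claim that the infimum of the sum equals the sum of the limits. The estimate above sidesteps this pitfall by using only $\mu K \le \nu K$ and a single index $N$ at a time, so once finiteness is observed the apparent obstacle dissolves.
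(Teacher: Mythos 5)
Your proof is correct and follows essentially the same route as the paper's: both establish the first claim by the telescoping supermartingale bound (you by citing Theorem~\ref{thm:unique-fixed-point-by-ranking-supermartingale} with \( u = \probone \), the paper by inlining the same summability argument) and both derive \( \mu K = \nu K \) from the decomposition \( K^n(\probone) = K^n(\probzero) + (\nonterm{K})^n(\probone) \) of Proposition~\ref{prop:diff-gfp-lfp-for-affine-k} together with the Kleene characterisations of \( \mu K \) and \( \nu K \). Your single-index pointwise estimate at the end is in fact slightly more careful than the paper's direct interchange of limit and sum, though both are justified here since all quantities lie in \( [0,1] \).
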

\begin{proof}
	Assume \( r \ge \Diff K(r) + \probone \).
	By the linearity of \( \nonterm{K} \) (Lemma~\ref{lem:affine-k-induces-linear-knt}),
	\begin{equation*}
		r
		\:\ge\:
		\nonterm{K}(r) + \probone
		\:\ge\:
		\nonterm{K}(\nonterm{K}(r) + \probone) + \probone
		\:=\:
		\nonterm{K}(\nonterm{K}(r)) + \nonterm{K}(\probone) + \probone
		\:\ge\cdots,
	\end{equation*}
	and hence \( r \ge (\nonterm{K})^{n+1}(r) + \sum_{k=0}^{n}(\nonterm{K})^k(\probone) \) for every \( n \).
	In particular, \( r(d) \in [0,\infty) \) is an upper bound of \( \sum_{k = 0}^n (\nonterm{K})^k(\probone)(d) \) for every \( n \) and \( d \in D \).
	Hence, for every \( d \in D \), the infinite sum \( \sum_{k=0}^{\infty} (\nonterm{K})^k(\probone)(d) \) (absolutely) converges to some \( v \le r(d) < \infty \).
	So the sequence \( ((\nonterm{K})^n(\probone)(d))_{n = 1}^{\infty} \) is summable, which implies \( \lim_{n \to \infty} (\nonterm{K})^n(\probone)(d) = 0 \).

	By Proposition~\ref{prop:diff-gfp-lfp-for-affine-k}, \( K^n(\probone) = (\nonterm{K})^n(\probone) + K^n(\probzero) \).
	By taking the limit of \( n \to \infty \),
	\begin{equation*}
		\lim_{n \to \infty} K^n(\probone) = \lim_{n \to \infty} (\nonterm{K})^n(\probone) + \lim_{n \to \infty} K^n(\probzero).
	\end{equation*}
	Since \( K \) is Scott continuous and Scott cocontinuous, we have \( \mu K = \lim_n K^n(\probzero) \) and \( \nu K = \lim_m K^m(\probone) \).
	Hence \( \nu K = 0 + \mu K \).
\end{proof}

\fi
\fi

\end{document}